\global\long\def\EE{\mathbb{E}}
\global\long\def\PP{\mathbb{P}}
\global\long\def\FF{\mathbb{F}}
\global\long\def\11{\mathbbm{1}}
\global\long\def\+{\oplus}
\global\long\def\P{\mathsf{P}}
\DeclareMathOperator*{\medcap}{\mathbin{\scalebox{1}{\ensuremath{\bigcap}}}}
\DeclareMathOperator*{\medcup}{\mathbin{\scalebox{1}{\ensuremath{\bigcup}}}}
\def\deq{\mathrel{\ensurestackMath{\stackon[1pt]{=}{\scriptstyle\Delta}}}}
\global\long\def\tensor{\otimes}
 \def\tr{\Tr}
 \def\id{\text{id}}
\global\long\def\bigtensor{\bigotimes}
\def\TDeltaN{\mathcal{T}_{\delta}^{(n)}}
\def\PiA{\Pi_{\rho_A}}
\def\PiB{\Pi_{\rho_B}}
\def\PiuA{\Pi_{u^n}^A}
\def\PivB{\Pi_{v^n}^B}
\def\PiwA{\Pi_{w^n}}
\def\lambdauA{\lambda^A_{u^n}}
\def\lambdavB{\lambda^B_{v^n}}
\def\lambdawA{\lambda_{w^n}}
\def\lambdaAB{\lambda^{AB}_{u^n,v^n}}
\def\rhohatuA{\hat{\rho}^A_{u^n}}
\def\rhohatvB{\hat{\rho}^B_{v^n}}
\def\rhohatwA{\hat{\rho}_{w^n}}
\def\rhohatuAvB{\hat{\rho}^{AB}_{u^n,v^n}}
\def\TDeltan{\mathcal{T}_{\delta}^{(n)}}
\def\IndiU1{\mathbbm{1}_{\left\{U^{n,(\mu_1)}(l)=u^n\right\}} }
\def\IndiV1{\mathbbm{1}_{\left\{V^{n,(\mu_2)}(k)=v^n\right\}} }
\def\TDeltaN{\mathcal{T}_{\delta}^{(n)}}
\def\CalR{\mathcal{R}}
\def\sigTilde{\tilde{\sigma}_x}
\newcommand{\eleInSet}[2]{#1\in \mathcal{#2}}
\newcommand{\zeroWithDelta}[1]{#1(\delta) \searrow 0 \text{ as } \delta \searrow 0}
\def\UCodeword{U^{n,(\mu_1)}(a_1,i)}
\def\VCodeword{V^{n,(\mu_2)}(a_2,j)}
\def\UBarCodeword{U^{n,(\bar\mu_1)}(a_1,i)}
\def\VBarCodeword{V^{n,(\bar\mu_2)}(a_2,j)}
\def\WCodeword{W^{n,(\mu)}(a,i)}
\def\WCodewordm{W^{n,(\mu)}(a,m)}
\def\CutOffA{\Pi_A^{\mu_1}}
\def\CutOffB{\Pi_B^{\mu_2}}
\def\CutOffBarA{\Pi_A^{\bar\mu_1}}
\def\CutOffBarB{\Pi_B^{\bar\mu_2}}
\def\CutOff{\Pi^{\mu}}
\def\gammaCoeff{\gamma_{u^n}^{(\mu_1)}}
\def\gammaBarCoeff{\gamma_{u^n}^{(\bar{\mu}_1)}}
\def\gammaWCoeff{\gamma_{w^n}^{(\mu)}}
\def\zetaCoeff{\zeta_{v^n}^{(\mu_2)}}
\def\zetaBarCoeff{\zeta_{v^n}^{(\bar{\mu}_2)}}
\def\rhotilduA{\tilde{\rho}_{u^n}^A}
\def\rhotildvB{\tilde{\rho}_{v^n}^B}
\def\rhotildwA{\tilde{\rho}_{w^n}}
\def\PiTauWn{\Pi_{w^n}^{\theta}}
\def\PiRho{\Pi_{\rho}}
\def\LambdaTildew{\tilde{\lambda}_{w^n}}
\newtheorem{theorem}{Theorem}
\newtheorem{lem}{Lemma}
\newtheorem{proposition}{Proposition}
\newtheorem{definition}{Definition}
\newtheorem{example}{Example}
\newtheorem{remark}{Remark}
\begin{document}

\preprint{APS/123-QED}

\title{\huge Distributed Quantum Faithful Simulation and Function Computation Using Algebraic Structured Measurements}
\thanks{This work was presented in part at IEEE International Symposium on Information Theory (ISIT) 2021.}

\author{Touheed Anwar Atif and S. Sandeep Pradhan                                             }
\affiliation{%
 University of Michigan,  Ann Arbor\\
 touheed@umich.edu, pradhanv@umich.edu
}%





\begin{abstract}
  In this work, we consider the task of faithfully simulating a quantum measurement, acting on a joint bipartite quantum state, in a distributed manner. In the distributed setup,  the constituent sub-systems of 
  the joint quantum state are measured by two agents, Alice and Bob.
  A third agent, Charlie receives
  the measurement outcomes sent 
  by Alice and Bob.  Charlie  uses local and pairwise shared randomness to compute a bivariate function of the measurement outcomes. 
  The objective of three agents is to faithfully simulate the given distributed quantum measurement acting on the given quantum state
  while minimizing the communication and shared  randomness rates.   We  demonstrate a new achievable quantum information-theoretic rate-region
  that exploits the bivariate function using random structured POVMs based on asymptotically good algebraic codes. The algebraic structure of these codes is matched to that of the bivariate function that models the action of Charlie.   The conventional approach for this class of problems has been to reconstruct individual measurement outcomes corresponding to Alice and Bob, at Charlie, and then compute the bivariate function. This is achieved using mutually independent approximating POVMs based on  random unstructured codes.
  In the present approach, using algebraic structured POVMs,
  the computation is performed on the fly, thus obviating the need to reconstruct individual measurement outcomes at Charlie.
Using this,  we show that a strictly larger rate region can be achieved. The performance limit is characterized using single-letter quantum mutual information quantities. 
  We provide examples to illustrate the information-theoretic gains attained by endowing POVMs with algebraic structure. One of the challenges in analyzing these structured POVMs is that they exhibit only pairwise independence and induce only uniform single-letter distributions. To address this, we use nesting of algebraic codes and develop 
 a covering lemma applicable to pairwise-independent POVM ensembles.
 Combining these techniques, we provide a multi-party distributed faithful simulation and function computation protocol.  
 
\end{abstract}

\maketitle


\section{Introduction}\label{sec:introduction}
Measurement compression is one of the foremost and fundamental  quantum information processing techniques which form the basis of many quantum protocols \cite{devetak2008resource}.
One of the seminal works in this regard was by Winter \cite{winter}, where he performed a novel information theoretic analysis to compress measurements in an asymptotic sense. 
The measurement compression problem  formulated in \cite{winter} is as follows.
Consider an agent (Alice) who performs a measurement $M$ on a quantum state $\rho$, and sends a set of classical bits to another agent  (Bob). Bob intends to \textit{faithfully} recover the outcomes of Alice’s measurements without having access to $\rho$, while preserving the correlation with the post-measured state of Alice's reference. 
The major contribution of this work (as elaborated in \cite{wilde_e}) was in specifying an optimal rate region in terms of classical communication and common randomness needed to \textit{faithfully simulate} the action of repeated independent measurements performed on many independent copies of the given quantum state.

  Wilde et al. \cite{wilde_e} extended  the measurement compression problem by considering additional resources available to each of the participating parties. One such formulation allows Bob to process the information received from Alice using local private randomness. 
The authors here also combined the ideas from \cite{winter} and \cite{devetak2003classical}  to simulate a measurement in presence of quantum side information.
In the above problem formulations, authors have derived the results using the prevalent random coding techniques analogous to Shannon's unstructured random codes \cite{shannon} involving mutually independent codewords.
The point-to-point setup \cite{winter,wilde_e} requires  randomly generating approximating POVMs and
analyzing the error associated with
these approximating POVMs, also termed as ``covering error''.  The key analytical tool that facilitates this is the operator Chernoff bound \cite{ahlswede2002strong},  which crucially exploits the mutual independence of codewords, 
yielding the quantum covering lemma \cite[Lemma 17.2.1]{Wilde_book}. 


The measurement compression problem has been studied extensively. Early works on quantifying the information gain of a measurement include \cite{groenewold1971problem,lindblad1972entropy,ozawa1986information}.
Buscemi et al. \cite{buscemi2008global,luo2010information,shirokov2011entropy} later advocated quantum mutual information with respect to a classical-quantum state  as the measure to characterize the corresponding information gain.
Berta et al. \cite{berta2014identifying} generalized the Winter’s measurement compression theorem by developing a universal measurement compression theorem for arbitrary inputs, and identified the quantum mutual information of a measurement as the information gained by performing the measurement, independent of the input state on which it is performed. They provide a proof based on new ``classically coherent state merging protocol" - a variation of the quantum state merging protocol \cite{horodecki2005partial,horodecki2007quantum}, and the post-selection technique for quantum channels \cite{christandl2009postselection}.

Anshu et al. \cite{anshu2019ConvexSplit} considered the problem of measurement compression with side information in the one-shot setting. They presented a protocol by proposing a new convex-split lemma for classical-quantum states and employing the position based decoding, and bounded the communication in terms of smooth max and hypothesis testing relative entropies. The original convex-split lemma \cite{anshu2017quantum,anshu2014quantum} demanded sub-optimal shared-randomness rate in the one-shot setting, by requiring large amount of additional quantum states in its statement. The authors addressed this by modifying the lemma to only use pairwise independent random variables. This substantially simplified the derandomization required, leading to an exponential reduction in the randomness cost in comparison to \cite{anshu2017quantum}.
Considering a related problem, Renes and Renner \cite{renes2012one} also studied sending of classical messages in the presence of quantum side information in the one-shot setting.
For more discussion and results pertaining to one-shot quantum information theory, the reader is directed to \cite{tomamichel2015quantum,khatri2020principles}.




Furthermore, the authors in \cite{atif2019faithful} considered the task of quantifying ``relevant information'' for the quantum measurements performed in a distributed fashion on bipartite entangled states involving three agents.
In this multi-terminal setting, a composite bipartite quantum system $AB$ is made available to two agents, Alice and Bob, where they have access to the sub-systems $A$ and $B$, respectively. Two separate measurements, one for each sub-system, are performed in a distributed fashion with no communication taking place between Alice and Bob. A third party, Charlie, is connected to Alice and Bob via two separate classical links. The objective of the three parties is to simulate the action of repeated independent measurements performed on many independent copies of the given composite state. Further, common randomness at rate $C$ is also shared amidst the three parties.
This is achieved 
using random unstructured code ensembles while still using the operator Chernoff bound. 

The measurement compression theorem has found its applications in several quantum information processing protocols. Examples include the quantum reverse Shannon theorem \cite{bennett2002entanglement,bennett2009quantum,berta2011quantum}, local purity distillation protocols \cite{horodecki2003local,horodecki2005local,devetak2005distillation,krovi2007local}, and  also in the grandmother protocol \cite{devetak2008resource} which is useful in entanglement distillation from noisy quantum states.

An ubiquitous application of distributed systems in current quantum settings arises due to the inherent vulnerability of the large-scale quantum computation systems to noise. The state-of-art systems exhibits technical difficulties in increasing  the number of low-noise qubits in a single quantum device. A solution to this is cooperative processing of information separately on spatially segregated units. This necessitates the need for distributed compression protocols to compress efficiently and recover the data. In addition, when one is interested in solely reconstructing functions of the distributively stored quantum data, the rate of communication may be further reduced by employing structured coding techniques. For this, we need to impose further structure on these POVMs. This is to ensure that the joint decoder (Charlie) is able to reconstruct a lower dimensional quantum state with minimal use of the classical communication resource. Hence, structure of the POVM is desired to match with the structure of the function being computed.

The traditional random coding techniques using unstructured code ensembles may not always achieve optimality  for distributed multi-terminal settings. For instance, the work by Korner-Marton \cite{korner1979encode} demonstrated this sub-optimality for the problem of classical distributed lossless compression with the objective of computing the sum of  the sources  for the binary symmetric  case using random linear codes.
Traditionally, algebraic-structured codes are used in information coding problems toward achieving computationally efficient (polynomial-time) encoding and decoding algorithms.  However, in multi-terminal communication problems, even if computational complexity is a non-issue, random algebraic 
structured codes outperform random unstructured codes in terms of achieving improved asymptotic rate regions
in many cases
\cite{krithivasan2011distributed,200710TIT_NazGas,200906TIT_PhiZam,201109TITarXiv_JafVis}.


Motivated by this, we consider the quantum distributed faithful measurement simulation problem and present a new achievable rate-region using algebraic structured coding techniques.
However, there are two main challenges in using these algebraic structured codes toward an asymptotic analysis in  quantum information theory. The first challenge is 
to be able to induce arbitrary empirical single-letter distributions. For example, if we were to send codewords from a linear code with uniform probability, then the induced  empirical distribution of codeword symbols
(single-letter distribution on
the symbols of the codewords) is uniform. To address this challenge,
we use a collection of cosets of a linear code called Unionized Coset Codes (UCCs) \cite{pradhanalgebraic}. The second challenge is that unlike the random unstructured codes, the codewords generated from a random linear code are only pairwise-independent \cite{Gal-ITRC68}. This renders the above technique of operator Chernoff bound, or even the covering lemma, unusable. 
 Since our approach relies on the use of UCCs for generating the approximating POVMs, the binning of these POVM elements is performed in a correlated fashion as governed by these structured codes. This is in contrast to the common technique of  independent binning. Due to the correlated binning, the pairwise-independence issue gets exacerbated.

We address these challenges  using three main ideas summarized as follows:
\begin{itemize}
	\item \textbf{Random structured generation of pruned POVMs} - We generate a collection of algebraic structured approximating POVMs randomly using the above described UCC technique, and then prune them.
This pruning ensures that these POVMs form a positive resolution of identity, and thus eliminates any need for the operator Chernoff inequality. However, such pruning comes at the cost of additional  approximating error.
To bound the approximating error caused by pruning the POVMs, we develop a new Operator Inequality which provides a handle to convert the pruning error in the form of covering error expression (dealt within the next idea). 
	\item \textbf{Covering Lemma for Pairwise-Independent Ensemble} - Since the traditional covering lemma is based on the Chernoff inequality, we develop an alternative proof for the aforementioned covering lemma \cite[Lemma 17.2.1]{book_quantum}. This alternative proof is based on the second-order analysis using the operator trace inequalities and hence requires the operators to be only pairwise-independent.
	\item \textbf{Multi-partite Packing Lemma} - We develop a binning technique for performing computation on the fly so as to achieve a low dimensional reconstruction of a function at the location of Charlie. In an effort towards analysing this binning technique, we develop a multi-partite  packing Lemma for the structured POVMs.
	
\end{itemize} 

	Combining these techniques, we provide a multi-party distributed faithful simulation and function computation protocol in a quantum information theoretic setting. We provide a characterization of the asymptotic performance limit of this protocol in terms of a computable single-letter achievable rate-region, which is the main result of the paper (see Theorem \ref{thm:dist POVM appx}).

The organization of the paper is as follows.
 In Section \ref{sec:prelim}, we set the notation, state requisite definitions and also provide related results. In Section \ref{sec:dist} we state our main result on the distributed measurement compression and provide the theorem (Theorem \ref{thm:dist POVM appx}) characterizing the rate-region. In Section \ref{sec:coveringLemma} we provide a new Covering Lemma for pairwise-independent ensembles. Section \ref{sec:novelLemmas} provides useful lemmas. In Section \ref{sec:p2p}, we consider the point-to-point setup and provide a theorem characterizing the rate-region
 using algebraic structured codes.  We prove the main result (Theorem \ref{thm:dist POVM appx}) in Section \ref{sec:proof of thm dist POVM} using the point-to-point result as a building block. Finally, we conclude the paper in Section \ref{sec:conclusion}.

\section{Preliminaries}\label{sec:prelim}
\noindent \textbf{Notation:} Given any natural number $M$, let the finite set $\{1, 2, \cdots, M\}$ be denoted by $[1,M]$. Let $\mathcal{B(H)}$ denote the algebra of all bounded linear operators acting on a finite dimensional Hilbert space $\mathcal{H}$. Further, let $ \mathcal{D(H)} $ denote the set of all unit trace positive operators acting on $ \mathcal{H} $. Let $I$ denote the identity operator. The trace distance between two operators $A$ and $B$ is defined as $\|A-B\|_1\deq \Tr|A-B|$, where for any operator $\Lambda$ we define $|\Lambda|\deq \sqrt{\Lambda^\dagger \Lambda}$. The von Neumann entropy of a density operator $\rho \in \mathcal{D}(\mathcal{H})$ is denoted by $S(\rho)$.  The quantum mutual information for a bipartite density operator $\rho_{AB} \in \mathcal{D}(\mathcal{H}_A \otimes \mathcal{H}_B)$ is defined as 
\begin{align*}
I(A;B)_{\rho}&\deq S(\rho_{A})+S(\rho_{B})-S(\rho_{AB}).
\end{align*} 
 A positive-operator valued measure (POVM) acting on a Hilbert space $\mathcal{H}$ is a collection $M\deq \{\Lambda_x\}_{x \in \mathcal{X}}$ of positive operators in $\mathcal{B}(\mathcal{H})$ that form a resolution of the identity:
\begin{align*}
\Lambda_x\geq 0, \forall x \in \mathcal{X}, \qquad \sum_{x \in \mathcal{X}} \Lambda_x=I,
\end{align*}
where $\mathcal{X}$ is a finite set.
If instead of the equality above, the inequality $\sum_x \Lambda_x \leq I$ holds, then the collection is said to be a sub-POVM. A sub-POVM $M$ can be completed to form a POVM, denoted by $[M]$, by adding the operator $\Lambda_{0}\deq (I-\sum_x \Lambda_x )$ to the collection.  Let $\Psi^\rho_{RA}$ denote a purification of a density operator $\rho \in D(\mathcal{H}_A)$. Given a POVM $M\deq \{\Lambda^A_x\}_{x \in \mathcal{X}}$ acting on  $\rho$, the post-measurement state of the reference together with the classical outputs is represented by 
\begin{equation}\label{eq:1}
     (\text{id} \tensor M)(\Psi^\rho_{RA})\deq \sum_{x\in \mathcal{X}} \ketbra{x}\tensor \tr_{A}\{(I^R \tensor \Lambda_x^A ) \Psi^\rho_{RA} \}.
\end{equation} 
Consider two POVMs $M_A=\{\Lambda^A_x\}_{x \in \mathcal{X}}$ and $M_B=\{\Lambda^B_y\}_{y \in \mathcal{Y}}$ acting on $\mathcal{H}_A$ and $\mathcal{H}_B$, respectively. Define $M_A\tensor M_B \deq \{\Lambda^A_x\tensor \Lambda^B_y\}_{x\in \mathcal{X},y\in \mathcal{Y}}$
With this definition, $M_A\tensor M_B$ is a POVM acting on $\mathcal{H}_A\tensor \mathcal{H}_B$. By $M^{\tensor n}$ denote the $n$-fold tensor product of the POVM $M$ with itself. 
For a prime $p$, we denote the unique finite field of size $p$ by $\FF_p$, and denote the addition operation over the field by $+$.

\begin{definition}[Faithful simulation \cite{wilde_e}]\label{def:faith-sim}
	Given a POVM ${M}\deq \{\Lambda_x\}_{x\in \mathcal{X}}$ acting on a Hilbert space $\mathcal{H}$ and a density operator $\rho\in \mathcal{D}(\mathcal{H})$, a sub-POVM $\tilde{M}\deq \{\tilde{\Lambda}_{x}\}_{x\in \mathcal{X}}$ acting on $\mathcal{H}$ is said to be $\epsilon$-faithful to $M$ with respect to $\rho$, for $\epsilon > 0$, if the following holds: 
	\begin{equation}\label{eq:faithful-sim-cond-1_2}
	\sum_{x\in \mathcal{X}} \Big\|\sqrt{\rho} (\Lambda_{x}-\tilde{\Lambda}_{x}) \sqrt{\rho}\Big\|_1+\tr\left\{(I-\sum_{x} \tilde{\Lambda}_{x})\rho\right\}   \leq \epsilon.
	\end{equation}
\end{definition}

\begin{lem}\label{lem:Separate}
	Given a density operator $ \rho_{AB} \in \mathcal{D}(\mathcal{H}_{A}\tensor\mathcal{H}_{B}) $, a sub-POVM
	$M_Y \deq \left \{\Lambda_y^B: y \in \mathcal{Y}\right \} $ acting on $ \mathcal{H}_B,  $ for some set $\mathcal{Y}$, 
	and any Hermitian operator $\Gamma^A$ acting on $ \mathcal{H}_A $, we have
	\begin{align}\label{eq:lemSeparate1}
	\sum_{y \in \mathcal{Y}}\left\| \sqrt{\rho_{AB}}\left (\Gamma^A\tensor \Lambda_y^B\right )\sqrt{\rho_{AB}}\right \|_1 \leq \left \| \sqrt{\rho_A} \Gamma^A\sqrt{\rho_{A}} \right \|_1,
	\end{align} with equality if $ \displaystyle \sum_{y \in \mathcal{Y}}\Lambda_y^B = I  $, where $ \rho_A = \Tr_{B}\{\rho_{AB}\}$.
\end{lem}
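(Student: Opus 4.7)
The plan is to convert $\sum_y\|X_y\|_1$, with $X_y\deq\sqrt{\rho_{AB}}(\Gamma^A\otimes\Lambda_y^B)\sqrt{\rho_{AB}}$, into a single trace against $\Gamma^A$ and then close the bound via trace-norm/operator-norm duality. Each $X_y$ is Hermitian; let $U_y\deq\mathrm{sgn}(X_y)$ be the sign operator from its Jordan decomposition, so that $U_y$ is Hermitian with $\|U_y\|_\infty\le 1$ and $|X_y|=U_y X_y$, whence $\|X_y\|_1=\Tr(U_y X_y)$. Cyclicity of trace together with $\Tr=\Tr_A\circ\Tr_B$ and the fact that $\Gamma^A\otimes I$ pulls through $\Tr_B$ gives
\[\sum_y\|X_y\|_1=\Tr_A\!\bigl(\Gamma^A\,S\bigr),\qquad S\deq\sum_y\Tr_B\!\bigl((I^A\otimes\Lambda_y^B)\,\sqrt{\rho_{AB}}\,U_y\,\sqrt{\rho_{AB}}\bigr),\]
and one checks that $S$ is Hermitian (use $U_y^*=U_y$ together with cyclicity of $\Tr_B$ against operators of the form $I\otimes(\cdot)$).

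The crux of the argument is the two-sided operator bound $-\rho_A\le S\le\rho_A$. Completing the sub-POVM by $\Lambda_0^B\deq I-\sum_y\Lambda_y^B\ge 0$,
\[\rho_A-S=\sum_y\Tr_B\!\Bigl(\sqrt{\rho_{AB}}(I-U_y)\sqrt{\rho_{AB}}\,(I^A\otimes\Lambda_y^B)\Bigr)+\Tr_B\!\bigl(\rho_{AB}(I^A\otimes\Lambda_0^B)\bigr).\]
Each summand has the shape $\Tr_B\!\bigl(P\,(I^A\otimes Q^B)\bigr)$ with $P\ge 0$ (because $I\pm U_y\ge 0$ gives $\sqrt{\rho_{AB}}(I\pm U_y)\sqrt{\rho_{AB}}\ge 0$, and $\rho_{AB}\ge 0$) and $Q^B\ge 0$. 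Using the cyclicity of $\Tr_B$ against $I^A\otimes(\cdot)$ factors, this equals $\Tr_B\!\bigl((I\otimes\sqrt{Q})\,P\,(I\otimes\sqrt{Q})\bigr)$, which is positive since the partial trace maps positive operators to positive operators. Hence $\rho_A-S\ge 0$; the companion $\rho_A+S\ge 0$ follows identically with $I+U_y$ in place of $I-U_y$.

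With $-\rho_A\le S\le\rho_A$ one verifies (by a standard $2\times 2$ principal-minor argument on $\mathrm{supp}(\rho_A)\oplus\ker(\rho_A)$) that $S$ is supported in $\mathrm{supp}(\rho_A)$, so $K\deq\rho_A^{-1/2}S\rho_A^{-1/2}$ (Moore--Penrose pseudoinverse on the support) is a Hermitian operator with $\|K\|_\infty\le 1$ and $S=\sqrt{\rho_A}\,K\,\sqrt{\rho_A}$. Trace-norm/operator-norm duality then yields
\[\sum_y\|X_y\|_1=\Tr\!\bigl(\sqrt{\rho_A}\,\Gamma^A\,\sqrt{\rho_A}\cdot K\bigr)\le\|\sqrt{\rho_A}\,\Gamma^A\,\sqrt{\rho_A}\|_1\,\|K\|_\infty\le\|\sqrt{\rho_A}\,\Gamma^A\,\sqrt{\rho_A}\|_1,\]
which is the claim \eqref{eq:lemSeparate1}. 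The equality under $\sum_y\Lambda_y^B=I$ is handled in the natural case $\Gamma^A\ge 0$: each $X_y\ge 0$, so $\|X_y\|_1=\Tr(X_y)=\Tr\!\bigl(\rho_{AB}(\Gamma^A\otimes\Lambda_y^B)\bigr)$, and the sum telescopes to $\Tr(\rho_A\Gamma^A)=\|\sqrt{\rho_A}\,\Gamma^A\,\sqrt{\rho_A}\|_1$. The main technical obstacle is the operator inequality $-\rho_A\le S\le\rho_A$: a naive triangle-inequality approach would only give $\Tr(\rho_A|\Gamma^A|)$ on the right, which can be strictly larger than $\|\sqrt{\rho_A}\Gamma^A\sqrt{\rho_A}\|_1$, so combining the variational choice $U_y=\mathrm{sgn}(X_y)$ with the $\Tr_B$-cyclicity trick for $I\otimes(\cdot)$ factors is essential.
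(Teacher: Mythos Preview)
Your proof of the inequality is correct. The variational representation $\|X_y\|_1=\Tr(U_yX_y)$ with $U_y=\mathrm{sgn}(X_y)$, the partial-trace rewriting $\sum_y\|X_y\|_1=\Tr_A(\Gamma^A S)$, the two-sided sandwich $-\rho_A\le S\le\rho_A$ via the completion trick and $\Tr_B$-cyclicity for $I\otimes(\cdot)$ factors, and the final duality step via $K=\rho_A^{-1/2}S\rho_A^{-1/2}$ all go through exactly as you wrote them. The paper itself does not supply a proof of this lemma; it only cites Lemma~3 of \cite{atif2019faithful}, so there is no in-paper argument to compare against.

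On the equality clause: you restrict to $\Gamma^A\ge 0$, and that restriction is in fact necessary. The equality claim as printed (for arbitrary Hermitian $\Gamma^A$) is false. Take $\rho_{AB}=|\Phi^+\rangle\langle\Phi^+|$ with $|\Phi^+\rangle=(|00\rangle+|11\rangle)/\sqrt2$, so $\rho_A=I/2$, and let $\Gamma^A=Z$. Then $\|\sqrt{\rho_A}\,Z\,\sqrt{\rho_A}\|_1=\|Z/2\|_1=1$. With the complete POVM $\{\,|+\rangle\langle+|,\,|-\rangle\langle-|\,\}$ on $B$ one computes $\langle\Phi^+|Z\otimes|\pm\rangle\langle\pm||\Phi^+\rangle=0$, hence each $X_\pm=0$ and $\sum_y\|X_y\|_1=0<1$. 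So your hedge is not a gap in your argument but a correction to the lemma's stated scope. This does not damage the paper's applications: in the one place the equality is invoked (the $J_{12}$ bound in Appendix~\ref{appx:proof of J_1}), the proof chain only needs the $\le$ direction, so replacing the claimed equality by your inequality leaves the main results intact.
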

\begin{proof}
	The proof is provided in Lemma 3 of \cite{atif2019faithful}. 
\end{proof}

\section{Main Results}\label{sec:mainResults}

In this section we present the main results of this paper. 

\subsection{Simulation of Distributed POVMs using Algebraic-Structured POVMs}\label{sec:dist}
Let $\rho_{AB}$ be a density operator acting on a composite Hilbert Space  $\mathcal{H}_A\tensor \mathcal{H}_B$.
Consider two measurements $M_A$ and $M_B$ on sub-systems $A$ and $B$, respectively. 
Imagine again that we have three parties, named Alice, Bob and Charlie, that are trying to collectively simulate the action of a given measurement $M_{AB}$
performed on the state $\rho_{AB}$, as shown in Fig. \ref{fig:NF_Distributed}. Charlie additionally has access to unlimited private randomness.
The problem is defined in the following.
		 
\begin{figure}[hbt]
	\begin{center}
	  \includegraphics[scale=0.4]{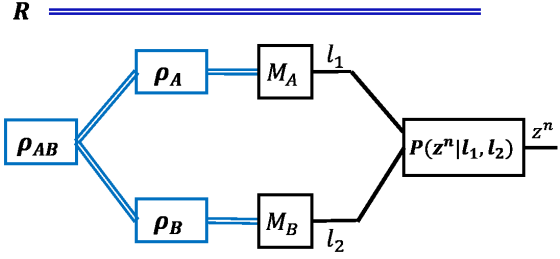}
    \caption{The diagram depicting the distributed POVM simulation problem with stochastic processing. In this setting, Charlie additionally has access to unlimited private randomness.}
    \label{fig:NF_Distributed}  
	\end{center}
\end{figure}
\begin{definition}\label{def:distProtocol}
	For a given finite set $\mathcal{Z}$, and a Hilbert space $\mathcal{H}_{A}\tensor \mathcal{H}_B$, a distributed protocol with stochastic processing with parameters $(n,\Theta_1,\Theta_2,N_1,N_2)$ 
	is characterized by 
	\\
		\textbf{1)} a collections of Alice's sub-POVMs  $\tilde{M}_A^{(\mu_1)},\mu_1 \in [1, N_1]$ each acting on $\mathcal{H}_A^{\tensor n}$ and with outcomes in a subset $\mathcal{L}_1$ satisfying $|\mathcal{L}_1|\leq \Theta_1$.\\
		\textbf{2)} a collections of Bob's sub-POVMs  $\tilde{M}_B^{(\mu_2)},\mu_2 \in [1, N_2]$ each acting on $\mathcal{H}_B^{\tensor n}$ and with outcomes in a subset $\mathcal{L}_2$, satisfying $|\mathcal{L}_2|\leq \Theta_2$.\\
		\textbf{3)} a collection of Charlie's classical stochastic maps $P^{(\mu_1,\mu_2)}(z^n|l_1, l_2)$ for all $l_1\in \mathcal{L}_1, l_2\in \mathcal{L}_2, z^n\in \mathcal{Z}^n$, $\mu_1 \in [1,N_1]$ and $\mu_2 \in [1,N_2]$. \\
		The overall  sub-POVM of this distributed protocol, given by $\tilde{M}_{AB}$, is characterized by the following operators: 
		\begin{align*}
		\tilde{\Lambda}_{z^n}\deq \frac{1}{N_1}\frac{1}{N_2}\sum_{\mu_1,\mu_2}&\sum_{l_1, l_2} P^{(\mu_1,\mu_2)} (z^n|l_1,l_2) \\ 
		& \Lambda^{A,(\mu_1)}_{l_1}\tensor \Lambda^{B,(\mu_2)}_{l_2}, \quad \forall z^n \in \mathcal{Z}^n, \nonumber
		\end{align*}
		where $\Lambda^{A,(\mu_1)}_{l_1}$ and  $\Lambda^{B,(\mu_2)}_{l_2}$ are the operators corresponding to the sub-POVMs $\tilde{M}_A^{(\mu_1)}$ and $\tilde{M}_B^{(\mu_2)}$, respectively.
	%
	%
	%
	%
	%
\end{definition}
In the above definition,  $(\Theta_1,\Theta_2)$ determines the amount of classical bits communicated from Alice and Bob to Charlie. The amount of pairwise shared randomness is determined by  $N_1$ and $N_2$. The classical stochastic maps $P^{(\mu_1,\mu_2)}(z^n|l_1,l_2)$ represent the action of Charlie on the received classical bits.  
\begin{definition}
	Given a POVM $M_{AB}$ acting on  $\mathcal{H}_{A}\tensor \mathcal{H}_B$, and a density operator $\rho_{AB}\in \mathcal{D}(\mathcal{H}_{A}\tensor \mathcal{H}_B)$, a quadruple $(R_1,R_2,C_1,C_2)$ is said to be achievable, if for all $\epsilon>0$ and for all sufficiently large $n$, there exists a distributed protocol with stochastic processing with parameters  $(n, \Theta_1, \Theta_2, N_1,N_2)$ such that its overall sub-POVM $\tilde{M}_{AB}$  is $\epsilon$-faithful to $M_{AB}^{\tensor n}$ with respect to $\rho_{AB}^{\tensor n}$ (see Definition \ref{def:faith-sim}), and
	\begin{align*}
	\frac{1}{n}\log_2 \Theta_i \leq R_i+\epsilon, \! \quad \mbox{and} \! \quad\!
	\frac{1}{n}\log_2 N_i \leq C_i+\epsilon, \quad i=1,2.
	\end{align*}
	The set of all achievable quadruples $(R_1,R_2,C_1,C_2)$ is called the achievable rate region. 
\end{definition}
\begin{definition}[Joint Measurements]\label{def:Joint Measurements} A POVM $M_{AB} = \{\Lambda^{AB}_z\}_{z \in \mathcal{Z}}$, acting on 
a Hilbert space 
 $\mathcal{H}_{A}\tensor \mathcal{H}_B$, is said to have a separable decomposition with stochastic integration  
 given by $(\bar{M}_A,\bar{M}_B,P_{Z|S,T})$
 if there exist POVMs $\bar{M}_A=\{\bar{\Lambda}^A_{s}\}_{s \in \mathcal{S}}$ and $\bar{M}_B=\{ \bar{\Lambda}^B_{t}\}_{t \in \mathcal{T}}$ and a stochastic mapping $P_{Z|S,T}: \mathcal{S} \times \mathcal{T} \rightarrow\mathcal{Z}$ such that 
	\begin{equation}
	\Lambda^{AB}_{z}=\sum_{s,t} P_{Z|S,T}(z|s,t) \bar{\Lambda}^A_{s}\tensor \bar{\Lambda}^B_{t}, \quad  \forall z \in \mathcal{Z}, \nonumber
	\end{equation}
	where $\mathcal{S}, \mathcal{T}$,  and $\mathcal{Z}$ are finite sets.
\end{definition}
The following theorem provides an inner bound to the {achievable} rate region, which is proved in Section \ref{sec:proof of thm dist POVM}. This is one of the main results of this paper.

\begin{theorem}\label{thm:dist POVM appx}
		Consider a density operator $\rho_{AB}\in \mathcal{D}(\mathcal{H}_{A}\tensor \mathcal{H}_B)$, and a POVM $M_{AB}=\{\Lambda^{AB}_{z}\}_{z \in \mathcal{Z}}$ acting on  $\mathcal{H}_{A}\tensor \mathcal{H}_B$ having a separable decomposition with stochastic integration  (as in Definition \ref{def:Joint Measurements}), yielding POVMs $\bar{M}_A=\{\bar{\Lambda}^A_{s}\}_{s \in \mathcal{S}}$ and $\bar{M}_B=\{ \bar{\Lambda}^B_{t}\}_{t \in \mathcal{T}}$ and a stochastic map $P_{Z|S,T}:\mathcal{S}\times\mathcal{T} \rightarrow \mathcal{Z}$. Define the auxiliary states
\begin{align*}
    \sigma_{1}^{RSB} &\deq (\emph{id}_R\tensor \bar{M}_{A} \tensor \emph{id}_B) ( \Psi^{\rho_{AB}}_{R A B}), \\
    \sigma_{2}^{RTV} &\deq (\emph{id}_R\tensor \emph{\id}_A \tensor \bar{M}_{B}) ( \Psi^{\rho_{AB}}_{R A B}), \quad \text{and} \nonumber \\
    \sigma_3^{RSTZ}   &\deq \sum_{s,t,z}\sqrt{\rho_{AB}}\left (\bar{\Lambda}^A_{s}\tensor \bar{\Lambda}^B_{t}\right )\sqrt{\rho_{AB}} \\
    & \hspace{30pt}\tensor P_{Z|S,T}(z|s,t) \ketbra{s}\tensor\ketbra{t}\tensor\ketbra{z},
\end{align*}
for some orthonormal sets $\{\ket{s}\}_{s\in\mathcal{S}}, \{\ket{t}\}_{t\in\mathcal{T}}$, and $\{\ket{z}\}_{z\in\mathcal{Z}}$, where $\Psi^{\rho_{AB}}_{R A B}$ is a purification of $\rho_{AB}$.
		A quadruple $(R_1,R_2,C_1,C_2)$ is achievable if there exists a finite field $\mathbb{F}_p$, for a prime $p$, a pair of mappings $f_S:\mathcal{S} \rightarrow \mathbb{F}_p$ and 
$f_T:\mathcal{T} \rightarrow \mathbb{F}_p$, and a stochastic mapping $P_{Z|W}:\mathbb{F}_p \rightarrow \mathcal{Z}$ such that 
\[
P_{Z|S,T}(z|s,t)\!=\!P_{Z|W}(z|f_S(s)+f_T(t)), \forall s\! \in \!\mathcal{S}, t \!\in \!\mathcal{T}, z \!\in \!\mathcal{Z},
\]
yielding $U=f_S(S)$, $V=f_T(T)$, and $W=U+V$, 
and the following inequalities are satisfied:
		\begin{subequations}\label{eq:nf_dist POVm appx rates}
			\begin{align}
			&R_1 \geq I(U;R,B)_{\sigma_1}+I(W;V)_{\sigma_3} - I(U;V)_{\sigma_3},\\
			&R_2 \geq I(V;R,A)_{\sigma_2}+I(W;U)_{\sigma_3}-I(U;V)_{\sigma_3},\\
			&R_1 + C_1 \geq I(U;\!R,Z)_{\sigma_3} +I(W;V)_{\sigma_3}-I(U;V)_{\sigma_3} ,\\
			&R_2 + C_2 \geq I(V;\!R,Z)_{\sigma_3} +I(W;U)_{\sigma_3} -I(U;V)_{\sigma_3}, \\
			&R_1+R_2+C_1+C_2 \geq I(U,V;R,Z)_{\sigma_3}  + I(W;U)_{\sigma_3} \nonumber \\ & \hspace{1.4in}
			 +I(W;V)_{\sigma_3} 
			- I(U;V)_{\sigma_3}. \label{eq:nfrate4}
			\end{align}
		\end{subequations}
\end{theorem}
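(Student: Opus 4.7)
My plan is to reduce the distributed simulation problem to two coupled applications of the point-to-point algebraic simulation result of Section~\ref{sec:p2p}, with the codewords at Alice and at Bob drawn from cosets of a \emph{common} linear code over $\mathbb{F}_p$ so that their coordinate-wise sum is again a coset of that same linear code. This matched algebraic structure is precisely what allows Charlie to compute the function $W^n = U^n + V^n$ directly from the two bin indices, without first reconstructing $U^n$ and $V^n$ separately. Concretely, I would have Alice simulate the classical-output POVM $\bar{M}_A$ composed with $f_S$ on $\rho_A^{\tensor n}$ by a random structured sub-POVM whose outcomes are codewords $u^n \in \mathbb{F}_p^n$ in a Unionized Coset Code built from a generator matrix $G$, and Bob analogously simulate $\bar{M}_B \circ f_T$ using a UCC built from the \emph{same} $G$. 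Alice and Bob transmit bin labels $l_1, l_2$; Charlie forms $l_1 + l_2$ (which identifies the coset containing the intended $w^n = u^n + v^n$), selects a representative codeword using shared randomness, and applies $P_{Z|W}^{\tensor n}$ coordinate-wise to produce $z^n$.

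\textbf{Randomized construction and covering analysis.} The approximating sub-POVMs at the two terminals are built as in the point-to-point construction, by associating to each codeword a suitably tilted POVM element and averaging over cosets. Because UCC codewords are only pairwise independent, the operator Chernoff bound is unavailable, and the averaged operator fails to resolve the identity; I handle both difficulties by \emph{pruning} the averaged operator against a typical projector and controlling the resulting pruning error by the operator inequality of Section~\ref{sec:novelLemmas}, which reduces it to a quantity of covering type. The covering error itself is then bounded by the pairwise-independent covering lemma of Section~\ref{sec:coveringLemma}. Applied at each terminal, this machinery yields the single-terminal constraints $R_1 + C_1 \geq I(U;R,Z)_{\sigma_3} + \cdots$ and $R_2 + C_2 \geq I(V;R,Z)_{\sigma_3} + \cdots$, together with their shared-randomness-free counterparts $R_1 \geq I(U;R,B)_{\sigma_1} + \cdots$ and $R_2 \geq I(V;R,A)_{\sigma_2} + \cdots$ obtained by treating the opposite terminal's quantum system as part of the reference.

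\textbf{Multipartite packing and the K\"orner--Marton savings.} To ensure that Charlie's decoding of $W^n$ from $l_1 + l_2$ is consistent with the intended codeword pair $(U^n, V^n)$, I invoke the multipartite packing lemma tailored for structured POVMs. This step contributes the additive penalties $I(W;V)_{\sigma_3} - I(U;V)_{\sigma_3}$ and $I(W;U)_{\sigma_3} - I(U;V)_{\sigma_3}$ to the rate bounds, capturing the residual uncertainty in each individual codeword after conditioning on the sum. The subtractive $I(U;V)_{\sigma_3}$ is the K\"orner--Marton-style compression gain enabled by the fact that only the function is needed at Charlie, not the individual measurement outcomes. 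Enforcing the packing constraints at both terminals jointly produces the sum-rate bound~\eqref{eq:nfrate4}. Finally, Lemma~\ref{lem:Separate} and the triangle inequality assemble the per-terminal errors, the pruning corrections, and the packing errors into joint $\epsilon$-faithfulness of the overall sub-POVM $\tilde{M}_{AB}$ with respect to $M_{AB}^{\tensor n}$ on $\rho_{AB}^{\tensor n}$.

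\textbf{Main obstacle.} The hardest part is the simultaneous failure of mutual independence in both the covering \emph{and} the packing analyses, which is further compounded by the fact that the binning at Alice and Bob is correlated (not independent) because of the shared generator matrix $G$. The pairwise covering lemma and the multipartite packing lemma are tailored to address each symptom individually, but integrating them with the pruning correction requires careful second-moment bookkeeping so that the error exponents collapse to the single-letter mutual-information expressions appearing in~\eqref{eq:nf_dist POVm appx rates}. Once these three ingredients are in hand, the remainder of the argument is a systematic, if lengthy, chain of triangle-inequality estimates that pieces together the two point-to-point simulations and Charlie's classical post-processing into the distributed protocol.
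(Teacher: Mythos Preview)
Your plan is the paper's: shared-$G$ UCCs at both terminals, pruning controlled by Lemma~\ref{lem:newMarkov}, the pairwise-independent covering lemma (Lemma~\ref{lem:Change Measure Soft Covering Variance Based}) for the covering error, mutual packing for the binning error, and Lemma~\ref{lem:Separate} to decouple the two terminals. Two points to correct.

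First, Charlie does not select the representative $w^n$ by shared randomness; he runs a \emph{deterministic} typicality test over the coset $\{aG + h_1^{(\bar\mu_1)}(i) + h_2^{(\bar\mu_2)}(j):a\in\FF_p^k\}$ and declares the unique $W$-typical element if one exists (this is precisely what the packing analysis certifies). The bin indices $i,j$ are not themselves added; the shift vectors $h_1^{(\bar\mu_1)}(i)$ and $h_2^{(\bar\mu_2)}(j)$ are.

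Second, and more substantively, the trace-distance decomposition is inherently \emph{asymmetric}: you first replace Alice's POVM by its structured approximation while keeping Bob's exact (this produces a constraint of the form $\tilde R+R_1+C_1 \geq I(U;RZV)_{\sigma_3}-S(U)_{\sigma_3}+\log p$, with $V$ inside the mutual information), and only then replace Bob's POVM conditionally. The resulting region $\mathcal{R}_1$ is not symmetric in $(U,V)$. The paper therefore repeats the analysis with Alice and Bob swapped to obtain a second region $\mathcal{R}_2$, time-shares to form the convex hull $\mathcal{R}_3$, and finally Fourier--Motzkin-eliminates the auxiliary linear-code rate $\tilde R=\tfrac{k}{n}\log p$ against the packing upper bound $\tilde R\leq \log p - S(U+V)_{\sigma_3}$. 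Only after these two steps do the constraints take the symmetric single-letter form~\eqref{eq:nf_dist POVm appx rates}; without them your argument yields a strictly smaller region and the additive penalties $I(W;V)-I(U;V)$, $I(W;U)-I(U;V)$ do not emerge as you describe but rather as the residue of eliminating $\tilde R$.
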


\begin{proof}
A proof is provided in Section \ref{sec:proof of thm dist POVM}.
\end{proof}

\begin{remark}
Note that the rate-region obtained in Theorem 6 of  \cite{atif2019faithful} using unstructured random code ensembles, contains the constraint $R_1 + R_2 + C_1 + C_2 \geq I(U,V;R,Z)_{\sigma_3}$. Hence when 
\begin{align*}
I(W;U)_{\sigma_3} &+I(W;V)_{\sigma_3}- I(U;V)_{\sigma_3} \\ &=2S(U+V)_{\sigma_3} - S(U,V)_{\sigma_3}<0,    
\end{align*}

the above theorem gives a lower sum rate constraint. As a result, the rate-region above contains points that are not contained within the rate-region provided in \cite{atif2019faithful}. To illustrate this fact further, consider the following example.
\end{remark}

\begin{remark}
In the above theorem, we restrict our attention to prime finite fields for ease of exposition. The results can be generalized to arbitrary finite fields in a straight-forward manner.
\end{remark}

\begin{example}

Suppose the composite state $\rho_{AB} $ is described using one of the Bell states on $\mathcal{H}_A\tensor\mathcal{H}_B$ as
\begin{align*}
    \rho_{AB} = \cfrac{1}{2}\left(\ket{00}_{AB} + \ket{11}_{AB} \right)\left( \bra{00}_{AB} + \bra{11}_{AB}\right).
\end{align*}
Since $\pi^A = \Tr_B{\rho^{AB}}$ and $\pi^B = \Tr_A{\rho^{AB}}$, Alice and Bob would perceive each of their particles in maximally mixed states $\pi^A = \frac{I_A}{2}$ and $\pi^B = \frac{I_B}{2}$, respectively. Upon receiving the quantum state, the two parties wish to independently measure their states, using identical POVMs $\bar{M}_A$ and $\bar{M}_B$, given by $\bar{M}_A \deq \left\{ \bar{\Lambda}_s^A \right\}_{s\in \mathcal{S}}, \bar{M}_B \deq \left\{\bar{\Lambda}_v^B\right\}_{t\in\mathcal{T}}$, where 
$\mathcal{S}=\mathcal{T}=\{0,1\}$, and 
\begin{align}
    \Lambda_0^A &= \Lambda_0^B \deq \begin{bmatrix}
                    0.9501 & 0.0826 +i0.1089 \\
                    0.0826 - i0.1089 & 0.0615
                \end{bmatrix}, \nonumber\\
    \Lambda_1^A &= \Lambda_1^B \deq \begin{bmatrix}
                    0.0499 & -0.0826 -i0.1089 \\
                    -0.0826 + i0.1089 & 0.9385
                \end{bmatrix}. \nonumber
\end{align}
Alice and Bob together with Charlie are trying to simulate the action of $M_{AB} \deq \left\{\Gamma_z^{AB}\right\}_{z\in
\mathcal{Z}}$,  using the classical communication and common randomness as the resources available to them, where $\mathcal{Z}=\{0,1\}$, and 
\begin{align}
    \Gamma_z^{AB} \deq \sum_{s\in\{0,1\}}\sum_{t\in\{0,1\}}P_{Z|S,T}(z|s,t)\left(\Lambda_s^A\tensor\Lambda_t^B \right),
\end{align} 
for $z \in \{0,1\},$ and $P_{Z|S,T}(0|0,0) =P_{Z|S,T}(0|1,1)= 1-P_{Z|S,T}(0|0,1)=
1-P_{Z|S,T}(0|1,0)= \lambda$, with 
$\lambda \in (0,1)$. 
Note that the above POVM $M_{AB}$ admits a separable decomposition as defined in the statement of Theorem \ref{thm:dist POVM appx}
with respect to the prime finite field $\FF_2$, with $U=S$ and $V=T$, and 
\[
P_{Z|W}(0|0)=1-P_{Z|W}(0|1)=\lambda.
\]
Hence the above theorem can be employed. This gives
\begin{align}
    S(U+V)_{\sigma_3} &= 0.5155, \quad S(U)_{\sigma_3} = S(V)_{\sigma_3} = 0.9999, \nonumber \\ S(U,V)_{\sigma_3} &= 1.5154, \quad I(U,V)_{\sigma_3} = 0.4844,\nonumber
\end{align}
where $\sigma_3$ is as defined in the statement of Theorem \ref{thm:dist POVM appx}.
Since $S(U)_{\sigma_3}-S(U+V)_{\sigma_3} = S(V)_{\sigma_3}-S(U+V)_{\sigma_3} = I(U,V)_{\sigma_3}, $
 the constraints on $R_1$, $R_2$, $R_1+C$ and $R_2+C$ are the same as obtained in Theorem 6 of \cite{atif2019faithful}. However, with $2S(U+V)_{\sigma_3} - S(U,V)_{\sigma_3} = -0.4844 < 0$, the constraint on $R_1+R_2+C_1 + C_2$ in the above theorem \eqref{eq:nfrate4} is strictly weaker than the constraint obtained using random unstructured codes in Theorem 6 of \cite{atif2019faithful}. Therefore, the rate-region obtained above
 using random structured codes in Theorem \ref{thm:dist POVM appx} is strictly larger than the rate-region in Theorem 6 of \cite{atif2019faithful}. 

\end{example}

\begin{example}
For the same state $\rho_{AB}$ as in the above example, consider the following identical POVMs $M_A \deq \left\{ \bar{\Lambda}_s^A \right\}_{s\in \mathcal{S}}$ and  $M_B \deq \left\{\bar{\Lambda}_t^B\right\}_{t \in \mathcal{T}}$, where $\mathcal{S}=\mathcal{T}=\{0,1\}$, and
\begin{align}
    \Lambda_0^A &= \Lambda_0^B = \begin{bmatrix}
                    0.4974 & 0.0471 + i0.4975 \\
                    0.0471 - i0.4975 & 0.5026
                \end{bmatrix}, \nonumber\\
    \Lambda_1^A &= \Lambda_1^B = \begin{bmatrix}
                    0.5026 & -0.0471 - i0.4975 \\
                    -0.0471 + i0.4975 &  0.4974
                \end{bmatrix}. \nonumber
\end{align}
Let the joint measurement that Alice and Bob are trying to simulate be given by
\begin{align}
    \Gamma_z^{AB} \deq \sum_{s\in\{0,1\}}\sum_{t\in\{0,1\}}P_{Z|S,T}(z|s,t)\left(\Lambda_s^A\tensor\Lambda_t^B \right), 
\end{align} 
for $z \in \{0,1\}$ where $P_{Z|S,T}:  \{0,1\}\rightarrow [0,1] $ is a conditional PMF on $\mathcal{Z}\times\mathcal{S}\times\mathcal{T}$ with $  P_{Z|S,T}(0|0,0) = \delta_0 \in (0,1)$ and $P_{Z|S,T}(0|0,1) = P_{Z|S,T}(0|1,0) = P_{Z|S,T}(0|1,1) = \delta_1 \in (0,1)$. Note that $P_{Z|S,T}$ depends on 
the variables  $(s,t)$ only through $s \lor t$, the logical OR function. 
Now, we define the random variables $U$ and $V$ on the prime finite field $\FF_3$ with the identity mappings $U = S$ and $V = T$, while noting that $U$ and $V$ take values in $\FF_3$ with $P(U=2)=P(V=2)=0$. Now with $W = U+V$, we identify the mapping $P_{Z|W}$ as 
\begin{align}
    P_{Z|W}(0|0) =\delta_0 ,\quad P_{Z|W}(0|1) =  P_{Z|W}(0|2) =\delta_1.
\end{align}
For this identification, we obtain $2S(U+V) - S(U,V) = -0.9039 < 0$, which gives the constraint on $R_1+R_2+C_1+C_2 $ in the above theorem \eqref{eq:nfrate4} strictly weaker than the corresponding constraint obtained using random unstructured codes in Theorem 6 of \cite{atif2019faithful}. Since this is a biting constraint, the above rate-region is strictly larger than the former for this example.
\label{ex:ex2}
\end{example}

\begin{example}
Building upon Example \ref{ex:ex2}, we explore more points in the POVM space such that the above theorem provides constraints \eqref{eq:nfrate4} that are strictly weaker than the corresponding constraint obtained in Theorem 6 of \cite{atif2019faithful}. For this, we consider the same state $\rho_{AB},$ as above and the following identical POVMs $M_A \deq \left\{ \bar{\Lambda}_s^A \right\}_{s\in \mathcal{S}}$ and  $M_B \deq \left\{\bar{\Lambda}_t^B\right\}_{t \in \mathcal{T}}$, where $\mathcal{S}=\mathcal{T}=\{0,1\}$, and
\begin{align}
    \Lambda_0^A = \Lambda_0^B = \begin{bmatrix}
                    \theta_1 & \theta_2 + i\theta_3 \\
                    \theta_2 - i\theta_3 & 1-\theta_1
                \end{bmatrix}, \quad
    \Lambda_1^A = \Lambda_1^B = I - \Lambda_0^A \nonumber
\end{align}
for $\theta_i \in [-1,1]$\footnote{The above parametrization is only for illustrative purposes and do not constitute all the two dimensional POVMs.}.
Figure \ref{fig:exampleSurface} illustrates the surface where $2S(U+V) = S(U,V)$ and therefore the region inside the surface has $2S(U+V) - S(U,V) < 0$, where the POVMs obtained provides the constraint on $R_1+R_2+C_1+C_2 $ in the above theorem \eqref{eq:nfrate4} strictly weaker than the corresponding constraint obtained in Theorem 6 of \cite{atif2019faithful}.
\begin{figure}[hbt]
	\begin{center}
	  \includegraphics[scale=0.4]{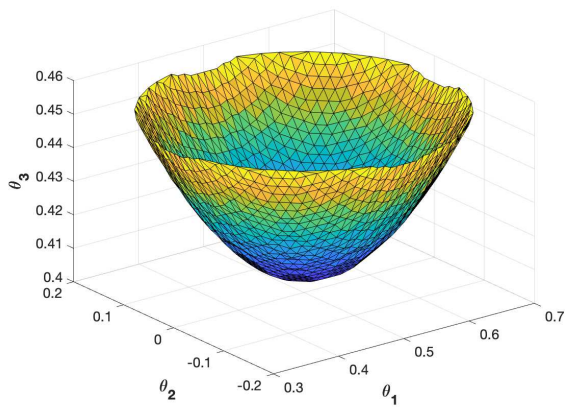}
    \caption{ Shown above is a $(\theta_1,\theta_2,\theta_3)$-surface with POVMs satisfying $2S(U+V) = S(U,V)$. Although the surface is symmetric in $\theta_3$, but for the ease of illustration only the upper half of the surface is shown.}
    \label{fig:exampleSurface}  
	\end{center}
\end{figure}
\label{ex:ex3}
\end{example}

\begin{remark}
Note that for POVMs contained in the above $(\theta_1,\theta_2,\theta_3)$-surface of Example \ref{ex:ex3}, the sum rate constraint  $R_1+R_2+C_1+C_2 $ is strictly weaker than the corresponding constraint in \cite[Theorem 6]{atif2019faithful}, and vice-versa outside. One can employ a strategy based on superposition and successive encoding that combines the two coding techniques to yield a unified rate-region.
\end{remark}

\subsection{Covering Lemma with Change of Measure for Pairwise-Independent Ensemble }
\label{sec:coveringLemma}
The proof of the theorem is based on a construction of algebraic-structured POVM 
ensemble where the elements are only pairwise independent and not mutually independent.
To analyze these POVMs we retreat back to first principles and develop a new one-shot Covering Lemma based on a change of measure technique and a second order analysis. 
This lemma, which can be of independent interest, is one of the main contributions of this work.

\begin{lem}[Covering Lemma]\label{lem:Change Measure Soft Covering Variance Based}
	Let $\{\lambda_x,\sigma_x\}_{x\in \mathcal{X}}$ be an ensemble, with  $\sigma_x \in \mathcal{D}(\mathcal{H})$ for all $x \in \mathcal{X}$, 	$\mathcal{X}$ being a finite set, and $\sigma = \sum_{x \in \mathcal{X}}\lambda_x\sigma_x$. Further, suppose we are given a total subspace projector $\Pi$ and a collection of codeword subspace projectors $\{\Pi_{x}\}_{x \in \mathcal{X}}$ which satisfy the following hypotheses
	\begin{subequations}\label{Soft_Covering-constraints}
	\begin{align}
		 \Tr{\Pi\sigma_x} & \geq 1-\epsilon, \label{Soft_Covering-constraints1} \\
		 \Tr{\Pi_x\sigma_x} & \geq 1-\epsilon, \label{Soft_Covering-constraints2} \\
		 \|\Pi\sqrt{\sigma}\|^2_1 & \leq D, \label{Soft_Covering-constraints3} \\
		 \Pi_x\sigma_x\Pi_x & \leq \frac{1}{d}\Pi_x, \quad \text{and} \label{Soft_Covering-constraints4} \\
		 \Pi_x\sigma_x\Pi_x & \leq \sigma_x. \label{Soft_Covering-constraints5}
	\end{align}
  	\end{subequations}
	for some $\epsilon \in (0,1)$ and $d < D$.
	Let $M $ be a finite non-negative integer. 
	Additionally, assume that there exists some set $\bar{\mathcal{X}} $ containing $\mathcal{X}$, with $\sigma_x \deq 0 $ (null operator) and $\lambda_x \deq 0$ for $x \in \bar{\mathcal{X}}\backslash \mathcal{X}$. Suppose $\{\mu_{\bar{x}}\}_{\bar{x} \in \bar{\mathcal{X}}}$ be any distribution on the set $\bar{\mathcal{X}}$ such that the distribution is $\{\lambda_x\}_{{x}\in \mathcal{X}}$ is absolutely continuous with respect to the distribution $\{\mu_{\bar{x}}\}_{\bar{x} \in \bar{\mathcal{X}}}$. Further, assume that $\lambda_x/\mu_x \leq \kappa$ for all $\eleInSet{x}{X}.$ Let a random covering code $\mathbbm{C}\deq \{C_m\}_{m \in [1,M]}$ be defined as a collection of codewords $C_m$ that are chosen pairwise independently according to the distribution $\{\mu_{\bar{x}}\}_{\bar{x} \in \bar{\mathcal{X}}}$.
	Then we have
	\begin{align}
        \EE_{\mathbbm{C}}\left[\Big\|\sum_{x\in\bar{\mathcal{X}}}\lambda_x\sigma_x - \frac{1}{M}\sum_{m=1}^M \frac{\lambda_{C_m}}{\mu_{C_m}}\sigma_{C_m}\Big\|_1\right] \!\leq \!\sqrt{\frac{\kappa D}{Md}} +2\delta(\epsilon),
	\end{align}
	where $\delta(\epsilon) = 4\sqrt{\epsilon}$. Futhermore, for $\Tilde{\sigma}_x$
 defined as $\Tilde{\sigma}_x \deq \Pi\Pi_x{\sigma}_x\Pi_x\Pi$, we have
 \begin{align}
        \EE_{\mathbbm{C}}\left[\Big\|\sum_{x\in\bar{\mathcal{X}}}\lambda_x\tilde{\sigma}_x - \frac{1}{M}\sum_{m=1}^M \frac{\lambda_{C_m}}{\mu_{C_m}}\tilde{\sigma}_{C_m}\Big\|_1\right] \leq \sqrt{\frac{\kappa D}{Md}}. \label{lem:ChangeMeasure:ineq2}
	\end{align}
 \end{lem}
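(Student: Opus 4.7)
The plan is to prove the two estimates in the order in which they are easiest to analyze: first the cleaner inequality \eqref{lem:ChangeMeasure:ineq2} for the twice-projected operators $\tilde\sigma_x$, and then the top-level bound via a triangle-inequality decomposition combined with the gentle measurement lemma applied to \eqref{Soft_Covering-constraints1}--\eqref{Soft_Covering-constraints2}. Since the codewords $C_m$ are only pairwise independent, the standard operator-Chernoff route is unavailable, and the chief ingredient is a second-moment analysis that exploits exactly the cancellation pairwise independence provides.

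For \eqref{lem:ChangeMeasure:ineq2}, set $Y_m \deq (\lambda_{C_m}/\mu_{C_m})\,\tilde\sigma_{C_m}$ and $\hat\omega \deq \tfrac{1}{M}\sum_{m=1}^M Y_m$. The change-of-measure identity, using absolute continuity of $\{\lambda_x\}$ with respect to $\{\mu_{\bar x}\}$, yields $\EE[\hat\omega] = \sum_{x \in \mathcal X} \lambda_x\,\tilde\sigma_x$. Because each $\tilde\sigma_x$ is sandwiched by $\Pi$, so is the fluctuation $\hat\omega - \EE[\hat\omega]$; combining the Schatten inequality $\|A\|_1 \leq \sqrt{\mathrm{rank}(A)}\,\|A\|_2$ with the effective-dimension hypothesis \eqref{Soft_Covering-constraints3} and Jensen's inequality gives
\[
\EE\|\hat\omega - \EE\hat\omega\|_1 \;\leq\; \sqrt{D \cdot \EE\|\hat\omega - \EE\hat\omega\|_2^2}.
\]
The Hilbert--Schmidt variance is computed using only pairwise independence: for $m\neq m'$ the cross expectation $\EE[(Y_m - \EE Y_m)(Y_{m'}-\EE Y_{m'})]$ vanishes, which gives
\[
\EE\|\hat\omega - \EE\hat\omega\|_2^2 \;\leq\; \tfrac{1}{M}\,\Tr\,\EE[Y_1^2] \;=\; \tfrac{1}{M}\sum_{x\in\mathcal X} \tfrac{\lambda_x^2}{\mu_x}\,\Tr(\tilde\sigma_x^2).
\]
The operator inequality $\tilde\sigma_x^2 \leq \tfrac{1}{d}\tilde\sigma_x$, which follows from \eqref{Soft_Covering-constraints4} via $\|\tilde\sigma_x\|_\infty\leq 1/d$, together with $\lambda_x/\mu_x \leq \kappa$ and $\sum_x \lambda_x \Tr(\tilde\sigma_x) \leq 1$, bounds the right-hand side by $\tfrac{\kappa}{Md}$, from which \eqref{lem:ChangeMeasure:ineq2} follows.

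For the main estimate, insert and subtract the twice-projected versions and apply the triangle inequality:
\begin{align*}
&\Big\|\textstyle\sum_x\lambda_x\sigma_x - \tfrac{1}{M}\sum_m\tfrac{\lambda_{C_m}}{\mu_{C_m}}\sigma_{C_m}\Big\|_1 \\
&\;\leq\; \Big\|\textstyle\sum_x\lambda_x(\sigma_x-\tilde\sigma_x)\Big\|_1 + \Big\|\textstyle\sum_x\lambda_x\tilde\sigma_x - \tfrac{1}{M}\sum_m\tfrac{\lambda_{C_m}}{\mu_{C_m}}\tilde\sigma_{C_m}\Big\|_1 \\
&\;\; + \tfrac{1}{M}\sum_m\tfrac{\lambda_{C_m}}{\mu_{C_m}}\|\sigma_{C_m}-\tilde\sigma_{C_m}\|_1.
\end{align*}
The gentle measurement lemma applied to \eqref{Soft_Covering-constraints1}--\eqref{Soft_Covering-constraints2} yields $\|\sigma_x - \tilde\sigma_x\|_1 \leq 4\sqrt{\epsilon} = \delta(\epsilon)$; in expectation, the first and third summands are each bounded by $\delta(\epsilon)$ (using $\sum_x \lambda_x = 1$ and the change-of-measure identity on the third), and the middle summand is bounded by \eqref{lem:ChangeMeasure:ineq2}, producing the claimed $\sqrt{\kappa D/(Md)} + 2\delta(\epsilon)$.

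The main technical obstacle is the variance step. The standard proof of the covering lemma \cite[Lemma~17.2.1]{Wilde_book} invokes operator Chernoff, which requires mutual independence of the $C_m$; here that tool must be replaced by the second-moment calculation above, and the operator inequality $\tilde\sigma_x^2 \leq d^{-1}\tilde\sigma_x$ is what ultimately converts a pure variance bound into the familiar $\sqrt{D/(Md)}$ form. A secondary subtlety is that both the reconstructed average and its expectation must be compared to the ``tilted'' $\tilde\sigma_x$ simultaneously, which is why the triangle-inequality decomposition threads through three pieces rather than two.
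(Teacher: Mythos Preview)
Your overall architecture matches the paper's: the same three-term triangle-inequality decomposition, gentle-measurement bounds $\|\sigma_x-\tilde\sigma_x\|_1\le 4\sqrt\epsilon$ on the outer terms, and a second-moment argument exploiting pairwise independence for the projected middle term. The gap is in how you convert the second moment back into a trace-norm bound.

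You invoke $\|A\|_1 \le \sqrt{\operatorname{rank}(A)}\,\|A\|_2$ and then insert $D$ for the rank via hypothesis \eqref{Soft_Covering-constraints3}. But \eqref{Soft_Covering-constraints3} says $\|\Pi\sqrt\sigma\|_1^2=(\Tr\sqrt{\Pi\sigma\Pi})^2\le D$, and by Cauchy--Schwarz $(\Tr\sqrt{\Pi\sigma\Pi})^2\le \Tr(\Pi)\,\Tr(\Pi\sigma\Pi)\le \operatorname{rank}(\Pi)$, so the rank of $\Pi$ can strictly exceed $D$. In the textbook covering lemma the hypothesis is indeed $\Tr\Pi\le D$; here it is the finer quantity $\|\Pi\sqrt\sigma\|_1^2$, and the rank route loses exactly that refinement. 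A symptom of the mismatch is that your argument never uses hypothesis \eqref{Soft_Covering-constraints5}.

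The paper repairs this by staying at the operator level. Operator concavity of the square root (L\"owner--Heinz) gives
\[
\EE\|\hat\omega-\EE\hat\omega\|_1 \;\le\; \Tr\sqrt{\EE[(\hat\omega-\EE\hat\omega)^2]} \;\le\; \Tr\sqrt{\tfrac{1}{M}\,\EE[Y_1^2]},
\]
the second step using pairwise independence and operator monotonicity of $\sqrt{\cdot}$. One then bounds the \emph{operator} $\EE[Y_1^2]=\sum_x(\lambda_x^2/\mu_x)\,\tilde\sigma_x^2$: from $\Pi\le I$ one gets $\tilde\sigma_x^2\le \Pi(\Pi_x\sigma_x\Pi_x)^2\Pi$, hypothesis \eqref{Soft_Covering-constraints4} gives $(\Pi_x\sigma_x\Pi_x)^2\le d^{-1}\Pi_x\sigma_x\Pi_x$, and hypothesis \eqref{Soft_Covering-constraints5} then yields $\sum_x\lambda_x\Pi_x\sigma_x\Pi_x\le\sigma$, so that $\EE[Y_1^2]\le \tfrac{\kappa}{d}\,\Pi\sigma\Pi$. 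Now \eqref{Soft_Covering-constraints3} is exactly what is needed, since $\Tr\sqrt{\Pi\sigma\Pi}=\|\Pi\sqrt\sigma\|_1\le\sqrt D$, which delivers $\sqrt{\kappa D/(Md)}$.
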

\begin{proof}
The proof is provided in Appendix \ref{appx:ProofUniformSoftCovering}
\end{proof}


\section{Useful Lemmas}\label{sec:novelLemmas}
In this section we present a few lemmas which will be used extensively in the sequel. 
\begin{definition}[Pruning Operators] \label{def:pruningOperator} Consider an operator $ A \geq 0 $ acting on Hilbert space $ \mathcal{H}_A. $ We say that a projector $ P $ prunes $ A $ with respect to Identity $ I_A $ on $ \mathcal{H}_A ,$ if $ P $ is a projector on to the non-negative eigenspace of $ I_A-A $.
\end{definition}

\subsection{Pruning Trace Inequality}
\begin{lem} \label{lem:newMarkov0} Consider a random  operator $ X \geq 0 $ acting on a Hilbert space $ \mathcal{H}_A. $ Let $ P $ be a pruning operator for $ X $ with respect to $ I_A $, as in Definition \ref{def:pruningOperator}. Then we have
$$\EE[\Tr{I_A - P}] \leq \EE[\Tr{X}]. $$
\end{lem}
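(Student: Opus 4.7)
The plan is to prove the inequality pointwise (for each realization of the random operator $X$) and then take expectations, since expectation preserves inequalities between real numbers. The whole lemma is essentially a non-commutative Markov-type inequality, and the key observation is that $I_A - P$ projects onto the subspace on which $X$ strictly dominates $I_A$.

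First I would unpack Definition \ref{def:pruningOperator}. Since $P$ projects onto the non-negative eigenspace of $I_A - X$, its complement $I_A - P$ projects onto the strictly negative eigenspace of $I_A - X$. Thus $I_A - P$ commutes with $I_A - X$ and
\begin{equation*}
(I_A - P)(I_A - X)(I_A - P) \leq 0,
\end{equation*}
because on the range of $I_A - P$ the operator $I_A - X$ is (strictly) negative, while the compression vanishes on the orthogonal complement. Rearranging and using $(I_A - P)^2 = I_A - P$ yields the operator inequality
\begin{equation*}
I_A - P \leq (I_A - P)\, X\, (I_A - P).
\end{equation*}

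Next I would take the trace on both sides. By cyclicity of the trace and the projection identity,
\begin{equation*}
\Tr{I_A - P} \leq \Tr{(I_A - P)\, X\, (I_A - P)} = \Tr{(I_A - P) X} = \Tr{X} - \Tr{PX}.
\end{equation*}
Since $X \geq 0$ and $P \geq 0$, we have $\Tr{PX} = \Tr{\sqrt{X}\, P\, \sqrt{X}} \geq 0$, so
\begin{equation*}
\Tr{I_A - P} \leq \Tr{X}.
\end{equation*}
This is a deterministic (pointwise) inequality in the underlying probability space.

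Finally, taking expectations on both sides preserves the inequality, giving $\EE[\Tr{I_A - P}] \leq \EE[\Tr{X}]$, as required. There is no real obstacle here; the only subtlety is noticing that the spectral definition of $P$ immediately produces a two-sided compression of $I_A - X$ that is manifestly non-positive, which is what converts a statement about eigenvalue counts into an operator inequality suitable for trace manipulation.
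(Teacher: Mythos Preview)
Your proof is correct and takes essentially the same approach as the paper: establish the pointwise inequality $\Tr{I_A - P} \leq \Tr{X}$ and then take expectations. The paper's proof is a one-liner that simply asserts this pointwise inequality, whereas you have supplied the spectral argument behind it; your compression $(I_A - P)(I_A - X)(I_A - P) \leq 0$ followed by the trace manipulation is exactly the content the paper leaves implicit.
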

\begin{proof}
The proof follows by noting that $\Tr{I_A - P} \leq \Tr{X}$.
\end{proof}

\begin{remark}
	To demonstrate the significance of this inequality,  we compare it with the popular Operator Markov Inequality \cite{book_quantum}. 
We know from Operator Markov inequality
	\begin{align}
	\PP\left(X \nleq I_A \right) \leq \EE[\Tr{X}]. \nonumber
	\end{align}
	One can observe that $\11_{\{X \nleq I_A\}} \leq \Tr{I_A - P} $. Taking expectation, we obtain
	\begin{align}
		 \PP\left(X \nleq I_A \right)  \leq \EE[\Tr{I_A - P}].\nonumber
	\end{align}
	Moreover, one can also note that $\Tr{I_A - P} \leq \Tr{X}$, and expectation gives
	 \begin{align}
	\EE[\Tr{I_A - P}] \leq \EE[\Tr{X}].\nonumber 
  	\end{align}  
Hence we conclude that the new inequality is indeed tighter than the operator Markov inequality.
\end{remark}

\begin{lem}{(Pruning Trace Inequality)} \label{lem:newMarkov}Consider the above random  operator $ X \geq 0 $ acting on a Hilbert space $ \mathcal{H}_A. $ Further, suppose $\EE[X] \leq (1-\eta){I_A} $ for $ \eta \in (0,1) $. Let $ P $ be a pruning operator for $ X $ with respect to $ I_A $, as in Definition \ref{def:pruningOperator}. Then, 
we have
\begin{align}
\EE[\Tr{I_A - P}]  \leq \frac{1}{\eta}\EE\left[\|X - \EE[X]\|_1\right].\nonumber
\end{align}
\end{lem}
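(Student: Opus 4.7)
The plan is to exploit the defining property of the pruning projector, which relates $I_A-P$ to the strictly positive eigenspace of $X-I_A$, and then use the hypothesis $\mathbb{E}[X]\leq (1-\eta)I_A$ to extract a factor of $\eta$.

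First I would observe that since $P$ projects onto the non-negative eigenspace of $I_A-X$, the complementary projector $I_A-P$ projects onto the (strictly) positive eigenspace of $X-I_A$. In particular, $(I_A-P)(X-I_A)(I_A-P)\geq 0$, and taking the trace (cyclically) gives the pointwise bound
\begin{equation*}
\Tr\{I_A-P\}\leq \Tr\{(I_A-P)X\}.
\end{equation*}
This is the key deterministic inequality driving the whole argument; it replaces the crude $\mathbb{1}_{\{X\nleq I_A\}}\leq \Tr\{X\}$ step used in the Markov-type proof of Lemma \ref{lem:newMarkov0}.

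Next I would split $X$ about its mean, writing
\begin{equation*}
\Tr\{(I_A-P)X\}=\Tr\{(I_A-P)(X-\mathbb{E}[X])\}+\Tr\{(I_A-P)\,\mathbb{E}[X]\}.
\end{equation*}
Using the hypothesis $\mathbb{E}[X]\leq (1-\eta)I_A$ together with $I_A-P\geq 0$, the second term is bounded by $(1-\eta)\Tr\{I_A-P\}$. For the first term I would apply the matrix Hölder inequality $|\Tr\{AB\}|\leq \|A\|_\infty \|B\|_1$ together with $\|I_A-P\|_\infty\leq 1$ to obtain $|\Tr\{(I_A-P)(X-\mathbb{E}[X])\}|\leq \|X-\mathbb{E}[X]\|_1$.

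Finally, taking expectations and substituting these two bounds into the chain from the first step yields
\begin{equation*}
\mathbb{E}[\Tr\{I_A-P\}]\leq \mathbb{E}\!\left[\|X-\mathbb{E}[X]\|_1\right]+(1-\eta)\,\mathbb{E}[\Tr\{I_A-P\}],
\end{equation*}
and rearranging gives the claimed bound with prefactor $1/\eta$. The only real subtlety, and where I would be most careful, is justifying the deterministic inequality in the first step from the definition of the pruning projector; once that is in place, the rest is a one-line split-and-Hölder argument. No concentration or large-deviation machinery is needed, which is precisely what makes this inequality useful in the pairwise-independent setting.
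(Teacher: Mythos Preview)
Your argument is correct and in fact slightly stronger than the statement, since it establishes the pointwise bound $\Tr\{I_A-P\}\leq \tfrac{1}{\eta}\|X-\EE[X]\|_1$ before any expectation is taken. The paper proceeds differently: it observes that $P$ also prunes the rescaled operator $\tilde{X}=\tfrac{1}{\eta}\bigl(X-(1-\eta)I_A\bigr)$ and then invokes Lemma~\ref{lem:newMarkov0} to obtain $\Tr\{I_A-P\}\leq \Tr\{\tilde{X}\}$, after which the expectation and the hypothesis on $\EE[X]$ are applied. Your route---working directly with $\Tr\{(I_A-P)X\}$ and splitting about the mean---is more self-contained: it sidesteps the appeal to Lemma~\ref{lem:newMarkov0} on the shifted operator $\tilde{X}$, which is not necessarily positive and so does not literally meet that lemma's hypothesis. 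Both arguments ultimately rest on the same deterministic fact that $I_A-P$ lies in the strictly positive eigenspace of $X-I_A$, but your split-and-H\"older version makes the role of the assumption $\EE[X]\leq (1-\eta)I_A$ more transparent and delivers the inequality pointwise rather than only in expectation.
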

\begin{proof}
The proof is provided in Appendix \ref{appx:newMarkov}
\end{proof}



 \section{Point-to-point Measurement Compression using Structured Random POVMs}\label{sec:p2p}
Before presenting the proof of Theorem \ref{thm:dist POVM appx}, as a pedagogical first step, we consider the measurement compression problem in the point-to-point setup.
This problem was addressed in \cite{winter}, where the performance limits were derived  using unstructured random POVM ensembles. Here, we redrive the performance limit using random algebraic structured POVM ensembles. Since the algebraic structured codes can only induce a uniform distribution, we consider a collection of cosets of a random linear code for this task. The problem setup is described as follows. An agent (Alice) performs a measurement $M$ on a quantum state $\rho$, and sends a set of classical bits to a receiver (Bob). Bob has access to additional private randomness, and he 
is allowed to use this additional resource to perform any stochastic mapping of the received classical bits.
The overall effect on the quantum state can be assumed to be a measurement which is a concatenation of the POVM Alice performs and the stochastic map Bob implements. This problem serves as a building block toward the proof of Theorem \ref{thm:dist POVM appx}. Formally, the problem is stated as follows.

\subsection{Problem Formulation and Main Result}
\begin{definition}
For a given finite set $\mathcal{Z}$, and a Hilbert space $\mathcal{H}$, a measurement simulation protocol with parameters
$(n,\Theta,N)$ is characterized by   \\ 
\textbf{1)} a collection of codes $\mathcal{C}^{(\mu)}\subseteq \mathcal{W}^n$, for 
$\mu \in[1,N]$, such that $|\mathcal{C}^{(\mu)}| \leq \Theta$, and $\mathcal{W}$, a finite set, is called the code alphabet, \\
\textbf{2)} a collection of Alice's sub-POVMs  $\tilde{M}^{(\mu)},\mu \in [1,N]$ each acting on $\mathcal{H}_{}^{\tensor n}$ and with outcomes in $\mathcal{C}^{(\mu)}$.\\
	\textbf{3)} a collection of Bob's classical stochastic maps $P^{(\mu)}(z^n|w^n)$ for all $w^n\in \mathcal{C}^{(\mu)}$, $ z^n\in \mathcal{Z}^n$ and $\mu \in [1,N]$. \\
	The overall  sub-POVM of this protocol, given by $\tilde{M}$, is characterized by the following operators: 
	\begin{equation}\label{eq:p2p overall POVM opt}
	\tilde{\Lambda}_{z^n}\deq \frac{1}{N}\sum_{\mu=1}^N \sum_{w^n \in \mathcal{C}^{(\mu)}}
	P^{(\mu)}(z^n|w^n)~ \Lambda^{(\mu)}_{w^n}, \quad \forall z^n \in \mathcal{Z}^n,
	\end{equation}
	where $\{\Lambda^{(\mu)}_{w^n}: w^n \in \mathcal{C}^{(\mu)}\}$ is the set of operators corresponding to the sub-POVM $\tilde{M}^{(\mu)}$. Let $\mathcal{C}^{(\mu)}(i)$ denote the $i$th codeword of $\mathcal{C}^{(\mu)}$. 
\end{definition}
In the above definition,  $\Theta$ characterizes the amount of classical bits communicated from Alice to Bob, and the amount of common randomness is determined by  $N$, with $\mu$ being the common randomness bits distributed among the parties. The classical stochastic mappings induced by  $P^{(\mu)}$ represents the action of Bob on the received classical bits.  
In building the code, we use the Unionized Coset Code (UCC)  \cite{pradhanalgebraic} defined below. These codes involve two layers of codes (i) a coarse code and (ii) a fine code. The coarse code is a coset of the linear code and the fine code is the union of several cosets of the linear code.

For a fixed $k \times n$ matrix $G \in \mathbb{F}_p^{k \times n}$ with 
$k \leq n$, and $p$ being a prime number, and a $1 \times n$
vector $B \in \mathbb{F}_p^n$, define the coset code as 
\begin{align}\label{eq:UCC_Code}
    \mathbb{C}(G,B)\deq  \{ x^n: x^n = a^{k} G+B, \mbox{ for some } a^{k} \in \mathbb{F}_p^{k}
\}.
\end{align}
In other words, $\mathbb{C}(G,B)$ is a shift of the row space of
the matrix $G$. The row space of $G$ is a linear code. 
If the rank of $G$ is $k$, then there are $p^{k}$ codewords in the coset
code.

\begin{definition}\label{def:UCC}
An $(n,k,l,p)$ UCC is characterized by a pair $(G,h)$ consisting of a $k\times n$ matrix $G \in \mathbb{F}_p^{k \times n}$, and a mapping $h:\mathbb{F}_p^{l}
\rightarrow \mathbb{F}_p^n$, and the code is the following union:
$ \bigcup_{m\in \FF_p^l}\mathbb{C}(G,h(m))$, where $\mathbb{C}(\cdot,\cdot)$ is defined in \eqref{eq:UCC_Code}.
\end{definition}

\begin{definition}
    Given a finite set $\mathcal{Z}$, and a Hilbert space $\mathcal{H}$, an 
    $(n,\Theta,\kappa,N,p)$ UCC-based measurement simulation protocol is a pair of 
    $(n,\Theta,N)$ measurement simulation protocol and a collection of $N$ UCCs with parameters $(n,k,l,p)$  characterized by $\{(G,h^{(\mu)})\}_{\mu \in [1,N]}$
    such that (i) the code alphabet of the protocol $\mathcal{W} \subseteq \mathbb{F}_p$ (with suitable relabeling), (ii) $\kappa=p^{k}$, $\Theta=p^l$, and 
    (iii) for all $m \in \mathbb{F}_p^l$, we have $\mathcal{C}^{(\mu)}(m) \in \{a^{k}G +h^{(\mu)}(m):a^{k} \in \mathbb{F}_p^{k}\}$. 
    \end{definition}

\begin{definition}
    The UCC grand ensemble  is the ensemble of 
    $N$ UCCs where $G$, and $\{h^{(\mu)}\}_{\mu \in [1,N]}$ are chosen randomly, independently and uniformly, where the latter is chosen from the set of all mappings with replacement.  
\end{definition}

\begin{definition}
	Given a POVM $M$ acting on  $\mathcal{H}_{}$, and a density operator $\rho \in \mathcal{D}(\mathcal{H}_{})$, a tuple $(R,R_1,C,p)$ is said to be achievable using the grand UCC ensemble, if for all $\epsilon>0$ and for all sufficiently large $n$, there exists an ensemble of UCC-based measurement simulation protocols with parameters  $(n, \Theta, \kappa,N,p)$ (based on the UCC grand ensemble) such that 
	their overall sub-POVM $\tilde{M}$  is $\epsilon$-faithful to $M^{\tensor n}$ with respect to $\rho^{\tensor n}$ in the expected sense:
	\begin{align*}
	    \mathbb{E}\! &\left[ \!\sum_{z^n}\! \left\|\!
	\sqrt{\rho^{\tensor n}} (\Lambda_{z^n}\! -\!\tilde{\Lambda}_{z^n} ) \sqrt{\rho^{\tensor n}} \right\|\!+\!\tr \{I-\sum_{z^n} \tilde{\Lambda}_{z^n} \} \right] \leq \epsilon,
	\end{align*}
	where the expectation is with respect to the ensemble,
and
	\begin{align*}
	\frac{1}{n}\log_2 \Theta \leq R+\epsilon, \;
	\left|\frac{1}{n} \log \kappa-R_1 \right| \leq \epsilon,;\;
	\frac{1}{n}\log_2 N \leq C+\epsilon.
	\end{align*}
Define $\mathscr{R}_{\mbox{UCC}} $ as $
    \mathscr{R}_{\mbox{UCC}} \deq \{(R,R_1,C,p): (R,R_1,C,p) $  is achievable using the UCC grand ensemble\}.

\end{definition}

\begin{remark}
The appearance of the modulus in the second constraint needs justification. Note that $R$ is the rate of transmission of information from Alice to Bob and $C$ is the rate of the common information shared between them. So if $(R,R_1,C,p)$ is achievable, then it is clear that any $(\tilde{R},\tilde{C})$ is also achievable if $\tilde{R} \geq R$ and $\tilde{C} \geq C$. However 
$R_1$ is a parameter of the UCC grand ensemble, and there is no natural order on $R_1$, i.e., it does not naturally follows that 
$(R,\tilde{R}_1,C,p)$ is achievable for all $\tilde{R}_1 \geq R_1$. 
\end{remark}

The following theorem characterizes the achievable rate region which characterizes the asymptotic performance of the UCC grand ensemble.

\begin{theorem}\label{thm:p2pTheorem}
For any density operator $\rho\in \mathcal{D}(\mathcal{H}_{})$ and any POVM ${M} \deq \{{\Lambda}_z\}_{z \in \mathcal{Z}}$ acting on the Hilbert space $\mathcal{H}_{}$, a tuple $(R,R_1,C,p)$ is achievable using the UCC grand ensemble,
i.e., $(R,R_1,C,p) \in \mathscr{R}_{\mbox{UCC}}$
if there exist a POVM $\bar{M}_{} \deq \{\bar{\Lambda}_w^{}\}_{w \in \mathcal{W}} $, with $|\mathcal{W}|\leq p$, and a stochastic map $P_{Z|W}:\mathcal{W} \rightarrow\mathcal{Z}$ such that
\[
\Lambda_z = \sum_{w\in\mathcal{W}}P_{Z|W}(z|w)
\bar{\Lambda}_w, \quad \forall z \in \mathcal{Z},
\]
and 
\begin{align}\label{eq:ratep2pTheorem}
 	R_1+R &\geq I(W;R)_{\sigma} - S(W)_{\sigma} + \log{p}  ,\\
	R_1+R + C  &\geq I(W; RZ)_{\sigma} - S(W)_{\sigma} + \log{p} ,\\
 	0 \leq R_1  &\leq \log{p} - S(W)_{\sigma},\\
   C &\geq 0, 
\end{align}
where  $\sigma^{RWZ} \deq \sum_{w,z}\sqrt{\rho}\bar{\Lambda}^{}_{w}\sqrt{\rho} \tensor P_{Z|W}(z|w)\ketbra{w}\tensor\ketbra{z}, $ for some orthogonal sets $ \{\ket{w}\}_{\eleInSet{w}{W}}$ and $ \{\ket{z}\}_{\eleInSet{z}{Z}}.$ 
\end{theorem}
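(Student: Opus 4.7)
\medskip

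\noindent\textbf{Proof plan.}
My plan is to build an approximating sub-POVM from the UCC grand ensemble and prove faithful simulation by a covering-and-pruning argument. This adapts the classical likelihood-encoder template to quantum measurement compression, using the Covering Lemma (Lemma~\ref{lem:Change Measure Soft Covering Variance Based}) in place of the Operator Chernoff bound, which is unavailable because UCC codewords are only pairwise independent.

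\emph{Code, change of measure, and candidate operators.} For each $\mu\in[1,N]$, draw $G\in\FF_p^{k\times n}$ and $h^{(\mu)}:\FF_p^l\to\FF_p^n$ independently and uniformly, so that the $p^{k+l}$ codewords $W^{n,(\mu)}(a,m)=aG+h^{(\mu)}(m)$ are marginally uniform on $\FF_p^n$ and pairwise independent; take $k\log p/n\approx R_1$ and $l\log p/n\approx R$. Because the UCC forces a uniform single-letter law, I perform a change of measure with reference $\mu_{w^n}=p^{-n}$ and target $\lambda_{w^n}=p_W^{\tensor n}(w^n)\mathbbm{1}_{\{w^n\in\mathcal{T}^{(n)}_\delta\}}$, so that on the typical set $\lambda_{w^n}/\mu_{w^n}\leq \kappa := 2^{n(\log p - S(W)_\sigma+\delta)}$. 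Using the typical projector $\Pi$ of $\sigma^R=\rho$ and conditional typical projectors $\Pi_{w^n}$, form pinched density operators $\tilde{\sigma}_{w^n}$ so the hypotheses \eqref{Soft_Covering-constraints} hold with $D=2^{n(S(R)_\sigma+\delta)}$ and $d=2^{n(S(R|W)_\sigma-\delta)}$. The candidate POVM element for bin $m$ in codebook $\mu$ is built from the coset $\{W^{n,(\mu)}(a,m):a\in\FF_p^k\}$ via the likelihood-weighted sum $\Gamma^{(\mu)}_m=\tfrac{1}{p^{k+l}}\sum_{a}(\lambda/\mu)_{W^{n,(\mu)}(a,m)}\,\bar{\Lambda}^{\tensor n}_{W^{n,(\mu)}(a,m)}$, and Bob's stochastic map is set to $P^{(\mu)}(z^n|m)=P_{Z|W}^{\tensor n}(z^n|\mathcal{C}^{(\mu)}(m))$ for a chosen coset representative $\mathcal{C}^{(\mu)}(m)$.

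\emph{Covering and pruning.} The analytical core is a double invocation of Lemma~\ref{lem:Change Measure Soft Covering Variance Based}. The first, on the reference system $R$ alone, bounds $\expec\|\sigma^R - \tfrac{1}{p^{k+l}}\sum_{a,m}(\lambda/\mu)\tilde{\sigma}_{W(a,m)}\|_1\leq \sqrt{\kappa D/(p^{k+l}d)}$, which vanishes exactly when $R+R_1\geq I(W;R)_\sigma-S(W)_\sigma+\log p$. The second enlarges the reference to the classical-quantum extension $RZ$ through Bob's post-processing $P_{Z|W}^{\tensor n}$, and folds the common-randomness index $\mu\in[N]$ into the empirical average so that $p^{k+l}$ is effectively replaced by $Np^{k+l}$, yielding $R+R_1+C\geq I(W;RZ)_\sigma-S(W)_\sigma+\log p$. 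Since $\sum_m\Gamma^{(\mu)}_m$ need not be dominated by $I$, I form the sub-POVM $\tilde{\Lambda}^{(\mu)}_m:=P^{(\mu)}\Gamma^{(\mu)}_m P^{(\mu)}$, where $P^{(\mu)}$ is the pruning projector of Definition~\ref{def:pruningOperator} for $\sum_m\Gamma^{(\mu)}_m$. The hypothesis $\expec[\sum_m\Gamma^{(\mu)}_m]\leq (1-\eta)I$ required by the Pruning Trace Inequality (Lemma~\ref{lem:newMarkov}) then converts the covering-norm estimate above into a bound on the pruning deficit $\expec[\Tr\{(I-P^{(\mu)})\rho^{\tensor n}\}]$; ensuring this hypothesis is where the upper bound $R_1\leq \log p - S(W)_\sigma$ plays its role, by keeping the aggregate coset mass well-behaved on the typical subspace.

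\emph{Assembly, derandomization, and principal obstacle.} Putting the pieces together via the triangle inequality, Lemma~\ref{lem:Separate} to strip the reference when needed, and gentle-measurement estimates for the $\Pi,\Pi_{w^n}$ pinchings, the expected faithful-simulation error splits into a typicality term, the covering terms from the two invocations above, and the pruning term, each vanishing with $n$ under the stated rate conditions; a Markov-style selection over $(G,\{h^{(\mu)}\})$ then extracts a deterministic UCC-based protocol achieving the desired $\epsilon$. The principal difficulty is the pairwise-only independence of UCC codewords, which forces the trace-norm covering bound via second-order analysis rather than an operator-inequality concentration, so the subsequent pruning step must ingest a trace-norm error rather than an operator-norm one, hence the need for the Pruning Trace Inequality. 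A secondary subtlety is the double role of $R_1$: it appears in the covering lower bound on $R+R_1$ and in the pruning upper bound $R_1\leq \log p - S(W)_\sigma$, so the argument must pick $R_1$ within this window to meet all constraints of \eqref{eq:ratep2pTheorem} simultaneously.
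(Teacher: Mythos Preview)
Your high-level architecture is right --- UCC codebook, change of measure with $\kappa=2^{n(\log p - S(W)_\sigma+\delta)}$, pairwise-independent covering via Lemma~\ref{lem:Change Measure Soft Covering Variance Based} on $R$ and then on $RZ$, and pruning via Lemma~\ref{lem:newMarkov}. But there is a genuine gap in how you account for the constraint $R_1\leq \log p - S(W)_\sigma$.

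You claim this upper bound is what guarantees the pruning hypothesis $\EE\bigl[\sum_m\Gamma_m^{(\mu)}\bigr]\leq (1-\eta)I$. It is not. Compute that expectation: each codeword $W^{n,(\mu)}(a,m)$ is uniform on $\FF_p^n$, so the expected aggregate $\EE\bigl[\sum_{a,m}\tfrac{1}{p^{k+l}}(\lambda/\mu)_{W(a,m)}\,\cdot\bigr]$ collapses to $\sum_{w^n}\lambda_{w^n}\,\cdot$ and is \emph{independent of $k$} (hence of $R_1$). In the paper the pruning hypothesis is secured by the explicit $(1+\eta)^{-1}$ prefactor in the definition of $\alpha_{w^n}$ (see \eqref{eq:A_w}), together with $\sum_{w^n}\lambda_{w^n}\tilde\rho_{w^n}\leq \Pi_\rho\rho^{\tensor n}\Pi_\rho$; you omit this normalization and then try to rescue the hypothesis with a rate condition that has no leverage on it.

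The constraint $R_1\leq \log p - S(W)_\sigma$ actually arises from a step you have not written down: the \emph{binning} (packing) error. Bob receives only the coset index $m$ and must pick a representative; the decoder $F^{(\mu)}$ succeeds only if the coset $\{aG+h^{(\mu)}(m):a\in\FF_p^k\}$ contains a unique $\delta$-typical word. The expected number of spurious typical words in a coset is $\approx p^k\cdot 2^{nS(W)_\sigma}/p^n$, and bounding the resulting trace-norm error (the term $S_2$ in the paper, Proposition~\ref{prop:Lemma for p2p:S_2}) is exactly what forces $\tfrac{k}{n}\log p < \log p - S(W)_\sigma$. Your proposal mentions ``a chosen coset representative $\mathcal{C}^{(\mu)}(m)$'' but never analyses the probability that this choice differs from the encoder's intended $w^n$; without that analysis the upper bound on $R_1$ is unaccounted for and the proof is incomplete. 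Add the binning-error term to your triangle-inequality decomposition and derive $R_1\leq\log p - S(W)_\sigma$ there; and insert the $(1+\eta)^{-1}$ normalization so that the pruning hypothesis holds unconditionally.
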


\begin{remark}
By choosing $R_1 = \log{p} - S(W)_{\sigma}$, we recover the rate region of Wilde et. al \cite[Theorem 9]{wilde_e}.
\end{remark}

\subsection{Proof of Theorem \ref{thm:p2pTheorem} Using UCC Code Ensemble}\label{sec:p2pProof}

As stated earlier, the main objective of proving this theorem is to build a framework for the main theorem of the paper (Theorem \ref{thm:dist POVM appx}). In doing so, we observe that the structured POVMs constructed below are only pairwise independent. Since the results in \cite{atif2019faithful} are based on the assumption that approximating POVMs are all mutually independent,  the proof below becomes significantly different from \cite{atif2019faithful}. 

Suppose there exist a POVM $\bar{M}\deq \{\bar{\Lambda}_w\}_{w\in \mathcal{W}}$ and a stochastic map $ P_{Z|W}:\mathcal{W} \rightarrow \mathcal{Z} $, such that $ M \deq \{\Lambda_z\}_{\eleInSet{z}{Z}}$ can be decomposed as 
\begin{align}\label{eq:POVMDecomp}
\Lambda_z \deq \sum_{\eleInSet{w}{W}}P_{Z|W}(z|w)\bar{\Lambda}^{}_w , \quad \forall\eleInSet{z}{Z}.
\end{align}
We generate the canonical ensemble corresponding to $\bar{M}_{}$ as
\begin{align}\label{eq:canonicalEnsemble}
\lambda^{}_w &\deq \tr\{\bar{\Lambda}^{}_w \rho\}, \quad 
\hat{\rho}^{}_w \deq \frac{1}{\lambda^{}_w}\sqrt{\rho}  \bar{\Lambda}^{}_w \sqrt{\rho}. \end{align}
Let $\TDeltaN(W)$ denote a $\delta$-typical set associated with the probability distribution induced by $ \{\lambda_w^{}\}_{\eleInSet{w}{W}}, $ corresponding to a random variable $W$.
Let $\Pi_{\rho}$ denote the $\delta$-typical projector (as in \cite[Def. 15.1.3]{Wilde_book}) corresponding to the density operator $\rho \deq \sum_{\eleInSet{w}{W}}\lambda_w^{} \hat{\rho}^{}_w$, and $ \Pi_{w^n}$ denote the strong conditional typical projector (as in \cite[Def. 15.2.4]{Wilde_book}) corresponding to the canonical ensemble $\{\lambda_w, \hat{\rho}_w\}_{w\in \mathcal{W}}$.
 For each $w^n\in \TDeltan(W)$, define 
\begin{equation}
\rhotildwA \deq \PiRho \PiwA \rhohatwA \PiwA \PiRho,\nonumber
\end{equation}
and $\rhotildwA = 0,$ for $w^n \notin \TDeltan(W)$, with $\rhohatwA \deq \bigotimes_{i} \hat{\rho}^{}_{w_i}$.

\subsubsection{Construction of Structured POVMs}
We now construct random structured POVM elements. Fix a block length $n>0$, a positive integer $ N, $ and a finite field $ \FF_p $ with $p \geq |\mathcal{W}|$. Without loss of generality, we assume $\mathcal{W} \deq \{0,1,\cdots ,|\mathcal{W}|-1\}$. Furthermore, we assume $\lambda_w =0 $ for all $ |\mathcal{W}|-1 < w < p$. From now on, we assume  that $W$ takes values in $\FF_p$ with this distribution. 
Let $ \mu \in [1,N]$ denote the common randomness shared between the encoder and decoder. 
In building the code, we use the UCCs  \cite{pradhanalgebraic} as defined in Definition \ref{def:UCC} . 




 For every $ \mu \in [1,N] $, consider a UCC $ (G^{},h^{(\mu)}) $ with parameters $ (n,k,l,p) $.
For each $ \mu$, the generator matrix $ G^{} $ along with the function $ h^{(\mu)} $ generates $ p^{k + l} $ codewords. Each of these codewords are characterized by a triple $ (a,i,\mu)$, where $ a \in \FF^{k}_p $ and $ i \in \FF^{l}_p$  correspond to the coarse code and the coset indices, respectively. Let $ \WCodeword $ denote the codewords associated with the encoder (Alice),  generated using the above procedure, where 
\begin{align}
\WCodeword = aG^{} + h^{(\mu)}(i). \label{def:codewords}
\end{align}
Now, construct the operators
\begin{align}\label{eq:A_w}
\bar{A}^{(\mu)}_{w^n} &\deq  \alpha_{w^n} \bigg(\sqrt{\rho^{\tensor n}}^{-1}\rhotildwA\sqrt{\rho^{\tensor n}}^{-1}\bigg)\quad   \nonumber \\ \quad \alpha_{w^n}& \deq  \frac{1}{(1+\eta)}\frac{p^n\lambda_{w^n}}{p^{k+l}},
\end{align}
 with $\eta \in (0,1)$ being a parameter to be determined. Note that, following the definition of $\rhotildwA$, we have $ \bar{A}^{(\mu)}_{w^n} = 0$ for $w^n \notin \TDeltan(W).$ Having constructed the operators $ \bar{A}^{(\mu)}_{w^n} $, we normalize these operators, so that they constitute a valid sub-POVM. To do so, we  define
\begin{align}
\Sigma^{(\mu)} \deq \sum_{w^n}\gammaWCoeff\bar{A}^{(\mu)}_{w^n}, \;  \gammaWCoeff & \deq |\{(a,i): \WCodeword = w^n\}|.\nonumber
\end{align} 
Now, we define $ \CutOff $ as the pruning operator for $\Sigma^{(\mu)} $ with respect to $\PiRho$ using Definition \ref{def:pruningOperator}.
Note that, the pruning operator $\CutOff$ depends on the pair $(G^{},h^{(\mu)})$. 
For ease of analysis, the subspace of $ \CutOff $ is restricted to $ \PiRho $ and hence $\CutOff$ is a projector onto a subspace of $\Pi_\rho$. Using these pruning operators, for each $\mu \in [1,N]$, construct the sub-POVM $\tilde{M}^{( n, \mu)}$ as  \begin{align}
\tilde{M}^{( n, \mu)} & \deq \{\gammaWCoeff A^{(\mu)}_{w^n}\}_{w^n \in \mathcal{W}^n},\quad \label{eq:p2pPOVM-14}
\end{align}   
where $ A^{(\mu)}_{w^n} \deq \CutOff\bar{A}^{(\mu)}_{w^n}\CutOff$. 
Further, using $ \CutOff $  we have $ \sum_{w^n}\gammaWCoeff A^{(\mu)}_{w^n} = \CutOff\Sigma^{(\mu)}\CutOff \leq \PiRho\leq I, $ and thus $ \tilde{M}^{( n, \mu)}$ is a valid sub-POVM for all $ \mu \in[1,N] $. Moreover, the collection $ \tilde{M}^{( n, \mu)}$ is completed using the operators $I - \sum_{w^n \in \mathcal{W}^n}\gammaWCoeff A^{(\mu)}_{w^n}$.

\subsubsection{Binning of POVMs}\label{sec:p2pPOVM binning}
The next step is to bin the above constructed sub-POVMs. Since, UCC is a union of several cosets, we associate a bin to each coset, and hence place all the codewords of a coset in the same bin. For each $i\in \FF_p^{l}$, let $\mathcal{B}^{(\mu)}(i) \deq \mathbb{C}(G^{},h^{(\mu)}(i))$ denote the $i${th} bin.
Further, for all  $i\in \FF_p^{l}$, we define 
\begin{align*}
\Gamma^{A, ( \mu)}_i &\deq  \sum_{w^n \in \mathcal{W}^n}\sum_{a \in \FF_p^{k}} A^{(\mu)}_{w^n}\mathbbm{1}_{\{aG^{}+h^{(\mu)}(i) = w^n\}}.
\end{align*} 
Using these operators, we form the following collection:
\begin{align}
M^{( n, \mu)} \deq  \{\Gamma^{A,  (\mu)}_i\}_{ i \in \FF_p^{l} }. \nonumber
\end{align}
Note that if the collection $\tilde{M}^{( n, \mu)}$ is a sub-POVM for each $\mu \in [1,N]$, then so is the collection $M^{( n, \mu)}$,  which is due to the  relation $ \sum_{i\in \FF_p^{l}} \Gamma^{A, ( \mu)}_i=\sum_{w^n \in \mathcal{W}^n}\gammaWCoeff A^{(\mu)}_{w^n} \leq I. $
To complete $ M^{( n, \mu)}$, we define $\Gamma^{A, ( \mu)}_0$ as $\Gamma^{A, ( \mu)}_0=I-\sum_i \Gamma^{A, ( \mu)}_i$ \footnote{{Note that $\Gamma^{A, ( \mu)}_0=I-\sum_i \Gamma^{A, ( \mu)}_i = I - \sum_{w^n \in \TDeltaN(W)}\gammaWCoeff A^{(\mu)}_{w^n}$.}}. Now, we intend to use the completions $[M^{( n, \mu)}]$ as the POVM for the encoder.

\subsubsection{Decoder mapping }
We create a decoder which, on receiving the classical bits from the encoder, generates a sequence $W^n \in \FF^n_p$ as follows. 
The decoder first creates a set $D^{(\mu)}_{i}$ and a function $F^{(\mu)}$  defined as
\begin{align}
D^{(\mu)}_{i}  &\deq \big\{ \tilde{a} \in \FF_p^{k}: \tilde{a}G^{}+ h^{(\mu)}(i)  \in \TDeltaN(W)
  \big\}  \quad \text{ and  }\nonumber\\
F^{(\mu)}(i)& \deq
 \begin{cases}
	 \tilde{a}G^{}+ h^{(\mu)}(i)  &\quad \text{ if  } D^{(\mu)}_{i} \equiv \{\tilde{a}\} \\
	w^n_0 &\quad \text{ otherwise  }, \label{def:Fmu}
\end{cases}
\end{align}
{where $w_0^n$ is an arbitrary sequence in  $ \FF_p^n \backslash \TDeltaN(W)  $}.
Further, $F^{(\mu)}(i)=w_0^n$ for $i=0$. Given this and the stochastic processing $ P_{Z|W} $, we obtain the approximating sub-POVM $ \hat{M}^{(n)}_{} $
 with the following operators. 
\begin{align*}
\hat{\Lambda}_{z^n}^{} \!\deq\! \frac{1}{N}\hspace{-3pt}\sum_{\mu=1}^{N}\sum_{w^n \in \FF_p^n}\sum_{i:F^{(\mu)}(i)=w^n}
\hspace{-16pt}\Gamma^{A, ( \mu)}_i P^n_{Z|W}(z^n|w^n), ~ \forall z^n\in \mathcal{Z}^n. 
\end{align*}
The generator matrix $G^{}$ and the function $h^{(\mu)}$ are chosen randomly uniformly and independently.

\subsubsection{Trace Distance}
In what follows, we show that $\hat{M}_{}^{(n)}$ is $\epsilon$-faithful to ${M}_{}^{\tensor n}$ with respect to $\rho^{\tensor n}$ (according to Definition \ref{def:faith-sim}), where $\epsilon>0$ can be made arbitrarily small. More precisely, using \eqref{eq:POVMDecomp}, we show that, $ \EE[K] \leq \epsilon, $ where 
\begin{align}\label{eq:actual trace dist}
{K} & \deq \sum_{z^n} \left\| \sum_{w^n}\sqrt{\rho_{}^{\tensor n}} \bar{\Lambda}^{}_{w^n}\sqrt{\rho_{}^{\tensor n}} P^n_{Z|W}(z^n|w^n) \right. \nonumber\\ & \hspace{1.6in}
\left. - \sqrt{\rho_{}^{\tensor n}}\hat{\Lambda}_{z^n}^{}\sqrt{\rho_{}^{\tensor n}}\right\|_1,
\end{align}
where the expectation is with respect to the codebook generation.

\noindent \textbf{Step 1: Isolating the effect of error induced by not covering}\\ Consider the second term within $ {K} $, which can be written as
\begin{align}
\sqrt{\rho_{}^{\tensor n}}\hat{\Lambda}_{z^n}^{}\sqrt{\rho_{}^{\tensor n}} &= \frac{1}{N}\sum_{\mu}\sum_{i} \sqrt{\rho_{}^{\tensor n}}\Gamma^{A, ( \mu)}_i\sqrt{\rho_{}^{\tensor n}} \nonumber\\
& \hspace{0.2in} \times P^n_{Z|W}(z^n|F^{(\mu)}(i))\underbrace{\sum_{w^n}\!\mathbbm{1}_{\{F^{(\mu)}(i) = w^n\}}}_{=1} \nonumber \\\vspace{-20pt}
& = T + \widetilde{T}, \nonumber
\end{align}
where \begin{align}
 T \deq & \frac{1}{N}\sum_{\mu}\sum_{i>0} \sqrt{\rho_{}^{\tensor n}}\Gamma^{A, ( \mu)}_i \sqrt{\rho_{}^{\tensor n}} P^n_{Z|W}(z^n|F^{(\mu)}(i)), \nonumber \\
 \widetilde{T} \deq &\frac{1}{N}\sum_{\mu} \sqrt{\rho_{}^{\tensor n}}\Gamma^{A, ( \mu)}_0\sqrt{\rho_{}^{\tensor n}} P^n_{Z|W}(z^n|w^{n}_{0}). \nonumber 
\end{align}
Hence, we have $K \leq S+ \widetilde{S},$
where
\begin{align}
    S \deq \sum_{z^n} \norm{ \sum_{w^n}\sqrt{\rho_{}^{\tensor n}} \bar{\Lambda}^{}_{w^n} \sqrt{\rho_{}^{\tensor n}}P^n_{Z|W}(z^n|w^n) - T }_1, \label{eq:p2pdef_S}
\end{align}
and $\widetilde{S} \deq \sum_{z^n}\|\widetilde{T}\|_1$. Note that $\widetilde{S}$ captures the error induced by not covering the state $\rho_{}^{\tensor n}.$
We further bound $\widetilde{S}$ as 
\begin{align}
\widetilde{S} & \leq  \frac{1}{N}\sum_{\mu} \sum_{z^n}P^n_{Z|W}(z^n|w^{n}_{0})\left\|  \sqrt{\rho^{\tensor n}}\Gamma^{A, ( \mu)}_0\sqrt{\rho^{\tensor n}}  \right\|_1 \nonumber \\
& \leq  \frac{1}{N}\sum_{\mu}\left\| \sqrt{\rho^{\tensor n}} (I - \sum_{w^n} \gammaWCoeff A_{w^n}^{(\mu)}) \sqrt{\rho^{\tensor n}} \right\|_1 \nonumber \\
& \leq \frac{1}{N}\sum_{\mu}\left\| \sum_{w^n}\lambdawA\rhohatwA - \sum_{w^n}\sqrt{\rho^{\tensor n}} \gammaWCoeff \bar{A}^{(\mu)}_{w^n}\sqrt{\rho^{\tensor n}} \right\|_1 \nonumber\\
& \hspace{25pt} + \frac{1}{N}\sum_{\mu}\left\| \sum_{w^n}\sqrt{\rho^{\tensor n}} \gammaWCoeff \left(\bar{A}^{(\mu)}_{w^n} - A^{(\mu)}_{w^n}\right)\sqrt{\rho^{\tensor n}} \right\|_1 \nonumber \\
& \leq \widetilde{S}_1 +  \widetilde{S}_2,  \nonumber
\end{align}
where
\begin{align}
    \widetilde{S}_1 &\deq \frac{1}{N}\sum_{\mu}\left\| \sum_{w^n}\lambdawA\rhohatwA - \sum_{w^n}\sqrt{\rho^{\tensor n}} \gammaWCoeff \bar{A}^{(\mu)}_{w^n}\sqrt{\rho^{\tensor n}} \right\|_1, \nonumber\\
    \widetilde{S}_2 &\deq  \frac{1}{N}\sum_{\mu}\sum_{w^n}\left\| \sqrt{\rho^{\tensor n}} \gammaWCoeff \left(\bar{A}^{(\mu)}_{w^n} - A^{(\mu)}_{w^n}\right)\sqrt{\rho^{\tensor n}} \right\|_1. \nonumber
\end{align}
To provide a bound for the term $\widetilde{S}_1$, we
(i) develop a n-letter version of Lemma \ref{lem:Change Measure Soft Covering Variance Based} and (ii) provide a proposition bounding the term corresponding to $\widetilde{S}_1$, using this n-letter lemma.
\begin{lem}\label{lem:nf_SoftCovering} Let $\{\lambda_{w},\theta_{w}\}_{w\in \mathcal{W}}$ be an ensemble, with  $\theta_{w} \in \mathcal{D}(\mathcal{H})$ for all $w \in \mathcal{W}$, 	$\mathcal{W} \subseteq \FF_p$ for some finite prime $p$. 
Then, for any $\epsilon_c \in (0,1)$, and for any $\eta, \delta \in (0,1)$ sufficiently small, and  any $n$ sufficiently large, we have 
	\begin{align}
	\EE&\bigg[\bigg\|\sum_{w^n}\lambda_{w^n}^{}\theta_{w^n} - \frac{p^n}{p^{k+l}}\frac{1}{N'}\sum_{\mu=1}^{N'}\sum_{w^n}\sum_{a,m}
	 \frac{\lambda_{w^n}}{(1+\eta)}^{}  \nonumber \\  & \hspace{1.15in}\times  \theta_{w^n}\mathbbm{1}_{\{W^{n,(\mu)}(a,m) = w^n\}}\bigg\|_1\bigg]\leq \epsilon_c, \label{eq:nf_SoftCovering_Term}
	\end{align}  
	if  $\left(\frac{k+l}{n}\right)\log{p}+\frac{1}{n}\log{N'} > I(W;R)_{\sigma_{\theta}} - S(W)_{\sigma_{\theta}} + \log{p}$, where $ \theta_{w^n} \deq \bigtensor_{i=1}^n\theta_{w_i}  $ and $ \lambda_{w^n} \deq \Pi_{i=1}^n\lambda_{w_i} $, $ \sigma_{\theta}^{RW} \deq \sum_{w \in \mathcal{W}}\lambda_{w}^{}\theta_w\tensor \ketbra{w}$, for some orthogonal set $ \{\ket{w}\}_{\eleInSet{w}{W}}, $ and
	$\{W^{n,(\mu)}(a,m) : a \in \FF_p^{k}, m \in \FF_p^{l},\mu \in [2^{nC}]\}$ are as defined in \eqref{def:codewords}, with $G^{}$ and $h^{(\mu)}$ generated randomly uniformly and independently.
\end{lem}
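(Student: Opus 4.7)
The plan is to reduce Lemma~\ref{lem:nf_SoftCovering} to the one-shot covering lemma (Lemma~\ref{lem:Change Measure Soft Covering Variance Based}) by lifting the ensemble to $n$-letters and identifying all the quantities. First I set $\bar{\mathcal{X}} = \FF_p^n$ and take the reference distribution $\mu_{w^n} = p^{-n}$ to be uniform; the target ensemble $\{\lambda_{w^n},\theta_{w^n}\}$ is restricted to the strongly typical set $\mathcal{X} = \TDeltan(W)$, so that for $w^n \in \mathcal{X}$ one has $\lambda_{w^n}/\mu_{w^n} = p^n \lambda_{w^n} \leq 2^{n(\log p - S(W)_{\sigma_\theta} + \delta)} =: \kappa$. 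Choose $\Pi$ to be the weak typical projector for $(\sigma_\theta^R)^{\otimes n}$, with $\sigma_\theta^R \deq \sum_w \lambda_w \theta_w$, and $\Pi_{w^n}$ to be the conditional typical projector of $\theta_{w^n}$. Standard properties of typical projectors then deliver hypotheses \eqref{Soft_Covering-constraints1}--\eqref{Soft_Covering-constraints5} with $d = 2^{n(S(R|W)_{\sigma_\theta} - \delta)}$ and $D = 2^{n(S(R)_{\sigma_\theta} + \delta)}$, where $R$ denotes the Hilbert space on which $\theta_w$ acts.

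Next, I verify the pairwise independence of the UCC ensemble so that the one-shot lemma applies. Because $G \in \FF_p^{k \times n}$ and each $h^{(\mu)}:\FF_p^l \to \FF_p^n$ are chosen uniformly and independently, a short calculation shows that for any two distinct indices $(a_1,m_1,\mu_1) \neq (a_2,m_2,\mu_2)$, the codewords $W^{n,(\mu_1)}(a_1,m_1)$ and $W^{n,(\mu_2)}(a_2,m_2)$ are independent and each marginally uniform on $\FF_p^n$. The nontrivial case is $\mu_1=\mu_2$ with $m_1=m_2$ and $a_1 \neq a_2$, where $(a_1-a_2)G$ is uniform on $\FF_p^n$ and independent of the common shift $h^{(\mu_1)}(m_1)$. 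Setting $M = N' p^{k+l}$ and invoking Lemma~\ref{lem:Change Measure Soft Covering Variance Based} then gives
\begin{align*}
\EE\Big\|\!\sum_{w^n \in \TDeltan} \!\lambda_{w^n}\theta_{w^n} - \tfrac{p^n}{N' p^{k+l}} \!\!\sum_{\mu,a,m}\! \lambda_{W^{n,(\mu)}(a,m)}\theta_{W^{n,(\mu)}(a,m)}\Big\|_1 \!\leq\! \sqrt{\tfrac{\kappa D}{Md}} + 2\delta(\epsilon),
\end{align*}
where $\sqrt{\kappa D/(Md)}$ equals $2^{(n/2)[\log p - S(W)_{\sigma_\theta} + I(W;R)_{\sigma_\theta} + 3\delta] - (1/2)\log M}$, which is driven below $\epsilon_c/4$ by the rate hypothesis
$\tfrac{k+l}{n}\log p + \tfrac{1}{n}\log N' > I(W;R)_{\sigma_\theta} - S(W)_{\sigma_\theta} + \log p$.

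Finally, I account for two residual discrepancies between what the covering lemma yields and what the statement asks. (i) The statement replaces the empirical average by $1/(1+\eta)$ times that empirical average, which by triangle inequality introduces an extra term of at most $\tfrac{\eta}{1+\eta}\EE[\|X\|_1]$, where $X$ is the (positive) empirical operator. Since $X \geq 0$, $\|X\|_1 = \Tr X$ and $\EE[\Tr X] = \sum_{w^n \in \TDeltan} \lambda_{w^n} \leq 1$, so this contribution is $\leq \eta$. (ii) Extending the first sum from $\TDeltan(W)$ back to all $w^n \in \FF_p^n$ adds at most $2\Pr(W^n \notin \TDeltan(W))$ to the trace distance, which vanishes for large $n$ by the law of large numbers on typical sequences. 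Choosing $\delta, \eta$ small and $n$ large makes the total error below any prescribed $\epsilon_c$.

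The main obstacle is purely bookkeeping rather than conceptual: one has to ensure that the family of vanishing parameters $\{\delta, \eta, \epsilon\}$, which arise respectively from the projector rank bounds, the change-of-normalization factor, and the gentle-operator-type estimates in Lemma~\ref{lem:Change Measure Soft Covering Variance Based}, can be chosen simultaneously so that the cumulative error sits below $\epsilon_c$ while the asymptotic rate threshold remains exactly $I(W;R)_{\sigma_\theta} - S(W)_{\sigma_\theta} + \log p$. The pairwise-independence verification is straightforward but indispensable, since the classical mutually-independent covering argument is not available here.
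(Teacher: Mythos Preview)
Your proposal is correct and follows essentially the same route as the paper's own proof. Both arguments (i) restrict the ensemble to the $\delta$-typical set $\TDeltan(W)$, (ii) invoke Lemma~\ref{lem:Change Measure Soft Covering Variance Based} with $\mu_{w^n}=p^{-n}$, $\Pi=\Pi_\theta$, $\Pi_{w^n}=\Pi^\theta_{w^n}$, yielding the same $\kappa,D,d$ up to the choice of $\delta$-slacks, (iii) verify pairwise independence of the UCC codewords, and (iv) absorb the $(1+\eta)^{-1}$ factor and the atypical mass as separate $O(\eta)+O(\varepsilon)$ corrections; the paper merely packages these last two steps as three named terms $L_1,L_2,L_3$ and works with the renormalized weights $\tilde\lambda_{w^n}=\lambda_{w^n}/(1-\varepsilon)$ rather than the raw ones, which is cosmetic.
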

\begin{proof}
	The proof of the lemma is provided in Appendix \ref{appx:nf_SoftCovering_Proof}
\end{proof}    
Now we provide the following proposition.
\begin{proposition}\label{prop:Lemma for p2p:Tilde_S1}
For any $\epsilon \in (0,1)$, any $\eta, \delta \in (0,1)$ sufficiently small, and any $n$ sufficiently large, we have we have $\EE[\widetilde{S}_1] \leq\epsilon$, if  $ \frac{k+l}{n}\log{p} >  I(W;R)_{\sigma} - S(W)_{\sigma}  +\log{p}, $ where $\sigma$ is the auxiliary state defined in the theorem.
\end{proposition}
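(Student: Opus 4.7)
The plan is to decompose $\EE[\widetilde{S}_1]$ into a covering contribution, measured against the product state $\hat{\rho}_{w^n}$, and a pruning contribution that accounts for the gap between $\hat{\rho}_{w^n}$ and $\tilde{\rho}_{w^n}$. The covering piece is controlled by the $n$-letter covering Lemma \ref{lem:nf_SoftCovering}, while the pruning piece is controlled by the gentle measurement lemma together with quantum typicality. As a preliminary reduction, since the random pairs $(G, h^{(\mu)})$ are identically distributed across $\mu$, each summand in $\widetilde{S}_1$ has the same expected trace norm, so $\EE[\widetilde{S}_1]$ reduces to the single-$\mu$ quantity $\EE[\|X_1\|_1]$, where $X_1 \deq \sum_{w^n}\lambda_{w^n}\hat{\rho}_{w^n} - \sum_{w^n}\sqrt{\rho^{\tensor n}}\gamma_{w^n}^{(1)}\bar{A}^{(1)}_{w^n}\sqrt{\rho^{\tensor n}}$.

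Next, using the identity $\sqrt{\rho^{\tensor n}}\bar{A}^{(1)}_{w^n}\sqrt{\rho^{\tensor n}} = \alpha_{w^n}\tilde{\rho}_{w^n}$, I would insert $\sum_{w^n}\gamma_{w^n}^{(1)}\alpha_{w^n}\hat{\rho}_{w^n}$ inside the norm and apply the triangle inequality, yielding $\EE[\|X_1\|_1] \leq \EE[\|Y\|_1] + \EE[\|Z\|_1]$, where $Y \deq \sum_{w^n}\lambda_{w^n}\hat{\rho}_{w^n} - \sum_{w^n}\gamma_{w^n}^{(1)}\alpha_{w^n}\hat{\rho}_{w^n}$ and $Z \deq \sum_{w^n}\gamma_{w^n}^{(1)}\alpha_{w^n}(\hat{\rho}_{w^n} - \tilde{\rho}_{w^n})$. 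The term $Y$ has exactly the form required by Lemma \ref{lem:nf_SoftCovering} with $\theta_w = \hat{\rho}_w$ and $N' = 1$; since $\hat{\rho}_{w^n} = \bigotimes_i \hat{\rho}_{w_i}$ is a product state, the lemma applies, and its associated single-letter state coincides with the marginal (over $Z$) of the proposition's auxiliary state $\sigma$. The lemma's hypothesis then reduces to $\frac{k+l}{n}\log p > I(W;R)_\sigma - S(W)_\sigma + \log p$, which is precisely the hypothesis of the proposition, giving $\EE[\|Y\|_1]$ arbitrarily small for $n$ large.

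For the term $Z$, pushing the norm inside the sum and using $\EE[\gamma_{w^n}^{(1)}] = p^{k+l-n}$ (each UCC codeword is marginally uniform on $\FF_p^n$) gives $\EE[\|Z\|_1] \leq \frac{1}{1+\eta}\sum_{w^n}\lambda_{w^n}\|\hat{\rho}_{w^n} - \tilde{\rho}_{w^n}\|_1$. I would split this sum over $\TDeltaN(W)$ and its complement. On the complement, $\tilde{\rho}_{w^n}$ is defined to be zero, so the contribution is at most $\sum_{w^n \notin \TDeltaN(W)}\lambda_{w^n}$, which vanishes by quantum typicality. On $\TDeltaN(W)$, two successive applications of the gentle measurement lemma, first with $\Pi_{w^n}$ using $\Tr\{\Pi_{w^n}\hat{\rho}_{w^n}\} \geq 1 - \epsilon$ from conditional typicality, then with $\Pi_\rho$ using $\Tr\{\Pi_\rho\hat{\rho}_{w^n}\} \geq 1 - \epsilon$ for strongly typical $w^n$, bound $\|\hat{\rho}_{w^n} - \tilde{\rho}_{w^n}\|_1$ uniformly by a quantity vanishing in $\epsilon$. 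Together, $\EE[\|Y\|_1] + \EE[\|Z\|_1] \leq \epsilon$ for $\eta, \delta$ small and $n$ large.

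The main obstacle is the correct chaining of the two projections in the definition of $\tilde{\rho}_{w^n}$: while each gentle measurement bound is standard, composing them requires that after applying $\Pi_{w^n}$ the residual state still has nearly full overlap with $\Pi_\rho$, which is guaranteed by strong conditional typicality of $\hat{\rho}_{w^n}$ inside the $\rho$-typical subspace but must be traced through carefully so that the $O(\sqrt{\epsilon})$ bounds genuinely compose. A secondary subtlety is the order of quantifiers: one must choose $\delta$ small enough that the rate margin in Lemma \ref{lem:nf_SoftCovering} remains strictly positive, and only then take $n$ large enough for both the typicality estimates and the covering bound to apply simultaneously.
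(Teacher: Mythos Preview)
Your proposal is correct and matches the paper's proof essentially line for line: the paper splits $\widetilde{S}_1 \leq \widetilde{S}_{11} + \widetilde{S}_{12}$ with $\widetilde{S}_{11}$ your $Y$ (handled by Lemma~\ref{lem:nf_SoftCovering} with $\theta_w = \hat{\rho}_w$ and $N'=1$) and $\widetilde{S}_{12}$ your $Z$ (handled via $\EE[\alpha_{w^n}\gamma_{w^n}^{(\mu)}] = \lambda_{w^n}/(1+\eta)$, a typical/atypical split, and repeated gentle measurement). The chaining issue you flag is precisely what the paper absorbs into its constants $\varepsilon', \varepsilon''$ citing \cite[Eq.~(35)]{wilde_e}.
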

\begin{proof}
The proof is provided in Appendix \ref{appx:proof of p2p:Tilde_S1}.
\end{proof}
Now we provide a bound for $\widetilde{S}_{2}.$ For that, we first develop another n-letter lemma as follows.
\begin{lem}\label{lem:LemAandABar}
	For $ \gammaWCoeff, \bar{A}_{w^n}^{(\mu)}, $and $ A_{w^n}^{(\mu)} $   as defined above,
	we have 
	\begin{align*}
	    \sum_{w^n}\gammaWCoeff&\left\|\sqrt{\rho_{}^{\tensor n}}  \left (\bar{A}_{w^n}^{(\mu)} -	A_{w^n}^{(\mu)} \right )\sqrt{\rho_{}^{\tensor n}}\right\|_1  \\
	    & \leq {2} \;{2^{3n\delta_\rho}}
    \left(H_0 + \frac{\sqrt{(1-\varepsilon)}}{(1+\eta)}\sqrt{H_1+H_2+H_3}\right), 
	\end{align*}  where 
	\begin{align}
	H_0 &\deq \left|\Delta^{(\mu)} \! - \EE[\Delta^{(\mu)}]\right|, 
	H_1 \deq \Tr{\!\!(\PiRho-\CutOff )\!\!\sum_{w^n}\lambdawA \rhotildwA\!}, \nonumber \\
	 H_2 &\deq \left\|\sum_{w^n}\lambdawA \rhotildwA - (1-\varepsilon)\sum_{w^n}\frac{\alpha_{w^n}\gammaWCoeff}{\EE[\Delta^{(\mu)}]} \rhotildwA\right\|_1, \nonumber \\
	H_3  &\deq (1-\varepsilon)\left\|\sum_{w^n}\frac{\alpha_{w^n}\gammaWCoeff}{\Delta^{(\mu)}} \rhotildwA -  \sum_{w^n}\frac{\alpha_{w^n}\gammaWCoeff}{\EE[\Delta^{(\mu)}]} \rhotildwA\right\|_1,
	\end{align}
	 $ \Delta^{(\mu)} = \sum_{w^n\in\TDeltan(W)}\alpha_{w^n}\gammaWCoeff, \varepsilon \deq \sum_{w^n \notin \TDeltan(W)}\lambdawA $ and $\delta_\rho(\delta) \searrow 0$ as $\delta \searrow 0$. 
\end{lem}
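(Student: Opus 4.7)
My plan is to reduce the trace-distance expression on the left-hand side to the four concentration/covering terms on the right via three sequential steps: an operator splitting that exploits the fact that $\bar{A}^{(\mu)}_{w^n}$ is supported on the typical subspace, a replacement of the outer $\sqrt{\rho^{\tensor n}}$'s by scalars using the spectral bounds of the typical projector, and a final Cauchy--Schwarz estimate that collects the residual trace into a square root.

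The first step starts from the identity
\[
\bar{A}^{(\mu)}_{w^n} - \CutOff\bar{A}^{(\mu)}_{w^n}\CutOff = (\PiRho - \CutOff)\bar{A}^{(\mu)}_{w^n} + \CutOff\bar{A}^{(\mu)}_{w^n}(\PiRho - \CutOff),
\]
which holds because $\bar{A}^{(\mu)}_{w^n}$ is supported on $\PiRho$ (since $\rhotildwA$ is sandwiched by $\PiRho$ and $\sqrt{\rho^{\tensor n}}$ commutes with $\PiRho$) and because $\CutOff$ is, by construction, a sub-projector of $\PiRho$. After conjugating by $\sqrt{\rho^{\tensor n}}$ and applying the triangle inequality together with invariance of the trace norm under the adjoint and the H\"older-type bound $\|XY\|_1 \leq \|X\|_1\|Y\|_\infty$ (with $\|\CutOff\|_\infty \leq 1$), I would bound the full quantity by $2\,\|\sqrt{\rho^{\tensor n}}(\PiRho - \CutOff)\bar{A}^{(\mu)}_{w^n}\sqrt{\rho^{\tensor n}}\|_1$, explaining the leading factor~$2$.

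For the second step, I would use the typical-projector spectral inequalities $2^{-n(S(\rho)+\delta)}\PiRho \preceq \PiRho\rho^{\tensor n}\PiRho \preceq 2^{-n(S(\rho)-\delta)}\PiRho$ to replace each outer $\sqrt{\rho^{\tensor n}}$ factor by the scalar $2^{-nS(\rho)/2}$ on the typical subspace; each replacement costs a factor of the form $2^{n\delta_\rho/2}$, and the cumulative product through both sides yields the prefactor $2^{3n\delta_\rho}$ appearing in the statement. The identity $\sqrt{\rho^{\tensor n}}\bar{A}^{(\mu)}_{w^n}\sqrt{\rho^{\tensor n}} = \alpha_{w^n}\rhotildwA$ then eliminates the interior square-root factors, so the inner object becomes essentially $(\PiRho - \CutOff)\sum_{w^n}\gammaWCoeff\alpha_{w^n}\rhotildwA$.

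The third step applies the Cauchy--Schwarz trace inequality to this expression and uses $(\PiRho - \CutOff)^2 = \PiRho - \CutOff$ to obtain the factor $\sqrt{\Tr\{(\PiRho - \CutOff)\sum_{w^n}\gammaWCoeff\alpha_{w^n}\rhotildwA\}}$. Using $\EE[\gammaWCoeff]\alpha_{w^n}/\EE[\Delta^{(\mu)}]=\lambdawA/(1-\varepsilon)$ for typical $w^n$ and $\EE[\Delta^{(\mu)}]=(1-\varepsilon)/(1+\eta)$, I would insert and subtract $(1-\varepsilon)\EE[\Delta^{(\mu)}]^{-1}\sum_{w^n}\gammaWCoeff\alpha_{w^n}\rhotildwA$ and $\sum_{w^n}\lambdawA\rhotildwA$ inside the trace, applying the triangle inequality to produce exactly the three summands $H_1$, $H_2$, $H_3$ under the square root, with the prefactor $\sqrt{1-\varepsilon}/(1+\eta)$ emerging from the normalization. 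The remaining term $H_0$, capturing the deviation of the random normalizer $\Delta^{(\mu)}$ from its mean, is peeled off earlier (outside the square root) when normalizing using $\Delta^{(\mu)}$ rather than $\EE[\Delta^{(\mu)}]$. The principal technical hurdle will be carefully tracking the non-commutativity between $\CutOff$ and $\sqrt{\rho^{\tensor n}}$ during the scalar-substitution step; a clumsy ordering would either bury relevant concentration structure or manufacture cross-terms that do not assemble into the single square root $\sqrt{H_1+H_2+H_3}$ of the target form.
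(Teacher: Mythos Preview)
Your ingredients match the paper's---the splitting of $\bar A - \Pi^\mu\bar A\Pi^\mu$, the typical-projector spectral bounds, Cauchy--Schwarz, concavity of $\sqrt{\cdot}$, and the triangle-inequality decomposition into $H_1,H_2,H_3$ with $H_0$ coming from the fluctuation of $\Delta^{(\mu)}$. The difference is the \emph{order} in which you apply them, and that order creates exactly the difficulty you flag at the end.

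You split $\bar A^{(\mu)}_{w^n}-\Pi^\mu\bar A^{(\mu)}_{w^n}\Pi^\mu$ first and then try to bound both pieces by $\|\sqrt{\rho^{\otimes n}}(\Pi_\rho-\Pi^\mu)\bar A^{(\mu)}_{w^n}\sqrt{\rho^{\otimes n}}\|_1$ via H\"older with $\|\Pi^\mu\|_\infty\le 1$. But the second piece, conjugated by $\sqrt{\rho^{\otimes n}}$, is $\sqrt{\rho^{\otimes n}}\,\Pi^\mu\bar A^{(\mu)}_{w^n}(\Pi_\rho-\Pi^\mu)\sqrt{\rho^{\otimes n}}$: the $\Pi^\mu$ sits \emph{inside} the $\sqrt{\rho^{\otimes n}}$ sandwich, not at an end where H\"older would let you peel it off. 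Taking adjoints only moves it to the other interior slot. This is precisely the non-commutativity hurdle you name, and it is not a bookkeeping nuisance---it blocks the reduction to a single term as written.

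The paper sidesteps this by reversing your first two steps. It first strips \emph{both} outer $\sqrt{\rho^{\otimes n}}$ factors via H\"older, paying $\|\Pi_\rho\sqrt{\rho^{\otimes n}}\|_\infty^2\le 2^{-n(S(\rho)-\delta_\rho)}$, and only then performs the split on the unwrapped object $\hat\Lambda_{w^n}:=\sqrt{\rho^{\otimes n}}^{-1}\tilde\rho_{w^n}\sqrt{\rho^{\otimes n}}^{-1}$. At that stage the projectors in $(\Pi_\rho-\Pi^\mu)\hat\Lambda_{w^n}$ and $\Pi^\mu\hat\Lambda_{w^n}(\Pi_\rho-\Pi^\mu)$ sit at outermost positions, so Cauchy--Schwarz (via $\hat\Lambda_{w^n}=\sqrt{\hat\Lambda_{w^n}}\sqrt{\hat\Lambda_{w^n}}$ and polar decomposition) cleanly yields
\[
\|\hat\Lambda_{w^n}-\Pi^\mu\hat\Lambda_{w^n}\Pi^\mu\|_1\;\le\;2\sqrt{\Tr\{(\Pi_\rho-\Pi^\mu)\hat\Lambda_{w^n}\}\,\Tr\{\hat\Lambda_{w^n}\}}.
\]
The bound $\Tr\{\hat\Lambda_{w^n}\}\le 2^{n(S(\rho)+\delta_\rho)}$ supplies the remaining exponential factor, concavity of $\sqrt{\cdot}$ with weights $\alpha_{w^n}\gamma^{(\mu)}_{w^n}/\Delta^{(\mu)}$ pulls the sum under one square root, and writing $\Delta^{(\mu)}=\mathbb E[\Delta^{(\mu)}]+(\Delta^{(\mu)}-\mathbb E[\Delta^{(\mu)}])$ (with the square-root factor crudely bounded by $1$ on the deviation piece) produces $H_0$. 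Your downstream handling of $H_0,H_1,H_2,H_3$ is correct in spirit; the fix is simply to strip before you split.
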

\begin{proof}
	The proof is provided in Appendix \ref{appx:ProofofLemAandABar}
\end{proof}
Using the above lemma on $\widetilde{S}_2$ gives 
\begin{align}
\widetilde{S}_{2} &\leq \frac{2}{N} \sum_{\mu=1}^{N} {2^{3n\delta_\rho}}
\left(H_0 + \frac{\sqrt{(1-\varepsilon)}}{(1+\eta)}\sqrt{H_1+H_2 + H_3}\right). \nonumber
\end{align}
Let us first consider $ H_1 $. By observing $ \sum_{w^n}\lambdawA \rhotildwA \leq \PiRho\rho^{\tensor n}\PiRho  \leq 2^{-n(S(\rho)-\delta_\rho)}\PiRho$, we bound $ H_1 $ as
\begin{align}
H_1 \leq 2^{-n(S(\rho)-\delta_\rho)} \Tr{(\PiRho-\CutOff )}. \nonumber
\end{align}
Note that 
\begin{align}
    \EE[\Sigma^{(\mu)}] & = \EE\left[\sum_{w^n}\alpha_{w^n}\gammaWCoeff\sqrt{\rho^{\tensor n}}^{-1} \rhotildwA\sqrt{\rho^{\tensor n}}^{-1}\right] \nonumber \\
    & = \frac{1}{(1+\eta)}\sum_{w^n}\! \lambdawA\!\sqrt{\rho^{\tensor n}}^{-1}\!\rhotildwA \sqrt{\rho^{\tensor n}}^{-1} \!\!\leq \frac{\Pi_\rho}{(1+\eta)} . \nonumber
\end{align}  
Now, we use the Pruning Trace Inequality developed in Lemma \ref{lem:newMarkov} on $\Sigma^{(\mu)}$, with $\eta \in (0,1)$ to obtain
\begin{align}\label{eq:Simplification_H1}
\EE[H_1] &\leq 2^{-n(S(\rho)-\delta_\rho)}\frac{(1+\eta)}{\eta}\EE\left[\|\Sigma^{(\mu)} - \EE[\Sigma^{(\mu)}]\|_1\right] \nonumber \\ 
& \leq  2^{-n(S(\rho)-\delta_\rho)}\frac{(1+\eta)}{\eta}\left\|\Pi_\rho\sqrt{\rho^{\tensor n}}^{-1}\right\|_\infty\nonumber \\
& \hspace{7pt} \times\EE\left[{\left\|\sum_{w^n}\alpha_{w^n}\gammaWCoeff \rhotildwA - \EE \big[\sum_{w^n}\alpha_{w^n}\gammaWCoeff\rhotildwA\big] \right\|_1}\right]\nonumber \\
& \hspace{15pt} \times\left\|\Pi_\rho\sqrt{\rho^{\tensor n}}^{-1}\right\|_\infty \nonumber \\ 
& \leq {2^{2n\delta_\rho}}\frac{(1+\eta)}{\eta}\EE\left[\left\|\sum_{w^n}\frac{\lambdawA \rhotildwA}{(1+\eta)} - \frac{1}{(1+\eta)}\frac{p^n}{p^{k+l}} \right. \right. \nonumber \\& \hspace{55pt} \left.\left. \sum_{w^n}\sum_{a,i} \lambdawA\rhotildwA \11_{\{\WCodeword=w^n\}}\right\|_1\right] \nonumber \\
& = {2^{2n\delta_\rho}}\frac{(1-\varepsilon)}{\eta}\EE[\widetilde{H}],
\end{align}
where the second inequality follows from H\'olders inequality, and the equality follows by defining $\widetilde{H}$  as
\begin{align}
    \widetilde{H} \deq &\left\|\sum_{w^n}\frac{\lambdawA}{(1-\varepsilon)} \rhotildwA \right. \nonumber \\ & \hspace{10pt} \left. - \frac{p^n}{p^{k+l}} \sum_{w^n}\sum_{a,i} \frac{\lambdawA}{(1-\varepsilon)}\rhotildwA \11_{\{\WCodeword=w^n\}}\right\|_1.
\end{align}
Similarly, using $\EE[\Delta^{(\mu)}] = \frac{(1-\varepsilon)}{(1+\eta)}$, $H_2$ can be simplified as
\begin{align}\label{eq:Simplification_H2}
H_2 & =  \left\|\sum_{w^n}\lambdawA \rhotildwA \right. \nonumber \\ & \hspace{20pt} \left. - \frac{p^n}{p^{k+l}} \sum_{w^n}\sum_{a,i} \lambdawA\rhotildwA \11_{\{\WCodeword=w^n\}}\right\|_1 \nonumber \\
& = (1-\varepsilon)\tilde{H}.
\end{align}
Now we consider $H_3$ and convert it into a similar expression as $H_0$.
\begin{align}\label{eq:Simplification_H3}
    H_3 & \leq (1-\varepsilon)\!\!\!\!\sum_{w^n\in\TDeltan(W)}\!\!\!\!\!\alpha_{w^n}\gammaWCoeff \left|\frac{1}{\Delta^{(\mu)}}  -  \frac{1}{\EE[\Delta^{(\mu)}]} \right| \nonumber \\
    & = (1+\eta) \left|{\Delta^{(\mu)}}  -  {\EE[\Delta^{(\mu)}]} \right| = (1+\eta)H_0.
\end{align}
Using the above simplification and the concavity of square-root function we obtain:
\begin{align}
\EE[\widetilde{S}_2] & \leq \frac{2}{N} {2^{3n\delta_\rho}}\sum_{\mu=1}^{N}
 \left(\EE[H_0] +  \frac{\sqrt{(1-\varepsilon)}}{(1+\eta)}\right. \nonumber \\ 
 & \hspace{10pt} \times \left.\sqrt{(1-\varepsilon)\left(\frac{2^{2n\delta_\rho}}{\eta}+1\right)\EE[\widetilde{H}] + {(1+\eta)}\EE[H_0]}\right) \nonumber \\
 & \leq \frac{2}{N} {2^{3n\delta_\rho}}\sum_{\mu=1}^{N}
 \Bigg(\EE[H_0]  +\frac{{(1-\varepsilon)}}{(1+\eta)}  \nonumber \\ & \hspace{20pt} \times\! \sqrt{\!\left(\!\frac{2^{2n\delta_\rho}}{\eta}\!+\!1\!\right)\!\EE[\widetilde{H}]}  + \sqrt{\frac{(1-\varepsilon)}{(1+\eta)}}\sqrt{\EE[H_0]}\Bigg). \nonumber 
 \end{align}
 The following proposition provides a bound on the above term.
 \begin{proposition}\label{prop:Lemma for p2p:S_13}
 For any $\epsilon \in (0,1)$, any $\eta, \delta \in (0,1)$ sufficiently small, and any $n$ sufficiently large, we have $\EE\left[\widetilde{S}_2\right] \leq\epsilon$, if  $\frac{k+l}{n}\log{p} > I(W;R)_{\sigma} - S(W)_{\sigma} +\log{p}  $,  where $\sigma$ is the auxiliary state defined in the theorem.
\end{proposition}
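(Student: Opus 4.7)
The plan is to reduce the bound on $\EE[\widetilde{S}_2]$ already obtained in the display immediately preceding the proposition to the control of only two quantities, $\EE[\tilde H]$ and $\EE[H_0]$, and to show that each of them decays fast enough in $n$ under the stated rate hypothesis so that the prefactors $2^{3n\delta_\rho}$ and $\sqrt{2^{2n\delta_\rho}/\eta}$ do not destroy the bound. Note that $H_3$ itself requires no separate treatment because of the deterministic simplification $H_3\le(1+\eta)H_0$ established just above the proposition, so Step~2 (on $H_0$) will handle that contribution automatically.

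For $\EE[\tilde H]$, I would observe that $\tilde H$ has exactly the $\ell_1$-difference structure handled by the $n$-letter covering Lemma~\ref{lem:nf_SoftCovering}. Concretely, one applies that lemma with $N'=1$ to the sub-normalized ensemble $\{\lambda_{w^n}/(1-\varepsilon),\rhotildwA\}_{w^n\in\TDeltan(W)}$ together with the uniform proposal distribution $\mu_{w^n}=p^{-n}$ on $\FF_p^n$ induced by the UCC grand ensemble. This is structurally identical to the covering step already carried out in the proof of Proposition~\ref{prop:Lemma for p2p:Tilde_S1}, so under the same rate constraint $\frac{k+l}{n}\log p > I(W;R)_\sigma - S(W)_\sigma + \log p$ and the same typical-projector estimates, one obtains $\EE[\tilde H]$ exponentially small in $n$. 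This decay is ample to kill the $\sqrt{2^{2n\delta_\rho}/\eta}$ factor sitting inside the square root, provided $\delta$ (hence $\delta_\rho$) is taken sufficiently small and $\eta$ is taken to vanish only sub-exponentially in $n$.

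For $\EE[H_0]=\EE\bigl|\Delta^{(\mu)}-\EE\,\Delta^{(\mu)}\bigr|$, I would invoke the second-moment route that the pruning construction was designed to enable. Writing $\Delta^{(\mu)}=\sum_{a,i}\alpha_{\WCodeword}\,\mathbbm{1}_{\{\WCodeword\in\TDeltan(W)\}}$ and applying Jensen gives $\EE[H_0]\le\sqrt{\mathrm{Var}(\Delta^{(\mu)})}$. The crucial ingredient is that the UCC grand ensemble makes each $\WCodeword$ marginally uniform on $\FF_p^n$ and pairwise independent across distinct $(a,i)$, so the off-diagonal covariance terms vanish and only the diagonal sum survives, yielding $\mathrm{Var}(\Delta^{(\mu)})\le p^{k+l-n}\sum_{w^n\in\TDeltan(W)}\alpha_{w^n}^2$. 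Using $\alpha_{w^n}\lesssim p^{n-k-l}\cdot 2^{-n(S(W)_\sigma-\delta')}/(1+\eta)$ on the typical set, this reduces to an expression of order $p^{n-k-l}\cdot 2^{-n(S(W)_\sigma-\delta')}$, which decays exponentially whenever $\frac{k+l}{n}\log p > \log p - S(W)_\sigma$---a condition strictly implied by the theorem's hypothesis since $I(W;R)_\sigma\ge 0$.

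Substituting the two exponential bounds back into the displayed upper bound for $\EE[\widetilde{S}_2]$, applying $\sqrt{a+b}\le\sqrt a+\sqrt b$, and tuning $\delta\to 0$ and $\eta\to 0$ slowly enough in $n$ so that the prefactor $2^{3n\delta_\rho}$ remains sub-exponential while the exponential decays of Steps~1 and~2 are preserved, yields $\EE[\widetilde{S}_2]\le\epsilon$ for $n$ sufficiently large. The principal conceptual obstacle, and precisely the reason the pruning trace inequality (Lemma~\ref{lem:newMarkov}) was developed in the preliminary section, is Step~2: the standard route of concentrating $\Sigma^{(\mu)}$ via the operator Chernoff bound is unavailable for only pairwise-independent codewords, and the Jensen-plus-pairwise-independence variance calculation is what replaces it. The remaining work is parameter bookkeeping, analogous to the proof of Proposition~\ref{prop:Lemma for p2p:Tilde_S1}.
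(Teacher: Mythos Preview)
Your overall plan---reduce to $\EE[\tilde H]$ and $\EE[H_0]$, bound each exponentially, then absorb the polynomial-in-$n$ prefactors---matches the paper's proof, and your treatment of $H_0$ is correct. However, there is one genuine technical slip in Step~1, and one methodological difference in Step~2 worth noting.

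\textbf{Step 1 (the slip).} You invoke Lemma~\ref{lem:nf_SoftCovering} on the ensemble $\{\lambda_{w^n}/(1-\varepsilon),\rhotildwA\}$, calling it ``structurally identical'' to Proposition~\ref{prop:Lemma for p2p:Tilde_S1}. It is not. Lemma~\ref{lem:nf_SoftCovering} requires $\theta_{w^n}=\bigotimes_i\theta_{w_i}$ with each $\theta_{w_i}\in\mathcal D(\mathcal H)$; the operators $\rhotildwA=\Pi_\rho\Pi_{w^n}\rhohatwA\Pi_{w^n}\Pi_\rho$ are neither tensor products (the global projector $\Pi_\rho$ breaks that) nor unit-trace. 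In Proposition~\ref{prop:Lemma for p2p:Tilde_S1} the relevant term $\widetilde S_{11}$ involves $\rhohatwA$, which \emph{is} a product of states, so Lemma~\ref{lem:nf_SoftCovering} applies there but not here. The paper's fix is to apply the one-shot Lemma~\ref{lem:Change Measure Soft Covering Variance Based} directly, identifying $\sigma_x\leftrightarrow\rhohatwA$, $\Pi\leftrightarrow\Pi_\rho$, $\Pi_x\leftrightarrow\Pi_{w^n}$, and invoking specifically its second conclusion~\eqref{lem:ChangeMeasure:ineq2}, which is stated precisely for $\tilde\sigma_x=\Pi\Pi_x\sigma_x\Pi_x\Pi=\rhotildwA$. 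The hypotheses are checked exactly as you would expect, and the resulting bound is the one you state.

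\textbf{Step 2 (the difference).} Your direct Jensen-plus-pairwise-independence variance computation for $\EE[H_0]$ is correct and delivers the same exponent $\frac{k+l}{n}\log p-\log p+S(W)_\sigma$. The paper instead rewrites $H_0$ as a trace norm by tensoring with an arbitrary fixed state $\omega_0^{\otimes n}$ independent of $W$, and then re-applies Lemma~\ref{lem:Change Measure Soft Covering Variance Based} with trivial projectors $\Pi=\Pi_x=I$ (so $D=d=1$). Since the proof of that lemma \emph{is} a second-moment calculation under pairwise independence, the two routes are the same computation in different clothing. Your version is more transparent; the paper's version has the aesthetic virtue of making every bound an instance of one lemma.
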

\begin{proof}
The proof is provided in Appendix \ref{appx:proof of p2p:TildeS_2}
\end{proof}

{\begin{remark}
The term corresponding to the operators that complete the sub-POVMs $M^{(n,\mu)}$, i.e., $ I - \sum_{w^n \in \TDeltaN(W)}\gammaWCoeff A_{w^n}^{(\mu)}$ is taken care in $\widetilde{T}$. The expression $T$ excludes these completing operators. 
\end{remark}
}
\noindent{\bf Step 2: Isolating the effect of error induced by binning}\\\
For this, we simplify $T$ as 
\begin{align}
T 
= & \frac{1}{N}\sum_{\mu}\sum_{w^n}\sum_{\substack{i>0}} \sum_{a\in \FF_p^{k}} \sqrt{\rho_{}^{\tensor n}} A_{w^n}^{(\mu)}  \sqrt{\rho_{}^{\tensor n}} \nonumber \\ & \hspace{20pt} \times P^n_{Z|W}(z^n|F^{(\mu)}(i))\mathbbm{1}_{\{aG^{} + h^{(\mu)}(i) = w^n\}}. \nonumber 
\end{align}
We substitute the above expression into $S$ defined in \eqref{eq:p2pdef_S}, 
and isolate the effect of binning by adding and subtracting an appropriate term within $ S $ and applying triangle inequality to obtain $ S \leq S_{1}  + S_{2},$ where 
\begin{align}
S_{1} & \deq \sum_{z^n} \left\|\sum_{w^n}\sqrt{\rho_{}^{\tensor n}}  \left (\bar{\Lambda}^{}_{w^n}-\frac{1}{N} \sum_{\mu} \gammaWCoeff A_{w^n}^{(\mu)} \right )\sqrt{\rho_{}^{\tensor n}} \right. \nonumber \\
& \hspace{2.1in}  \left.\times P^n_{Z|W}(z^n|w^n)\right\|_1\!, \nonumber \\
S_{2} & \deq \sum_{z^n}\left \| \frac{1}{N}\! \sum_{\mu}\!\!\sum_{a,i>0} \!\!\sum_{w^n} \!\! \!\sqrt{\rho_{}^{\tensor n}}  A_{w^n}^{(\mu)}  \!\sqrt{\rho_{}^{\tensor n}}\mathbbm{1}_{\{aG^{} + h^{(\mu)}(i) = w^n\}} \right. \nonumber \\
& \hspace{0.6in} \left.  \times \left(P^n_{Z|W}(z^n|w^n)-   P^n_{Z|W}\left (z^n|F^{(\mu)}(i)\right )\right)\right \|_1\!, \nonumber
\end{align}
where $F^{(\mu)}(\cdot)$ is as defined in \eqref{def:Fmu}. Note that the term $ S_{1} $ characterizes the error introduced by approximation of the original POVM with  the collection of  approximating sub-POVM $ \tilde{M}^{(n,\mu)} $, and the term $ S_2 $ characterizes the error caused by binning this approximating sub-POVM. In this step, we analyze $S_2$ and prove the following proposition. 

\begin{proposition}\label{prop:Lemma for p2p:S_2}
For any $\epsilon \in (0,1)$, any $\eta, \delta \in (0,1)$ sufficiently small, and any $n$ sufficiently large, we have $\EE\left[{S}_2\right] \leq \epsilon$, if $ \frac{k+l}{n}\log{p} - R < \log{p} - S(W)_{\sigma}$,  where $\sigma$ is the auxiliary state defined in the statement of the theorem.
\end{proposition}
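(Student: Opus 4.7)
The plan is to observe that $S_2$ is supported entirely on the binning-error event and then to exploit the pairwise uniformity of codeword pairs guaranteed by the random linear structure. By the definition of $F^{(\mu)}$ in \eqref{def:Fmu}, whenever $aG+h^{(\mu)}(i)=w^n$ is the unique typical sequence in its coset, $F^{(\mu)}(i)=w^n$ and the factor $P^n_{Z|W}(z^n|w^n)-P^n_{Z|W}(z^n|F^{(\mu)}(i))$ vanishes identically in $z^n$. First I would pull the $\ell_1$-norm inside the outer sums and use that $\sum_{z^n}\bigl|P^n_{Z|W}(z^n|w^n)-P^n_{Z|W}(z^n|F^{(\mu)}(i))\bigr|\leq 2$, being a difference of two probability distributions, to reach
\begin{align*}
S_2\leq \frac{2}{N}\sum_\mu\sum_{\substack{a\in\FF_p^{k}\\ i\in\FF_p^{l},\,i>0}}\sum_{w^n}\bigl\|\sqrt{\rho^{\tensor n}}A^{(\mu)}_{w^n}\sqrt{\rho^{\tensor n}}\bigr\|_1\,\11_{\mathcal{F}^{(\mu)}_{a,i,w^n}},
\end{align*}
where $\mathcal{F}^{(\mu)}_{a,i,w^n}\deq\{aG+h^{(\mu)}(i)=w^n\}\cap\{\exists\,\tilde a\neq a:\,\tilde aG+h^{(\mu)}(i)\in\TDeltaN(W)\}$. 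Since $A^{(\mu)}_{w^n}=0$ off $\TDeltaN(W)$, the relevant $w^n$ is implicitly typical.

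Next I would decouple the code-dependent operator norm from the error indicator. Writing $A^{(\mu)}_{w^n}=\CutOff\bar A^{(\mu)}_{w^n}\CutOff$ and applying cyclicity of the trace gives $\|\sqrt{\rho^{\tensor n}}A^{(\mu)}_{w^n}\sqrt{\rho^{\tensor n}}\|_1=\Tr\{\bar A^{(\mu)}_{w^n}\,\CutOff\rho^{\tensor n}\CutOff\}$. Since $\CutOff\leq\PiRho$ and $\PiRho\rho^{\tensor n}\PiRho\leq 2^{-n(S(\rho)-\delta_\rho)}\PiRho$, this is at most $2^{-n(S(\rho)-\delta_\rho)}\Tr\{\bar A^{(\mu)}_{w^n}\}$. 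Combining with the standard typical-projector bounds $\sqrt{\rho^{\tensor n}}^{-1}\rhotildwA\sqrt{\rho^{\tensor n}}^{-1}\leq 2^{n(S(\rho)+\delta_\rho)}\Pi_{w^n}$ and $\Tr\{\Pi_{w^n}\}\leq 2^{n(S(\rho)+\delta_\rho)}$, and using the definition of $\alpha_{w^n}$ from \eqref{eq:A_w}, yields a deterministic bound $\|\sqrt{\rho^{\tensor n}}A^{(\mu)}_{w^n}\sqrt{\rho^{\tensor n}}\|_1\leq 2^{O(n\delta_\rho)}\alpha_{w^n}$ that is independent of the random code $(G,h^{(\mu)})$.

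Finally, the code randomness enters only through $\11_{\mathcal{F}^{(\mu)}_{a,i,w^n}}$. For any $\tilde a\neq a$, the pair $\bigl(aG+h^{(\mu)}(i),\,\tilde aG+h^{(\mu)}(i)\bigr)$ is uniform on $\FF_p^n\times\FF_p^n$: $h^{(\mu)}(i)$ is uniform on $\FF_p^n$ and $(\tilde a-a)G$ is independent of it and uniform, because $\tilde a-a$ is a nonzero row vector and $G$ has i.i.d.\ uniform entries. Hence every such joint probability equals $p^{-2n}$, and a union bound over the $(p^k-1)$ choices of $\tilde a$ and all $\tilde w^n\in\TDeltaN(W)$ gives $\PP[\mathcal{F}^{(\mu)}_{a,i,w^n}]\leq (p^k-1)|\TDeltaN(W)|/p^{2n}$. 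Putting everything together with $\sum_{w^n}\alpha_{w^n}\leq p^n/((1+\eta)p^{k+l})$, $\sum_{a,i}1\leq p^{k+l}$, $|\TDeltaN(W)|\leq 2^{n(S(W)_\sigma+\delta)}$, and the identification $R=\tfrac{l}{n}\log p$, the expectation is bounded by a constant multiple of $2^{n[\tfrac{k}{n}\log p + S(W)_\sigma - \log p + O(\delta+\delta_\rho)]}$, which decays exponentially precisely under the hypothesis $\tfrac{k+l}{n}\log p - R<\log p - S(W)_\sigma$, equivalently $\tfrac{k}{n}\log p<\log p - S(W)_\sigma$. The main obstacle is the intertwined dependence of $A^{(\mu)}_{w^n}$ (through the random pruning projector $\CutOff$) and of the binning-error indicator on the shared code $(G,h^{(\mu)})$; the plan circumvents this by absorbing the operator factor into a deterministic typicality bound, so the remaining expectation requires only pairwise uniformity of codeword pairs, which is exactly what the UCC grand ensemble delivers.
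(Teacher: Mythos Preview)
Your plan is essentially the paper's: reduce $S_2$ to the indicator of ``a second $\delta$-typical codeword in the same coset,'' replace the operator factor by a deterministic $2^{O(n\delta_\rho)}\alpha_{w^n}$ bound, and then use pairwise uniformity of $\bigl(aG+h^{(\mu)}(i),\,\tilde aG+h^{(\mu)}(i)\bigr)$ under the UCC ensemble to control the indicator probability; the final counting and rate condition coincide with the paper's.

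There is one slip in your operator-norm step. The inequality $\sqrt{\rho^{\tensor n}}^{-1}\rhotildwA\sqrt{\rho^{\tensor n}}^{-1}\leq 2^{n(S(\rho)+\delta_\rho)}\Pi_{w^n}$ is not correct as written (the left side is supported on the range of $\Pi_\rho$, not of $\Pi_{w^n}$), and even granting it, chaining with $\Tr\{\Pi_{w^n}\}\leq 2^{n(S(\rho)+\delta_\rho)}$ would leave an extra $2^{nS(\rho)}$ factor you cannot absorb into $2^{O(n\delta_\rho)}$. The fix is immediate and is exactly what the paper does: bound the trace directly via H\"older,
\[
\Tr\bigl\{\sqrt{\rho^{\tensor n}}^{-1}\rhotildwA\sqrt{\rho^{\tensor n}}^{-1}\bigr\}
\leq \bigl\|\Pi_\rho\sqrt{\rho^{\tensor n}}^{-1}\bigr\|_\infty^{2}\,\|\rhotildwA\|_1
\leq 2^{n(S(\rho)+\delta_\rho)},
\]
which, combined with your first step, gives $\bigl\|\sqrt{\rho^{\tensor n}}A^{(\mu)}_{w^n}\sqrt{\rho^{\tensor n}}\bigr\|_1\leq 2^{2n\delta_\rho}\alpha_{w^n}$. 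With this correction the remainder of your argument goes through unchanged.
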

\begin{proof}
The proof is provided in Appendix \ref{appx:proof of p2p:S2}
\end{proof}
\noindent\textbf{Step 3: Isolating the effect of approximating measurement}\\
\noindent In this step, we finally analyze the error induced from employing the approximating measurement, given by the term $S_1$. We add and subtract appropriate terms
within $ S_1 $ and use triangle inequality to obtain $ S_1 \leq S_{11} + S_{12} + S_{13} $, where

\begin{align}
S_{11} & \deq  \sum_{z^n}\left\|\sum_{w^n}\sqrt{\rho_{}^{\tensor n}}  \left (\bar{\Lambda}^{}_{w^n} -\frac{1}{N} \sum_{\mu=1}^{N} \frac{\alpha_{w^n}\gammaWCoeff }{\lambdawA}\bar{\Lambda}^{}_{w^n} \right )\right. \nonumber \\ 
& \hspace{1.6in}\left. \times\sqrt{\rho_{}^{\tensor n}} P^n_{Z|W}(z^n|w^n)\right\|_1, \nonumber \\
S_{12} &\deq \sum_{z^n}\left\|\frac{1}{N}\!\sum_{\mu=1}^{N}\sum_{w^n}\!\sqrt{\rho_{}^{\tensor n}} \! \left (\! \frac{\alpha_{w^n}\gammaWCoeff }{\lambdawA}\bar{\Lambda}^{}_{w^n}\!\!-	\gammaWCoeff  \bar{A}_{w^n}^{(\mu)}\! \right )\right. \nonumber \\ 
& \hspace{1.6in}\left.\times \sqrt{\rho_{}^{\tensor n}} P^n_{Z|W}(z^n|w^n)\right\|_1, \nonumber \\
S_{13} &\deq \sum_{z^n}\left\|\frac{1}{N} \sum_{\mu=1}^{N}\sum_{w^n}\sqrt{\rho_{}^{\tensor n}}  \left ( 	\gammaWCoeff  \bar{A}_{w^n}^{(\mu)} -	\gammaWCoeff  A_{w^n}^{(\mu)} \right )\right. \nonumber \\
& \hspace{1.6in}\left.\times\sqrt{\rho_{}^{\tensor n}} P^n_{Z|W}(z^n|w^n)\right\|_1. \nonumber
\end{align}
Now with the intention of employing Lemma \ref{lem:nf_SoftCovering}, we express $S_{11}$ as
\begin{align}
S_{11} & = \left\|\sum_{w^n}\lambdawA\rhohatwA \tensor \phi_{w^n} - \frac{1}{N}\frac{1}{(1+\eta)}\frac{p^n}{p^{k+l}}\right. \nonumber \\ 
& \hspace{0.4in}\left.\times \sum_{\mu}\sum_{w^n}\sum_{a,i\neq 0}\mathbbm{1}_{\{\WCodeword = w^n\}}\rhohatwA \tensor \phi_{w^n}\right\|_1, \nonumber 
\end{align}
where the equality above is obtained by defining $ \phi_{w^n} = \sum_{z^n}P^n_{Z|W}(z^n|w^n)\tensor\ketbra{z^n} $ and using the definitions of $ \alpha_{w^n}, \gammaWCoeff $ and $ \rhohatwA $, followed by using the triangle inequality for the block diagonal operators, Note that the triangle inequality becomes an equality for such block diagonal operators. By identifying $\theta_{w}$ with $\hat{\rho}_w\tensor \phi_w$ in Lemma \ref{lem:nf_SoftCovering} we obtain the following: for all $\epsilon>0$ and $\eta,\delta \in (0,1)$ sufficiently small, and any $n$ sufficiently large, $\EE\left[{S}_{11}\right] \leq \epsilon$, if $ \frac{k+l}{n}\log{p} + \frac{1}{n}\log{N} > I(W;R,Z)_{\sigma} + \log{p} - S(W)_{\sigma}$,  where $\sigma$ is the auxiliary state defined in the theorem.

Now we consider the term corresponding to $ S_{12}$, and prove that its expectation is small. Recalling $ S_{12} $, we get
\begin{align}
S_{12}  & \leq \frac{1}{N}\sum_{\mu=1}^{N}\sum_{w^n}\sum_{z^n} P^n_{Z|W}(z^n|w^n) \nonumber\\
& \hspace{13pt}\times\left\|\sqrt{\rho_{}^{\tensor n}}  \left ( \frac{\alpha_{w^n}\gammaWCoeff}{\lambdawA} \bar{\Lambda}^{}_{w^n} -	 \gammaWCoeff \bar{A}_{w^n}^{(\mu)} \right )\sqrt{\rho_{}^{\tensor n}} \right\|_1\!, \nonumber \\
& = \frac{1}{N}\sum_{\mu=1}^{N}\sum_{w^n}\alpha_{w^n}\gammaWCoeff  \bigg\|\sqrt{\rho_{}^{\tensor n}}   \left ( \frac{1}{\lambdawA}\bar{\Lambda}^{}_{w^n} - \right.\nonumber\\
& \hspace{80pt}\left.\sqrt{\rho_{}^{\tensor n}}^{-1}\rhotildwA\sqrt{\rho_{}^{\tensor n}}^{-1}\right )\sqrt{\rho_{}^{\tensor n}} \bigg\|_1, \nonumber
\end{align}
where the inequality above is obtained by using triangle inequality.
Applying the expectation, we get
\begin{align*}
 \EE{\left[S_{12} \right]} & \leq \frac{1}{(1+\eta)}\sum_{w^n}\lambdawA \bigg\|\sqrt{\rho_{}^{\tensor n}} \left ( \frac{1}{\lambdawA}\bar{\Lambda}^{}_{w^n}-\right.\nonumber\\
& \hspace{1in}\left.\sqrt{\rho_{}^{\tensor n}}^{-1}\rhotildwA\sqrt{\rho_{}^{\tensor n}}^{-1}\right )\sqrt{\rho_{}^{\tensor n}} \bigg\|_1, \nonumber \\
& \leq  \frac{1}{(1+\eta)}\!\!\!\sum_{\substack{w^n \in \TDeltaN(W) }}\!\!\lambdawA\left\| \left (\rhohatwA -\rhotildwA \right ) \right\|_1 \\ 
& \hspace{1in}+ \frac{1}{(1+\eta)}\!\!\!\sum_{\substack{w^n \notin \TDeltaN(W) }}\!\!\lambdawA\left\| \rhohatwA  \right\|_1 \\
& \leq \frac{(2\sqrt{\varepsilon' 
} + 2\sqrt{\varepsilon''}) +\varepsilon}{(1+\eta)}   = \epsilon_{\scriptscriptstyle S_{12}}\nonumber,
\end{align*}
where we have used the fact that $ \EE{[\alpha_{w^n} \gammaWCoeff ]} = \frac{\lambdawA}{(1+\eta)}$, and the last inequality is obtained by the repeated usage of the Average Gentle Measurement Lemma \cite{Wilde_book} and
setting $ \epsilon_{\scriptscriptstyle{S}_{12}} = \frac{1}{(1+\eta)} (2\sqrt{\varepsilon'} + 2\sqrt{\varepsilon''} + \varepsilon) $ with  $ \epsilon_{\scriptscriptstyle {S}_{12}} \searrow 0 $ as $n \rightarrow \infty$ and {$ \varepsilon' \deq \varepsilon'_p + 2\sqrt{\varepsilon'_p} $ and $ \varepsilon'' \deq 2\varepsilon'_p + 2\sqrt{\varepsilon'_p} $ for $\varepsilon'_p \deq 1-\min\left\{\tr{\Pi_{\rho}\hat{\rho}_{w^n}}, \tr{\Pi_{w^n}\hat{\rho}_{w^n}},1-\varepsilon \right\}$} (see (35) in \cite{wilde_e} for details).
Now, we move on to bounding the last term within $ S_1 $, i.e., $ S_{13}. $ We start by applying triangle inequality to obtain
\begin{align}
S_{13} & \leq \sum_{z^n}\sum_{w^n}P^n_{Z|W}(z^n|w^n) \nonumber \\ 
& \hspace{10pt} \times \left\|\frac{1}{N} \sum_{\mu=1}^{N}\sqrt{\rho_{}^{\tensor n}}  \left ( 	\gammaWCoeff  \bar{A}_{w^n}^{(\mu)} - 	\gammaWCoeff  A_{w^n}^{(\mu)} \right ) \sqrt{\rho_{}^{\tensor n}} \right\|_1 \nonumber \\
& \leq  \frac{1}{N} \sum_{\mu=1}^{N} \sum_{w^n}\gammaWCoeff\left\|\sqrt{\rho_{}^{\tensor n}}  \left (\bar{A}_{w^n}^{(\mu)} -	A_{w^n}^{(\mu)} \right )\sqrt{\rho_{}^{\tensor n}}\right\|_1 \nonumber \\
& = \widetilde{S}_2.\label{eq:termS_13}
\end{align}
Since the above term is exactly same as $\widetilde{S}_2,$ we obtain the same rate constraints as in $\widetilde{S}_2$ to bound $S_{13},$ i.e., for all $\epsilon>0$ and $\eta,\delta \in (0,1)$ sufficiently small, and any $n$ sufficiently large, $\EE[S_{13}] \leq \epsilon$ if $ \frac{k+l}{n}\log{p} > I(W;R)_{\sigma} +\log{p} -S(W)_\sigma$.

Since $S_1  \leq S_{11} + S_{12}+S_{13}$, $S_1$ can be made arbitrarily small for sufficiently large n, if $\frac{k+l}{n} \log p+\frac{1}{n} \log N > I(W;RZ)_{\sigma} - S(W)_{\sigma} + \log{p} $ and $ \frac{k+l}{n}\log{p} > I(W;R)_{\sigma} -S(W)_{\sigma} + \log{p}$.

 \subsubsection{Rate Constraints}
 To sum-up, we showed $\EE[K] \leq \epsilon$
holds for sufficiently large $n$
if the following bounds hold:
 \begin{subequations}\label{eq:rate-region 1}
 	\begin{align}
 	R_1 + R &> I(W;R)_{\sigma} - S(W)_{\sigma} + \log{p}  ,\\
R_1 + R + C  &> I(W; RZ)_{\sigma} - S(W)_{\sigma} + \log{p} ,\\
 	R_1 &< \log{p} - S(W)_{\sigma},\\
 	R_1 &\geq 0,  \quad  C\geq 0,
 	\end{align}
 \end{subequations}
 where $R_1 \deq\frac{k}{n}\log{p} $ and $ C\deq \frac{1}{n}\log_2 N$,
 and $R=\frac{l}{n} \log p$. 
Therefore, there exists a distributed protocol with parameters $(n, 2^{nR}, 2^{nC})$ such that its overall POVM $\hat{M}_{}$ is $\epsilon$-faithful to $M_{}^{\tensor n}$ with respect to $\rho_{}^{\tensor n}$. This completes the proof of the theorem.

\section{Proof of Theorem \ref{thm:dist POVM appx}}\label{sec:proof of thm dist POVM}
\addtocontents{toc}{\protect\setcounter{tocdepth}{1}}
Suppose there exists a finite field $\mathbb{F}_p$, for a prime $p$, a pair of mappings $f_S:\mathcal{S} \rightarrow \mathbb{F}_p$ and 
$f_T:\mathcal{T} \rightarrow \mathbb{F}_p$, and a stochastic mapping $P_{Z|W}:\mathbb{F}_p \rightarrow \mathcal{Z}$ such that 
\[
P_{Z|S,T}(z|s,t)=P_{Z|W}(z|f_S(s)+f_T(t)), 
\]
$\forall s \in \mathcal{S}, t \in \mathcal{T}, z \in \mathcal{Z},$ yielding $U=f_S(S)$, and $V=f_T(T)$. This implies that 
we have POVMs $\bar{M}_A \deq $ $\{\bar{\Lambda}^A_u\}_{u\in \mathcal{U}}$ and $\bar{M}_B\deq\{\bar{\Lambda}^B_v\}_{v\in \mathcal{V}}$ with 
$\mathcal{U}=\mathcal{V}\deq \mathbb{F}_p$ and a stochastic map $ P_{Z|W}:\FF_p\rightarrow \mathcal{Z} $,  such that ${M}_{AB}$ can be decomposed as 
\begin{equation}\label{eq:nf_LambdaAB decompos}
\Lambda^{AB}_{z}=\sum_{u,v} P_{Z|W}(z|u+v) \bar{\Lambda}^A_{u}\tensor \bar{\Lambda}^B_{v}, ~ \forall z, 
\end{equation}
where $W$ is defined as $W = U + V$.
The coding strategy used here is based on Unionized Coset Codes, similar to the one employed in the point-to-point proof (Section \ref{sec:p2pProof}), but extended to a distributed setting. 
Further, the structure in these codes provide a method to exploit the structure present in the stochastic processing applied by Charlie on the classical bits received, i.e., $ P_{Z|U+V} $.
Using this technique, we aim to strictly reduce the rate constraints compared to the ones obtained in Theorem 6 of \cite{atif2019faithful}.
Also note that, the results in \cite{atif2019faithful} are based on the assumption that approximating POVMs are all mutually independent. However, since the structured construction of the POVMs only guarantees pairwise independence among the operators of the POVM, the proofs below become significantly different from \cite{atif2019faithful}. 

We start by generating the canonical ensembles corresponding to $\bar{M}_{A}$ and $\bar{M}_B$, defined as
\begin{align}\label{eq:dist_canonicalEnsemble}
\lambda^A_u \deq \tr\{\bar{\Lambda}^A_u \rho_A\}, &\quad \lambda^B_v \deq \tr\{\bar{\Lambda}^B_v \rho_B\}, \nonumber \\
 \lambda^{AB}_{uv} \deq \tr\{(\bar{\Lambda}^A_u &\tensor \bar{\Lambda}^B_v) \rho_{AB}\}, \quad \text{and}\nonumber \\
\hat{\rho}^A_u \deq \frac{1}{\lambda^A_u}\sqrt{\rho_A}  \bar{\Lambda}^A_u \sqrt{\rho_A}, &\quad \hat{\rho}^B_v \deq \frac{1}{\lambda^B_v}\sqrt{\rho_B}\bar{\Lambda}^B_v \sqrt{\rho_B}, \quad \nonumber \\
 \hat{\rho}^{AB}_{uv} \deq \frac{1}{\lambda^{AB}_{uv}}\sqrt{\rho_{AB}} &(\bar{\Lambda}^A_u \tensor \bar{\Lambda}^B_v) \sqrt{\rho_{AB}}.
\end{align}
With this notation, corresponding to each of the probability distributions, we can associate a $\delta$-typical set. Let us denote $\mathcal{T}_{\delta}^{(n)}(U)$, $\mathcal{T}_{\delta}^{(n)}(V)$ and $\mathcal{T}_{\delta}^{(n)}(UV)$ as the $\delta$-typical sets defined for $\{\lambda^{A}_{u}\}$, $\{\lambda^{B}_{v}\}$ and $\{\lambda^{AB}_{uv}\}$, respectively. 

Let $\PiA$ and $\PiB$ denote the $\delta$-typical projectors (as in \cite[Def. 15.1.3]{Wilde_book}) for marginal density operators $\rho_A$ and $\rho_B$, respectively.
Also, for any $u^n\in \mathcal{U}^n$ and $v^n \in \mathcal{V}^n$, let $\PiuA$ and $\PivB$ denote the strong conditional typical projectors (as in \cite[Def. 15.2.4]{Wilde_book}) for the canonical ensembles $\{\lambda^A_u, \hat{\rho}^A_u\}$ and $\{\lambda^B_v, \hat{\rho}^B_v\}$, respectively.

For each $u^n\in \TDeltan(U)$ and $v^n \in \TDeltan(V)$ define 
\begin{equation*}
\rhotilduA \deq \PiA \PiuA \rhohatuA \PiuA \PiA, \quad 
\rhotildvB \deq \PiB \PivB \rhohatvB \PivB \PiB,
\end{equation*}
and $\rhotilduA = 0,  $ and $\rhotildvB = 0$ for $u^n \notin \TDeltan(U)$ and $v^n \notin \TDeltan(V)$, respectively, with $\rhohatuA \deq \bigotimes_{i} \hat{\rho}^A_{u_i}$ and $\rhohatvB \deq \bigotimes_{i} \hat{\rho}^B_{v_i}$. 

\subsection{Construction of Structured POVMs}\label{subsec:stochastic_dist_POVM}
In what follows, we construct the random structured POVM elements. Fix a block length $n>0$, positive integers $ N_1 $ and $N_2$, and a finite field $ \FF_p $. Let $ \mu_1 \in [1,N_1]$ denote the common randomness shared between the first encoder and the decoder, and let $ \mu_2 \in [1,N_2]$ denote the common randomness shared between the second encoder and the decoder.
Let $\tilde{\mu}_1 \in [1,\tilde{N}_1]$ and $\tilde{\mu}_2\in [1,\tilde{N}_2]$ denote additional 
    pairwise shared randomness used for random coding purposes. This randomness is only used to show the existence of a desired distributed protocol (as defined in Definition \ref{def:distProtocol}), and is used only for bounding purposes. 
We denote $\bar{\mu}_i \deq (\mu_i,\tilde{\mu}_i)$, and 
$\bar{N}_i \deq N_i \cdot\tilde{N}_i$ for $i=1,2$.
Further, let $U$ and $ V $ be random variables defined on the alphabets $ \mathcal{U} $ and $ \mathcal{V} $, respectively, where $ \mathcal{U}=\mathcal{V}=\FF_p $. In building the code, we use the Unionized Coset Codes (UCCs)  \cite{pradhanalgebraic} as defined above in Definition \ref{def:UCC}.




 For every $(\bar{\mu}_{1},\bar{\mu}_{2}) $, consider two UCCs $ (G^{},h_1^{(\bar{\mu}_1)}) $ and $ (G^{},h_2^{(\bar{\mu}_2)}) $, each with parameters $ (n,k,l_1,p) $ and $ (n,k,l_2,p) $, respectively. Note that, for every $(\bar{\mu}_1,\bar{\mu}_2), $ they share the same generator matrix $ G^{}. $

For each $ (\bar{\mu}_1,\bar{\mu}_2) $, the generator matrix $ G^{} $ along with the function $ h_1^{(\bar{\mu}_1)} $ and $ h_2^{(\bar{\mu}_2)} $ generates $ p^{k + l_1} $ and $ p^{k+l_2} $ codewords, respectively. Each of these codewords are characterized by a triple $ (a_i,m_i,\bar{\mu}_i)$, where $ a_i \in \FF^{k}_p $ and $ m_i \in \FF^{l_i}_p$  corresponds to the coarse code and the fine code indices, respectively, for $ i\in[1,2]$. Let $ \UBarCodeword $ and $ \VBarCodeword $ denote the codewords associated with   Alice and Bob,  generated using the above procedure, respectively, where 
\begin{align}
\UBarCodeword &\deq a_1 G^{} + h_1^{(\bar{\mu}_1)}(i) \quad \text{ and } \nonumber \\
\VBarCodeword &\deq a_2 G^{} + h_2^{(\bar{\mu}_2)}(j). \nonumber
\end{align}
Now, construct the operators
\begin{align}\label{eq:A_uB_v}
\bar{A}^{(\bar{\mu}_1)}_{u^n} &\deq  \alpha_{u^n} \bigg(\sqrt{\rho_{A}}^{-1}\rhotilduA\sqrt{\rho_{A}}^{-1}\bigg)\quad  \text{ and  } \nonumber \\
\bar{B}^{(\bar{\mu}_2)}_{v^n} &\deq \beta_{v^n} \bigg(\sqrt{\rho_{B}}^{-1}\rhotildvB\sqrt{\rho_{B}}^{-1}\bigg),
\end{align}
where 
\begin{align}\label{eq:gamma_mu}
\alpha_{u^n}& \deq  \frac{1}{(1+\eta)}\frac{p^n}{p^{k+l_1}}\lambda_{u^n}^A,\quad 
\beta_{v^n} \deq  \frac{1}{(1+\eta)}\frac{p^n}{p^{k+l_2}}\lambda_{v^n}^B, 
\end{align}
 with $\eta \in (0,1)$ being a parameter to be determined.
Having constructed the operators $ \bar{A}^{(\bar{\mu}_1)}_{u^n} $ and $ \bar{B}^{(\bar{\mu}_2)}_{v^n} $, we normalize these operators, so that they constitute a valid sub-POVM. To do so, we first  define
\begin{align}
\Sigma_A^{(\bar{\mu}_1)} &\deq \sum_{u^n}\gammaBarCoeff\bar{A}^{(\bar{\mu}_1)}_{u^n} \quad \text{ and  } \nonumber\\
\Sigma_B^{(\bar{\mu}_2)} &\deq \sum_{v^n}\zetaBarCoeff\bar{B}^{(\bar{\mu}_2)}_{v^n}, \nonumber
\end{align} 
where $ \gammaBarCoeff $ and $ \zetaBarCoeff $ are defined as  
\begin{align}
\gammaBarCoeff & \deq |\{(a_1,i): \UBarCodeword = u^n\}| \quad \text{ and  } \nonumber \\
\zetaBarCoeff & \deq |\{(a_2,j): \VBarCodeword = v^n\}| .\nonumber
 \end{align}
Now, we define $ \CutOffBarA $ and $ \CutOffBarB $ as pruning operators for $\Sigma_A^{(\bar{\mu}_1)} $ and $ \Sigma_B^{(\bar{\mu}_2)}, $ with respect to $\PiA$ and $\PiB$, respectively (see Definition \ref{def:pruningOperator}).
Note that, these pruning operators, $\CutOffBarA$ and $\CutOffBarB$, depend on the triple $(G^{},h_1^{(\bar{\mu}_1)},h_2^{(\bar{\mu}_2)})$. 
Using these pruning operators, for each $\bar{\mu}_{1} \in [1,\bar{N}_{1}]$ and $ \bar{\mu}_{2} \in [1,\bar{N}_{2}]$, construct the sub-POVMs $M_1^{( n, \bar{\mu}_1)}$ and $M_2^{( n, \bar{\mu}_2)}$ as  
\begin{align}
M_1^{( n, \bar{\mu}_1)}& \deq \{\gammaBarCoeff A^{(\bar{\mu}_1)}_{u^n} : u^n \in \mathcal{U}^n\},\quad \text{ and  } \nonumber \\
M_2^{(n, \bar{\mu}_2)} &\deq \{\zetaBarCoeff B^{(\bar{\mu}_2)}_{v^n} : v^n \in  \mathcal{V}^n\},
\label{eq:POVM-14}
\end{align}   
where $ A^{(\bar{\mu}_1)}_{u^n} = \CutOffA\bar{A}^{(\bar{\mu}_1)}_{u^n}\CutOffA $ and $ B^{(\bar{\mu}_2)}_{v^n} = \CutOffB \bar{B}^{(\bar{\mu}_2)}_{v^n}\CutOffB $.
Further, using these operators $ \CutOffBarA  $ and $ \CutOffBarB, $ we have $ \sum_{u^n}\gammaBarCoeff A^{(\bar{\mu}_1)}_{u^n} = \CutOffBarA\Sigma_A^{(\bar{\mu}_1)}\CutOffBarA \leq I $ and $ \sum_{v^n}\zetaBarCoeff B^{(\bar{\mu}_2)}_{v^n} = \CutOffBarB\Sigma_B^{(\bar{\mu}_2)}\CutOffBarB \leq I,$ and thus $ M_1^{( n, \bar{\mu}_1)} $ and $ M_2^{(n, \bar{\mu}_2)} $ are valid sub-POVMs for all $ \bar{\mu}_1\in[1,\bar{N}_1] $ and $ \bar{\mu}_2 \in [1,\bar{N}_2]. $ Further, these collections $ M_1^{( n, \bar{\mu}_1)}$ and $ M_2^{( n, \bar{\mu}_2)}$ are completed using the operators $I - \sum_{u^n \in \mathcal{U}^n}\gammaBarCoeff A^{(\bar{\mu}_1)}_{u^n}$ and $I - \sum_{v^n\in\mathcal{V}^n}\zetaBarCoeff B^{(\bar{\mu}_2)}_{v^n}$.

\subsection{Binning of POVMs}\label{sec:POVM binning}
We next proceed to binning the above constructed collection of sub-POVMs.
Since, UCC is already a union of several cosets, we associate a bin to each coset, and hence place all the codewords of a coset in the same bin. For each $i\in \FF_p^{l_1}$ and $j\in \FF_p^{l_2}$, let $\mathcal{B}^{(\bar{\mu}_1)}_1(i) \deq \mathbb{C}(G^{},h_1^{(\bar{\mu}_1)}(i))$ and $\mathcal{B}^{(\bar{\mu}_2)}_2(j) \deq \mathbb{C}(G^{},h_2^{(\bar{\mu}_2)}(j))$ denote the $i^{th}$ and the  $j^{th}$ bins, respectively. 
Formally, we define the following operators:
\begin{align*}
\Gamma^{A, ( \bar{\mu}_1)}_i &\deq  \sum_{u^n \in \mathcal{U}^n}\sum_{a_{1} \in \FF_p^{k}} A^{(\bar{\mu}_1)}_{u^n}\mathbbm{1}_{\{a_1G^{}+h_1^{(\bar{\mu}_1)}(i) = u^n\}},  \nonumber \\
 \Gamma^{B,( \bar{\mu}_2)}_j &\deq  \sum_{v^n \in \mathcal{V}^n}\sum_{a_{2} \in \FF_p^{k_2}} B^{(\bar{\mu}_2)}_{v^n}\mathbbm{1}_{\{a_2G^{}+h_2^{(\bar{\mu}_2)}(j)=v^n\}},
\end{align*}
for all  $i\in \FF_p^{l_1}$ and $j\in \FF_p^{l_2}$.  
Using these operators, we form the following collection:
\begin{align}
M_A^{( n, \bar{\mu}_1)} \deq  \{\Gamma^{A,  (\bar{\mu}_1)}_i\}_{ i \in \FF_p^{l_1} }, \quad 
M_B^{( n, \bar{\mu})}\deq  \{\Gamma^{B, ( \bar{\mu}_2)}_j \}_{j \in \FF_p^{l_2} }.
\label{eq:POVM-16}
\end{align}
Note that if  $M_1^{( n, \bar{\mu}_1)}$ and $M_2^{( n, \bar{\mu}_2)}$ are sub-POVMs, then so are $M_A^{( n, \bar{\mu}_1)}$ and $M_B^{( n, \bar{\mu}_2)}$, which is due to the  relations
\begin{align}\label{eq:sumGamma}
\sum_{i\in \FF_p^{l_1}} \Gamma^{A, ( \bar{\mu}_1)}_i & = \sum_{u^n \in \mathcal{U}^n}\gammaCoeff A^{(\bar{\mu}_1)}_{u^n} \leq I,\quad \text{and} \nonumber \\
\sum_{j\in \FF_p^{l_2}} \Gamma^{B, ( \bar{\mu}_2)}_j & = \sum_{v^n\in \mathcal{V}^n}\zetaCoeff B^{(\bar{\mu}_2)}_{v^n} \leq I.
\end{align}
To make $ M_A^{( n, \bar{\mu}_1)} $ and $M_B^{( n, \bar{\mu}_2)}$ complete, we define $\Gamma^{A, ( \bar{\mu}_1)}_0$ and $ \Gamma^{B, ( \bar{\mu}_2)}_0 $ as $\Gamma^{A, ( \bar{\mu}_1)}_0=I-\sum_i \Gamma^{A, ( \bar{\mu}_1)}_i$ and $\Gamma^{B, ( \bar{\mu}_2)}_0=I-\sum_j \Gamma^{B, ( \bar{\mu}_2)}_j$, respectively\footnote{{Note that $\Gamma^{A, ( \bar{\mu}_1)}_0=I-\sum_i \Gamma^{A, ( \bar{\mu}_1)}_i = I - \sum_{u^n \in T_{\delta}^{(n)}(U)}A^{(\bar{\mu}_1)}_{u^n}$ and $\Gamma^{B, ( \bar{\mu}_2)}_0=I-\sum_j \Gamma^{B, ( \bar{\mu}_2)}_j = I - \sum_{v^n \in T_{\delta}^{(n)}(V)}B^{(\bar{\mu}_2)}_{v^n}$}.}. Now, we intend to use the completions $[M_A^{( n, \bar{\mu}_1)}]$ and $[M_B^{( n, \bar{\mu}_2)}]$ as the POVMs for encoders associated with Alice and Bob, respectively.
Also, note that the effect of the binning is in reducing the communication rates from $(\frac{k+l_1}{n}\log{p}, \frac{k+l_2}{n}\log{p})$ to $(R_1,R_2)$, where $ R_i \deq \frac{l_i}{n}\log{p}, i\in\{1,2\}$. Now, we move on to describing the decoder.

\subsection{Decoder mapping }

We create a decoder that takes as an input a pair of bin numbers and produces a sequence  $W^n \in \FF^n_p$. More precisely, we define a mapping $F^{(\bar{\mu}_1,\bar{\mu}_2)}$, acting on the outputs of $[M_A^{( n, \bar{\mu}_1)}] \tensor [M_B^{( n, \bar{\mu}_2)}]$ as follows. 
On observing $(\bar{\mu}_1,\bar{\mu}_2)$ and the classical indices $ (i,j) \in \FF_p^{l_1}\times \FF_p^{l_2} $ communicated by the encoder, the decoder constructs $D^{(\bar{\mu}_1,\bar{\mu}_2)}$ and $F^{(\bar{\mu}_1,\bar{\mu}_2)}(\cdot,\cdot)$ as, 
\begin{align}
D^{(\bar{\mu}_1,\bar{\mu}_2)}_{i,j} & \!\deq\!  \Big \{\!\tilde{a} \in \FF_p^{k}:   \tilde{a}G^{} + h_1^{(\bar{\mu}_1)}(i)+h_2^{(\bar{\mu}_2)}(j) \!\in\! \mathcal{T}_{\hat{\delta}}^{(n)}(W)\!\Big \}, \nonumber \\
F^{(\bar{\mu}_1,\bar{\mu}_2)}&(i,j)\deq \nonumber \\ &
\!\!\! \begin{cases}
	  \tilde{a}G^{} + h_1^{(\bar{\mu}_1)}(i)+h_2^{(\bar{\mu}_2)}(j) &\; \text{ if  } D^{(\bar{\mu}_1,\bar{\mu}_2)}_{i,j} \equiv \{\tilde{a}\} \\
	w^n_0 &\; \text{ otherwise  }, 
\end{cases}
\label{eq:POVM-17}
\end{align}
where $\hat{\delta} = p\delta$ and
{$w_0^n$ is an arbitrary sequence in $ \FF_p^n\backslash\mathcal{T}_{\hat{\delta}}^{(n)}(W) $}.
Further, $F^{(\bar{\mu}_1,\bar{\mu}_2)}(i,j)=w_0^n$ for $i=0$ or $j=0$. Given this, we obtain the sub-POVM 
$\tilde{M}_{AB}$ with the following operators. 
\begin{align*}
\tilde{\Lambda}_{w^n}^{AB} \deq \frac{1}{\bar{N}_1\bar{N}_2}\sum_{\bar{\mu}_1=1}^{\bar{N}_1}\sum_{\bar{\mu}_2=1}^{\bar{N}_2} 
\sum_{(i,j):F^{(\bar{\mu}_1,\bar{\mu}_2)}(i,j)=w^n} \hspace{-15pt}
\Gamma^{A, ( \bar{\mu}_1)}_i\tensor \Gamma^{B, (\bar{\mu}_2)}_j,
\end{align*}
$
\forall w^n \in \FF_p^n \medcup \{w_0^n\}.$ Now, we use the stochastic mapping $ \P_{Z|W} $ to define the approximating sub-POVM $\hat{M}^{(n)}_{AB} \deq \{\hat{\Lambda}_{z^n}\}$ as
\begin{align*}
\hat{\Lambda}^{AB}_{z^n}\deq\sum_{w^n}  \tilde{\Lambda}_{w^n}^{AB}P^n_{Z|W}(z^n|w^n), ~ \forall z^n\in \mathcal{Z}^n.
\end{align*}
{Note that $\tilde{\Lambda}_{w^n}^{AB} = 0$ for $w^n \notin \TDeltan(W)\medcup\{w^n_0\}.$}

\noindent \textbf{UCC Grand Ensemble:} The generator matrix $G^{}$ and the functions $h_1^{(\bar{\mu}_1)}$ and $h_2^{(\bar{\mu}_2)}$ are chosen randomly uniformly and independently, for $\bar{\mu}_1 \in [1,\bar{N}_1]$ and $\bar{\mu}_2 \in [1,\bar{N}_2].$ 

\subsection{Trace Distance}
In what follows, we show that $\hat{M}_{AB}^{(n)}$ is $\epsilon$-faithful to $M_{AB}^{\tensor n}$ with respect to $\rho_{AB}^{\tensor n}$ (according to Definition \ref{def:faith-sim}), where $\epsilon>0$ can be made arbitrarily small. More precisely, using \eqref{eq:nf_LambdaAB decompos}, we show that, $ \EE[K] \leq \epsilon, $ where 
\begin{align}\label{eq:nf_actual trace dist}
{K} &\deq \sum_{z^n} \bigg\| \sum_{u^n,v^n}\sqrt{\rho_{AB}^{\tensor n}} (\bar{\Lambda}^A_{u^n}\tensor \bar{\Lambda}^B_{v^n})\sqrt{\rho_{AB}^{\tensor n}}  \nonumber \\
& \hspace{0.4in}\times P^n_{Z|W}(z^n|u^n+ v^n)-   \sqrt{\rho_{AB}^{\tensor n}}\hat{\Lambda}_{z^n}^{AB}\sqrt{\rho_{AB}^{\tensor n}}\bigg\|_1,
\end{align}
and the expectation is with respect to the codebook generation.

\noindent \textbf{Step 1: Isolating the effect of error induced by not covering}\\ Consider the second term within $ {K} $, which can be written as
\begin{align}
\sum_{w^n}&\sqrt{\rho_{AB}^{\tensor n}}\tilde{\Lambda}_{w^n}^{AB}\sqrt{\rho_{AB}^{\tensor n}}P^n_{Z|W}(z^n|w^n)\nonumber\\
&= \frac{1}{\bar{N}_1\bar{N}_2}\sum_{\bar{\mu}_1,\bar{\mu}_2}\sum_{i,j} \sqrt{\rho_{AB}^{\tensor n}}\left (\Gamma^{A, ( \bar{\mu}_1)}_i\tensor \Gamma^{B, (\bar{\mu}_2)}_j\right ) \sqrt{\rho_{AB}^{\tensor n}}\nonumber \\
& \hspace{0.32in} \times P^n_{Z|W}(z^n|F^{(\bar{\mu}_1,\bar{\mu}_2)}(i,j))\underbrace{\sum_{w^n}\mathbbm{1}_{\{F^{(\bar{\mu}_1,\bar{\mu}_2)}(i,j) = w^n\}}}_{=1} \nonumber \\\vspace{-20pt}
& = T + \widetilde{T}, \nonumber
\end{align}
where \begin{align}
 T \deq & \frac{1}{\bar{N}_1\bar{N}_2}\sum_{\bar{\mu}_1,\bar{\mu}_2}\sum_{\{i>0\} \medcap \{j>0\}}\hspace{-10pt} \sqrt{\rho_{AB}^{\tensor n}}\left (\Gamma^{A, ( \bar{\mu}_1)}_i\tensor \Gamma^{B, (\bar{\mu}_2)}_j\right )\nonumber \\
  & \hspace{1.1in}\times\sqrt{\rho_{AB}^{\tensor n}} P^n_{Z|W}(z^n|F^{(\bar{\mu}_1,\bar{\mu}_2)}(i,j)), \nonumber \\
 \widetilde{T} \deq &\frac{1}{\bar{N}_1\bar{N}_2}\sum_{\bar{\mu}_1,\bar{\mu}_2}\sum_{\{i=0\}\medcup\{j=0\}}\hspace{-10pt} \sqrt{\rho_{AB}^{\tensor n}}\left (\Gamma^{A, ( \bar{\mu}_1)}_i\tensor \Gamma^{B, (\bar{\mu}_2)}_j\right )\nonumber \\
  & \hspace{1.7in}\times\sqrt{\rho_{AB}^{\tensor n}} P^n_{Z|W}(z^n|w^{n}_{0}). \nonumber 
\end{align}
Hence, we have
\begin{align}
    K \leq S+ \widetilde{S} \label{eq:G_separation_1},
\end{align}
where
\begin{align}
    S &\deq \sum_{z^n} \bigg\| \sum_{u^n,v^n}\sqrt{\rho_{AB}^{\tensor n}}\bigg( \bar{\Lambda}^A_{u^n}\tensor \bar{\Lambda}^B_{v^n}  \nonumber \\
    & \hspace{0.7in}  \times P^n_{Z|W}(z^n|u^n+ v^n)\bigg)\sqrt{\rho_{AB}^{\tensor n}} - T \bigg\|_1, \label{eq:def_S}
\end{align}
and $\widetilde{S} \deq \sum_{z^n}\|\widetilde{T}\|_1$. Note that $\widetilde{S}$ captures the error induced by not covering the state $\rho_{AB}^{\tensor n}.$
%
 
{\begin{remark}
The terms corresponding to the operators that complete the sub-POVMs $M_A^{(n,\bar{\mu}_1)}$ and $M_B^{(n,\bar{\mu}_2)}$, i.e., $ I - \sum_{u^n \in \TDeltaN(U)}\gammaBarCoeff A_{u^n}^{(\bar{\mu}_1)}$ and $ I - \sum_{v^n \in \TDeltaN(V)}\zetaBarCoeff B_{v^n}^{(\bar{\mu}_2)}$ are taken care in $\widetilde{T}$. The expression $T$ excludes these completing operators. 
\end{remark}
}
\noindent{\bf Step 2: Isolating the effect of error induced by binning}\\\
We begin by simplifying $ T $ as
\begin{align}
T 
= & \frac{1}{\bar{N}_1\bar{N}_2}\sum_{\bar{\mu}_1,\bar{\mu}_2}\sum_{u^n,v^n}\sum_{\substack{i>0,\\j>0}} \sqrt{\rho_{AB}^{\tensor n}}  \nonumber \\ 
& \hspace{0pt} \times \!\bigg(\sum_{a_1\in \FF_p^{k}}\sum_{a_2 \in \FF_p^{k}}A_{u^n}^{(\bar{\mu}_1)}\tensor B_{v^n}^{(\bar{\mu}_2)}\mathbbm{1}_{\{\substack{a_1G^{} + h_1^{(\bar{\mu}_1)}(i) = u^n}\}}   \nonumber \\
& \hspace{0pt}\times\!\mathbbm{1}_{\{\substack{a_2G^{} + h_2^{(\bar{\mu}_2)}(j)  = v^n}\}}\!\bigg)\!\sqrt{\rho_{AB}^{\tensor n}} P^n_{Z|W}(z^n|F^{(\bar{\mu}_1,\bar{\mu}_2)}(i,j)). \nonumber 
\end{align}
{Note that the $(u^n,v^n)$ that appear in the above summation is confined to $(\TDeltaN(U)\times\TDeltaN(V))$, however for ease of notation, we do not make this explicit.} We substitute the above expression into $S$ as in \eqref{eq:def_S}, 
and add and subtract an appropriate term within $ S $ and apply the triangle inequality to isolate the effect of binning as $ S \leq S_{1}  + S_{2},$ where 
\begin{align}
S_{1} & \deq \sum_{z^n} \bigg\|\sum_{u^n,v^n}\sqrt{\rho_{AB}^{\tensor n}}  \bigg (\bar{\Lambda}^A_{u^n}\tensor \bar{\Lambda}^B_{v^n}  -\frac{1}{\bar{N}_1\bar{N}_2} \sum_{\bar{\mu}_1,\bar{\mu}_2} \nonumber \\
& \hspace{05pt}\times \gammaBarCoeff\! A_{u^n}^{(\bar{\mu}_1)}\! \tensor\zetaBarCoeff \!B_{v^n}^{(\bar{\mu}_2)}\!\bigg )\sqrt{\rho_{AB}^{\tensor n}} P^n_{Z|W}(z^n|u^n+ v^n)\bigg\|_1\!\!, \nonumber \\
S_{2} & \deq \sum_{z^n}\bigg \| \frac{1}{\bar{N}_1\bar{N}_2}\!\! \sum_{\bar{\mu}_1,\bar{\mu}_2}\sum_{\substack{i>0\\j>0}}\sum_{a_1,a_2} \sum_{u^n,v^n} \!\! \sqrt{\rho_{AB}^{\tensor n}}\! \left (\! A_{u^n}^{(\bar{\mu}_1)}  \right.\nonumber \\ & \hspace{10pt}  \left.\tensor B_{v^n}^{(\bar{\mu}_2)}\right )\!
\sqrt{\rho_{AB}^{\tensor n}}\mathbbm{1}_{\{a_1G^{} + h_1^{(\bar{\mu}_1)}(i) = u^n, a_2G^{} + h_2^{(\bar{\mu}_2)}(j) = v^n\}}   \nonumber \\ 
& \hspace{15pt}\times \left(\!P^n_{Z|W}(z^n|u^n\!+\! v^n)-   P^n_{Z|W}\!\left (\!z^n|F^{(\bar{\mu}_1,\bar{\mu}_2)}(i,j)\!\right )\! \right)\!\!\bigg \|_1\!. \nonumber
\end{align}
Note that the term $ S_{1} $ characterizes the error introduced by approximation of the original POVM with  the collection of  approximating sub-POVMs $ M_1^{(n,\bar{\mu}_1)} $ and $ M_2^{{(n,\bar{\mu}_2)}} $, and the term $ S_2 $ characterizes the error caused by binning of these approximating sub-POVMs. In this step, we analyze $S_2$ and prove the following proposition. 

\begin{proposition}[Mutual Packing]\label{prop:Lemma for S_2}
For any $\epsilon \in (0,1)$, any $\eta, \delta \in (0,1)$ sufficiently small, and any $n$ sufficiently large, we have
$\EE\left[{S}_2\right] \leq \epsilon $, 
if $\frac{k+l_1}{n} \log p > I(U;RB)_{\sigma_1}-S(U)_{\sigma_3}+\log p$, $\frac{k+l_2}{n} \log p  > I(V;RA)_{\sigma_2}-S(V)_{\sigma_3}+\log p$, 
$\frac{k+l_1}{n} \log p +\frac{1}{n} \log \bar{N}_1 > \log p$, $\frac{k+l_2}{n} \log p +\frac{1}{n} \log \bar{N}_2 > \log p$, 
$ \frac{k}{n}\log{p} < \log{p} - S(W)_{\sigma_{3}}$,   where $\sigma_1, \sigma_2$  and $\sigma_3$ are the auxiliary states as defined in the statement of the theorem.
\end{proposition}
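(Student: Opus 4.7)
The plan is to reduce $S_2$ to a decoder-failure-weighted quantity and then bound it by combining a mutual-packing argument with a pairwise-independent covering argument. The key observation is that the bracketed difference $P^n_{Z|W}(z^n|u^n+v^n) - P^n_{Z|W}(z^n|F^{(\bar{\mu}_1,\bar{\mu}_2)}(i,j))$ vanishes whenever $F^{(\bar{\mu}_1,\bar{\mu}_2)}(i,j) = u^n+v^n$. Thus, after a triangle inequality and the bound $\sum_{z^n}|P^n_{Z|W}(z^n|\cdot)|\le 1$, $\EE[S_2]$ is bounded by twice the expected sum (over $\bar{\mu}_1,\bar{\mu}_2,i,j,a_1,a_2,u^n,v^n$) of $\|\sqrt{\rho_{AB}^{\tensor n}}(A_{u^n}^{(\bar{\mu}_1)}\tensor B_{v^n}^{(\bar{\mu}_2)})\sqrt{\rho_{AB}^{\tensor n}}\|_1$ weighted by the indicator of the failure event $\mathcal{E}\subseteq\mathcal{E}_1\cup\mathcal{E}_2$, where $\mathcal{E}_1\deq\{u^n+v^n \notin \mathcal{T}_{\hat{\delta}}^{(n)}(W)\}$ and $\mathcal{E}_2\deq\{\exists\,\tilde{a}\in\FF_p^k,\ \tilde{a}\neq a_1+a_2:\ \tilde{a} G + h_1^{(\bar{\mu}_1)}(i)+h_2^{(\bar{\mu}_2)}(j) \in \mathcal{T}_{\hat{\delta}}^{(n)}(W)\}$.

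For the packing piece $\mathcal{E}_2$, I would exploit the pairwise independence of UCC codewords: for each $\tilde{a}\neq a_1+a_2$, the codeword $\tilde{a}G + h_1^{(\bar{\mu}_1)}(i)+h_2^{(\bar{\mu}_2)}(j)$ is uniform on $\FF_p^n$ and pairwise independent of the correct sum codeword $(a_1+a_2)G + h_1^{(\bar{\mu}_1)}(i)+h_2^{(\bar{\mu}_2)}(j)$. Since $|\mathcal{T}_{\hat{\delta}}^{(n)}(W)| \le 2^{n(S(W)_{\sigma_3}+\delta')}$, each such alternative lands in the typical set with probability at most $2^{-n(\log p - S(W)_{\sigma_3}-\delta')}$. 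Union-bounding over the $p^k-1$ alternatives and noting that, on the complement of $\mathcal{E}_2$, the indicator is already handled yields $\Pr[\mathcal{E}_2]\to 0$ precisely under the packing constraint $\frac{k}{n}\log p < \log p - S(W)_{\sigma_3}$.

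For $\mathcal{E}_1$ and, more generally, to take expectation over the random operator weights, I would replace the random POVM-weighted sum by its average invoking Lemma \ref{lem:Change Measure Soft Covering Variance Based} (the covering lemma for pairwise-independent ensembles). Applied on Alice's side with the auxiliary state $\sigma_1^{RSB}$, which keeps Bob's quantum register $B$ intact, it gives the covering rate $I(U;RB)_{\sigma_1}+\log p - S(U)_{\sigma_3}$, matching the constraint on $\frac{k+l_1}{n}\log p$; the symmetric application on Bob's side gives the constraint on $\frac{k+l_2}{n}\log p$. The auxiliary conditions $\frac{k+l_i}{n}\log p + \frac{1}{n}\log \bar{N}_i > \log p$ arise from the absolute-continuity hypothesis of that lemma, ensuring the change-of-measure distribution $\mu_{\bar{x}}$ on $\FF_p^n$ can be taken uniform. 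Once the random operators are effectively replaced by their averages, the induced empirical joint distribution of $(u^n,v^n)$ is close to $\lambda^{AB}_{u^n,v^n}$, so $u^n+v^n$ is $\hat{\delta}$-typical for $W$ with probability $1-o(1)$ by classical conditional typicality for $W=U+V$. Summing the two pieces yields $\EE[S_2]\le\epsilon$ for $n$ large.

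The hard part is that UCC codewords are only pairwise independent, ruling out operator-Chernoff concentration; the entire concentration burden falls on Lemma \ref{lem:Change Measure Soft Covering Variance Based} together with the Pruning Trace Inequality (Lemma \ref{lem:newMarkov}). A further subtlety is that Alice's and Bob's UCCs share the same generator matrix $G$, coupling the two codebooks statistically; the covering arguments must be executed on the tensor-product operators $A_{u^n}^{(\bar{\mu}_1)}\tensor B_{v^n}^{(\bar{\mu}_2)}$ while exploiting that each marginal codeword remains uniform on $\FF_p^n$ and that cross-codeword pairs still satisfy the pairwise-independence premise whenever $\tilde{a}\neq a_1+a_2$. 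Balancing these two uses of the shared generator, so that covering on each arm and packing of the sum code can be carried out simultaneously, is the most delicate bookkeeping step and is what pins down the stated rate region.
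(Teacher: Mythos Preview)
Your decomposition and the overall strategy match the paper's: split the decoder-failure indicator into a non-typicality part and a collision part, handle the collision via the pairwise-independence union bound (yielding $\frac{k}{n}\log p < \log p - S(W)_{\sigma_3}$), and handle the non-typicality part by arguing that the approximating POVMs are faithful enough that the induced classical joint law is close to $\lambda^{AB}_{u^n,v^n}$. Two points deserve correction.

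First, your attribution of the constraints $\frac{k+l_i}{n}\log p + \frac{1}{n}\log\bar{N}_i > \log p$ to the ``absolute-continuity hypothesis'' of Lemma~\ref{lem:Change Measure Soft Covering Variance Based} is not right. The uniform reference measure $\mu_x=p^{-n}$ on $\FF_p^n$ is forced by the UCC construction and $\lambda_x/\mu_x\le\kappa$ is automatic; that hypothesis produces the $\kappa$ in the covering bound, not these rate inequalities. In the paper, the $\bar{N}_i$ conditions arise from invoking the point-to-point result (Theorem~\ref{thm:p2pTheorem}) with the identity stochastic map $P_{Z|W}(z|w)=\mathbbm{1}\{z=w\}$, i.e., from requiring enough total randomness so that each marginal approximating POVM $[M_1^{(\bar\mu_1)}]$, $[M_2^{(\bar\mu_2)}]$ is itself faithful; with $Z=W$ one has $I(W;RZ)_\sigma=S(W)_\sigma$ and the constraint $R_1+R+C>\log p$ drops out. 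This faithful-simulation step is what feeds into the paper's Lemma~\ref{lem:classical prob}, which converts operator trace distance $\tilde{S}_1$ into classical total variation on $(u^n,v^n)$, and thereby controls the non-typicality mass.

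Second, in the collision piece you stop after bounding $\Pr[\mathcal{E}_2]$, but the weights you are summing are operator trace norms $\Omega_{u^n,v^n}=\|\sqrt{\rho_{AB}^{\tensor n}}(A_{u^n}^{(\bar\mu_1)}\tensor B_{v^n}^{(\bar\mu_2)})\sqrt{\rho_{AB}^{\tensor n}}\|_1$, not probabilities. After taking expectation over the codebooks you are left with $\sum_{u^n,v^n}\lambda^A_{u^n}\lambda^B_{v^n}\Omega_{u^n,v^n}$, and this is not obviously $\le 1$ because the pruned operators inside $\Omega$ live on the $\delta$-typical subspaces of $\rho_A,\rho_B$ rather than on the joint state. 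The paper closes this with a dedicated estimate (Lemma~\ref{lem:omega lambda}) showing $\sum_{u^n,v^n}\lambda^A_{u^n}\lambda^B_{v^n}\Omega_{u^n,v^n}\le 2^{n\delta_{\rho_{AB}}}$ via typical-projector eigenvalue bounds; without it the packing term does not close.
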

\begin{proof}
The proof is provided in Appendix \ref{appx:proof of S_2}
\end{proof}
Since averaged over $\tilde{\mu}_1 \in [1,\tilde{N}_1],\tilde{\mu}_2\in [1,\tilde{N}_2]$, the quantity
$\mathbb{E}[S_2]$ can be made arbitrarily small, there must exist a pair $(\tilde{\mu}_1,\tilde{\mu}_2)$ such that $\EE[S_2]$ is small for this pair of $(\tilde{\mu}_1,\tilde{\mu}_2)$.
For the rest of the proof, we fix $(\tilde{\mu}_1,\tilde{\mu}_2)$ to be this pair. The dependence of functions defined in the sequel on this pair is not made explicit for ease of notation.
For the term corresponding to $\widetilde{S}$, we  prove the following result.
\begin{proposition}\label{prop:Lemma for Tilde_S}
For any $\epsilon \in (0,1)$, any $\eta, \delta \in (0,1)$ sufficiently small, and any $n$ sufficiently large, we have
$\EE[\widetilde{S}] \leq\epsilon$, if  $\frac{k+l_1}{n}\log{p} >  I(U;RB)_{\sigma_1} - S(U)_{\sigma_1}  +\log{p}  $ and $\frac{k+l_2}{n}\log{p} > I(V;RA)_{\sigma_2} - S(V)_{\sigma_2} + \log{p} ,$ where $\sigma_1$ and $\sigma_2$ are auxiliary states defined in the statement of the theorem.
\end{proposition}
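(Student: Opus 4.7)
The plan is to reduce the bound on $\widetilde{S}$ to two separate single-system covering analyses, one per subsystem, and then invoke (for each) essentially the same argument used for the $\widetilde{S}$ term in the point-to-point proof of Section~\ref{sec:p2pProof}. First I would observe that $\widetilde{T}$ is positive semidefinite and that $\sum_{z^n} P^n_{Z|W}(z^n|w_0^n)=1$, so $\widetilde{S} = \sum_{z^n}\Tr\{\widetilde{T}\}$. Using the identity
$$\sum_{\{i=0\}\cup\{j=0\}}\Gamma^{A,(\bar{\mu}_1)}_i\tensor\Gamma^{B,(\bar{\mu}_2)}_j = \Gamma^{A,(\bar{\mu}_1)}_0\tensor I + I\tensor\Gamma^{B,(\bar{\mu}_2)}_0 - \Gamma^{A,(\bar{\mu}_1)}_0\tensor\Gamma^{B,(\bar{\mu}_2)}_0,$$
and dropping the last (positive) term, Lemma~\ref{lem:Separate} applied with the trivial POVM $\{I\}$ on the spectator system gives
$$\widetilde{S} \leq \frac{1}{\bar{N}_1}\sum_{\bar{\mu}_1}\Big\|\sqrt{\rho_A^{\tensor n}}\,\Gamma^{A,(\bar{\mu}_1)}_0\,\sqrt{\rho_A^{\tensor n}}\Big\|_1 + \frac{1}{\bar{N}_2}\sum_{\bar{\mu}_2}\Big\|\sqrt{\rho_B^{\tensor n}}\,\Gamma^{B,(\bar{\mu}_2)}_0\,\sqrt{\rho_B^{\tensor n}}\Big\|_1.$$

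Each marginal has the same form as the $\widetilde{S}$ quantity bounded in Section~\ref{sec:p2pProof}, because $\Gamma^{A,(\bar{\mu}_1)}_0 = I - \sum_{u^n\in\mathcal{T}^{(n)}_\delta(U)}\gammaBarCoeff A^{(\bar{\mu}_1)}_{u^n}$. Following that proof verbatim, I would add and subtract the canonical state $\sum_{u^n}\lambda^A_{u^n}\hat\rho^A_{u^n}$ and the pre-pruned operators $\gammaBarCoeff\bar A^{(\bar{\mu}_1)}_{u^n}$, yielding a split $\widetilde{S}^{(A)}\leq \widetilde{S}^{(A)}_1+\widetilde{S}^{(A)}_2$. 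For $\widetilde{S}^{(A)}_1$, the pre-pruning covering error, I would apply the $n$-letter covering Lemma~\ref{lem:nf_SoftCovering} directly to the Alice ensemble $\{\lambda^A_u,\hat\rho^A_u\}$. For $\widetilde{S}^{(A)}_2$, the pruning error, I would invoke Lemma~\ref{lem:LemAandABar} to decompose it into the terms $H_0,H_1,H_2,H_3$; the term $H_1$ is controlled by the Pruning Trace Inequality (Lemma~\ref{lem:newMarkov}) applied to $\Sigma^{(\bar{\mu}_1)}_A$, which in turn reduces to another invocation of Lemma~\ref{lem:nf_SoftCovering}, while $H_0,H_2,H_3$ are bounded directly by the same covering lemma. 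An entirely symmetric argument handles the $B$-marginal, using the ensemble $\{\lambda^B_v,\hat\rho^B_v\}$ and the corresponding pruning operator $\CutOffBarB$.

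The main obstacle is the correct identification of the purifying reference system when the covering lemma is applied to each marginal. Although the trace norm has been reduced to a quantity on $A$ alone, the canonical ensemble $\{\lambda^A_u,\hat\rho^A_u\}$ carries the full quantum correlation of $\rho_{AB}$ through the joint purification $\Psi^{\rho_{AB}}_{RAB}$. Since $RAB$ is pure, the composite $RB$ purifies $\rho_A$, and consequently the ``reference'' appearing in Lemma~\ref{lem:nf_SoftCovering} must be identified with $RB$ rather than $R$ alone; this yields the rate constraint $\frac{k+l_1}{n}\log p > I(U;RB)_{\sigma_1} - S(U)_{\sigma_1} + \log p$, matching the first hypothesis of the proposition. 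By symmetry on the $B$-marginal, with $RA$ as the purifying reference, one obtains the second constraint $\frac{k+l_2}{n}\log p > I(V;RA)_{\sigma_2} - S(V)_{\sigma_2} + \log p$, completing the proof.
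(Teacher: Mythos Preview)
Your proposal is correct and follows essentially the same strategy as the paper: reduce $\widetilde{S}$ to the two single-system ``not-covering'' quantities $\big\|\sqrt{\rho_A^{\otimes n}}\,\Gamma^{A,(\bar{\mu}_1)}_0\,\sqrt{\rho_A^{\otimes n}}\big\|_1$ and $\big\|\sqrt{\rho_B^{\otimes n}}\,\Gamma^{B,(\bar{\mu}_2)}_0\,\sqrt{\rho_B^{\otimes n}}\big\|_1$, then repeat verbatim the point-to-point analysis (Lemmas~\ref{lem:nf_SoftCovering} and~\ref{lem:LemAandABar}) with $RB$, respectively $RA$, playing the role of the purifying reference. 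The only cosmetic difference is that the paper reaches this reduction via an explicit three-term split over $\{i>0,j=0\}$, $\{i=0,j>0\}$, $\{i=0,j=0\}$ together with Lemma~\ref{lem:Separate}, whereas your positivity-plus-inclusion--exclusion argument is a slightly tidier route to the same pair of marginal bounds.
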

\begin{proof}
The proof is provided in Appendix \ref{appx:proof of Tilde_S}.
\end{proof}

\noindent\textbf{Step 3: Isolating the effect of Alice's approximating measurement}\\
\noindent In this step, we separately analyze the effect of approximating measurements  at the two distributed parties in the term $S_1$. For that, we split $S_{1}$ as $ S_{1} \leq Q_1 + Q_2 $, where

\begin{widetext}
\begin{align}
Q_1 &\deq \sum_{z^n} \bigg\|\!\sum_{u^n,v^n}\!\!\!\sqrt{\rho_{AB}^{\tensor n}}  \bigg (\!\bar{\Lambda}^A_{u^n}\tensor \bar{\Lambda}^B_{v^n} -\frac{1}{N_1}\! \sum_{\mu_1=1}^{N_1}\!\! \gammaCoeff A_{u^n}^{(\mu_1)}  \tensor \bar{\Lambda}^B_{v^n}\bigg )\sqrt{\rho_{AB}^{\tensor n}} P^n_{Z|W}(z^n|u^n+ v^n)\bigg\|_1\!\!, \nonumber \\
Q_2 &\deq \sum_{z^n} \left\|\frac{1}{N_1} \sum_{\mu_1=1}^{N_1}\sum_{u^n,v^n}\sqrt{\rho_{AB}^{\tensor n}}  \left ( \gammaCoeff A_{u^n}^{(\mu_1)} \tensor \bar{\Lambda}^B_{v^n}-\frac{1}{N_2} \sum_{\mu_2=1}^{N_2} \gammaCoeff A_{u^n}^{(\mu_1)} \tensor\zetaCoeff B_{v^n}^{(\mu_2)}\right )\sqrt{\rho_{AB}^{\tensor n}} P^n_{Z|W}(z^n|u^n+ v^n)\right\|_1.\label{eq:binningIsolation}
\end{align} 
\end{widetext}

 With this partition, the terms within the trace norm of $ Q_1 $ differ only in the action of Alice's measurement. And similarly, the terms within the norm of $ Q_2 $ differ only in the action of Bob's measurement. Showing that these two terms are small forms a major portion of the achievability proof. 

\noindent{\bf Analysis of $ Q_1$:}  
To prove $ Q_1 $ is small,
we characterize the rate constraints which ensure that an upper bound to $ Q_1 $ can be made to vanish in an expected sense.
In addition, this upper bound becomes lucrative in obtaining a single-letter characterization for the rate needed to make the term corresponding to $ Q_2 $ vanish. For this, we define $ J $ as
\begin{align}\label{def:J}
J &\deq \sum_{z^n,v^n}\bigg\|\sum_{u^n}\sqrt{\rho_{AB}^{\tensor n}}  \bigg (\bar{\Lambda}^A_{u^n}\tensor \bar{\Lambda}^B_{v^n} - \nonumber \\ & \hspace{7pt}\frac{1}{N_1}\! \sum_{\mu_1=1}^{N_1} \!	 \gammaCoeff  A_{u^n}^{(\mu_1)} \tensor  \bar{\Lambda}^B_{v^n}\bigg )\sqrt{\rho_{AB}^{\tensor n}} P^n_{Z|W}(z^n|u^n+ v^n)\bigg\|_1\!\!\!.
\end{align}
By defining $J$ and using triangle inequality for block operators (which holds with equality), we add the sub-system $V$ to $RZ$, resulting in the joint system $RZV$, corresponding to the state $\sigma_3$ as defined in the theorem. Then we approximate the joint system $RZV$ using an approximating sub-POVM $M_A^{(n)}$ producing outputs on the alphabet $\mathcal{U}^n$. To make $J$ small for sufficiently large n, we expect the sum of the rate of the approximating sub-POVM and common randomness, i.e., $\frac{k+l_1}{n}\log{p} + \frac{1}{n}\log{N_1}$, to be larger than $I(U;RZV)_{\sigma_3}$. We prove this in the following.

Note that from the triangle inequality, we have $ Q_1 \leq J. $
Further, we add and subtract appropriate terms
within $ J$, and again use the triangle inequality to obtain $ J \leq J_1 + J_2$, where
\begin{align}
J_1 & \deq  \sum_{z^n,v^n}\bigg\|\sum_{u^n}\sqrt{\rho_{AB}^{\tensor n}}  \bigg (\bar{\Lambda}^A_{u^n}\tensor \bar{\Lambda}^B_{v^n} - \nonumber \\
& \hspace{33pt}\gammaCoeff  \bar{A}_{u^n}^{(\mu_1)} \tensor  \bar{\Lambda}^B_{v^n} \bigg )\sqrt{\rho_{AB}^{\tensor n}} P^n_{Z|W}(z^n|u^n+ v^n)\bigg\|_1, \nonumber \\
J_2 &\deq \sum_{z^n,v^n}\bigg\|\frac{1}{N_1} \sum_{\mu_1=1}^{N_1}\sum_{u^n}\sqrt{\rho_{AB}^{\tensor n}}  \bigg ( 	\gammaCoeff  \bar{A}_{u^n}^{(\mu_1)} \tensor  \bar{\Lambda}^B_{v^n} - \nonumber \\
& \hspace{33pt}	\gammaCoeff  A_{u^n}^{(\mu_1)} \tensor  \bar{\Lambda}^B_{v^n}\bigg )\sqrt{\rho_{AB}^{\tensor n}} P^n_{Z|W}(z^n|u^n+ v^n)\bigg\|_1. \nonumber
\end{align}
Now we use the following proposition to bound the term corresponding to $J_1$.
 \begin{proposition}\label{prop:Lemma for p2p:J_1}
 For any $\epsilon \in (0,1)$, any $\eta, \delta \in (0,1)$ sufficiently small, and any $n$ sufficiently large, we have $\EE\left[J_{1}\right] \leq \epsilon$ if $\frac{k+l_1}{n} \log p + \frac{1}{n}\log{N_1} > I(U;RZV)_{\sigma_3} +\log{p} - S(U)_{\sigma_3}$, where  $\sigma_3$ is the auxiliary state defined in the statement of the theorem.
\end{proposition}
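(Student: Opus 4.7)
The plan is to reduce bounding $\EE[J_1]$ to one application of the covering lemma for pairwise-independent ensembles (Lemma \ref{lem:nf_SoftCovering}), with a small residual controlled by the average gentle measurement lemma. The key idea is to identify the single-letter state $\theta_u$ for which the induced covering mutual information equals $I(U;RZV)_{\sigma_3}$.

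I would begin by defining
\begin{align*}
\theta_u \deq \frac{1}{\lambda^A_u}\sum_{v,z}\sqrt{\rho_{AB}}\left(\bar{\Lambda}^A_u\tensor\bar{\Lambda}^B_v\right)\sqrt{\rho_{AB}}\tensor P_{Z|W}(z|u+v)\ketbra{v}\tensor\ketbra{z},
\end{align*}
a density operator on $\mathcal{H}_A\tensor\mathcal{H}_B$ tensored with classical registers for $V$ and $Z$. Two key observations: $\theta_{u^n}=\bigotimes_i\theta_{u_i}$ factorizes tensorially, and via purification, $\sigma_\theta^{UR}\deq\sum_u\lambda^A_u\theta_u\tensor\ketbra{u}$ has the same reduced density operators (hence the same entropies and mutual information) as $\sigma_3$ on systems $URZV$, so that $I(U;R)_{\sigma_\theta} = I(U;RZV)_{\sigma_3}$ and $S(U)_{\sigma_\theta}=S(U)_{\sigma_3}$.

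Next, I would split $J_1 \leq J_1^{(a)} + J_1^{(b)}$ by adding and subtracting the intermediate term $\frac{1}{N_1}\sum_{\mu_1}\frac{\alpha_{u^n}\gammaCoeff}{\lambdauA}\bar{\Lambda}^A_{u^n}$ in place of $\gammaCoeff\bar{A}^{(\mu_1)}_{u^n}$. For $J_1^{(a)}$---the difference between the exact term and this intermediate---I would apply the block-diagonal identity $\sum_{v^n,z^n}\|X_{v^n,z^n}\|_1=\|\bigoplus_{v^n,z^n}X_{v^n,z^n}\|_1$ together with $\sqrt{\rho_{AB}^{\tensor n}}(\bar{\Lambda}^A_{u^n}\tensor\bar{\Lambda}^B_{v^n})\sqrt{\rho_{AB}^{\tensor n}} = \lambdaAB\rhohatuAvB$ to rewrite $J_1^{(a)}$ as a single trace norm matching the hypothesis of Lemma \ref{lem:nf_SoftCovering} with $W \leftarrow U$, $\theta_{w^n}\leftarrow\theta_{u^n}$, $N'\leftarrow N_1$; this yields $\EE[J_1^{(a)}]\leq \epsilon$ under the stated rate constraint. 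For $J_1^{(b)}$---the difference between the intermediate and the actual approximating term---I would apply the triangle inequality over $\mu_1,u^n$, use $\sum_{z^n}P^n_{Z|W}(z^n|u^n+v^n)=1$ to eliminate the $z^n$ sum, and collapse the sum over $v^n$ via Lemma \ref{lem:Separate} (valid since $\{\bar{\Lambda}^B_{v^n}\}$ is a POVM). Since $\bar{A}^{(\mu_1)}_{u^n}$ depends only on $u^n$ and not on the code, the expectation factorizes with $\EE[\gammaCoeff]=p^{k+l_1}/p^n$, leaving a bound of the form $\frac{1}{1+\eta}\sum_{u^n}\lambdauA\|\rhohatuA - \rhotilduA\|_1$, which vanishes as $n\to\infty$ by the average gentle measurement lemma applied to the $\delta$-typical projectors $\PiA$ and $\PiuA$.

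The principal obstacle is the coupled dependence of $P^n_{Z|W}(z^n|u^n+v^n)$ on both $u^n$ and $v^n$, which prevents a direct application of Lemma \ref{lem:Separate} at the ``exact'' level. The resolution is to absorb this $(V,Z)$-coupling into the single-letter state $\theta_u$ itself, reducing the problem to covering an ensemble indexed solely by $u$ and matching the ``$R$'' system in Lemma \ref{lem:nf_SoftCovering} with the combined quantum--classical system on $AB$ together with the $V$ and $Z$ registers.
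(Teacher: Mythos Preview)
Your proposal is correct and follows essentially the same route as the paper: the paper splits $J_1\leq J_{11}+J_{12}$ using the same intermediate term, packages the $(V,Z)$ dependence into a single-letter state $\mathcal{T}_{u}$ (your $\theta_u$) to invoke Lemma~\ref{lem:nf_SoftCovering} for $J_{11}$, and handles $J_{12}$ via $\EE[\alpha_{u^n}\gammaCoeff]=\lambdauA/(1+\eta)$, Lemma~\ref{lem:Separate}, and the average gentle measurement lemma. The only minor addition in the paper is an explicit separation of the $u^n\notin\TDeltaN(U)$ contribution (giving the $\varepsilon_A$ term) in the final bound on $J_{12}$.
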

\begin{proof}
The proof of proposition is provided in Appendix \ref{appx:proof of J_1}.
\end{proof}

Now we move on to bounding the term corresponding to $ J_2. $ We start by applying triangle inequality followed by Lemma \ref{lem:Separate} on $ J_2 $ to obtain
\begin{align}
J_2 \leq & \sum_{z^n}\sum_{u^n,v^n}P^n_{Z|W}(z^n|u^n+ v^n)\bigg\|\frac{1}{N_1} \sum_{\mu_1=1}^{N_1}\sqrt{\rho_{A}^{\tensor n}} \nonumber \\
&\hspace{15pt} \times  \left(\left ( 	\gammaCoeff  \bar{A}_{u^n}^{(\mu_1)} - 	\gammaCoeff  A_{u^n}^{(\mu_1)} \right )\tensor \bar{\Lambda}^B_{v^n} \right)\sqrt{\rho_{A}^{\tensor n}} \bigg\|_1 \nonumber \\
= & \sum_{u^n,v^n}\bigg\|\frac{1}{N_1} \sum_{\mu_1=1}^{N_1}\sqrt{\rho_{A}^{\tensor n}}  \bigg (\left(	\gammaCoeff  \bar{A}_{u^n}^{(\mu_1)} - 	\gammaCoeff  A_{u^n}^{(\mu_1)} \right )\nonumber \\
& \hspace{2.1in}\tensor \bar{\Lambda}^B_{v^n} \bigg)\sqrt{\rho_{A}^{\tensor n}} \bigg\|_1 \nonumber \\
\leq & \frac{1}{N_1} \sum_{\mu_1=1}^{N_1} \sum_{u^n}\gammaCoeff\left\|\sqrt{\rho_{A}^{\tensor n}}  \left (\bar{A}_{u^n}^{(\mu_1)} -	A_{u^n}^{(\mu_1)} \right )\sqrt{\rho_{A}^{\tensor n}}\right\|_1\!. \label{eq:term J3}
\end{align}
Now we use the following proposition to bound the term corresponding to $J_2.$
 \begin{proposition}\label{prop:Lemma for J_2}
 For any $\epsilon \in (0,1)$, any $\eta, \delta \in (0,1)$ sufficiently small, and any $n$ sufficiently large, we have  $\EE\left[J_{2}\right] \leq \epsilon$ if $\frac{k+l_1}{n}\log{p} > I(U;RB)_{\sigma_1} +\log{p} - S(U)_{\sigma_3}$, where $\sigma_1$ and $\sigma_3$ are the auxiliary states defined in the statement of the theorem.
\end{proposition}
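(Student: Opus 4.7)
The plan is to reduce the analysis of $J_2$ to a direct transcription of the $\widetilde{S}_2$ analysis from the point-to-point proof (Proposition \ref{prop:Lemma for p2p:S_13}), the only substantive change being the identification of the correct reference system for $\rho_A$ and the resulting form of the mutual information. The starting point, the bound in \eqref{eq:term J3}, has precisely the same structure as the point-to-point quantity $\widetilde{S}_2$, with $\rho$ replaced by $\rho_A$ and the ensemble $\{\lambda_w,\hat{\rho}_w\}$ replaced by Alice's canonical ensemble $\{\lambda_u^A,\hat{\rho}_u^A\}$ of \eqref{eq:dist_canonicalEnsemble}.

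First I would invoke Lemma \ref{lem:LemAandABar}, with $\rho$ replaced by $\rho_A$, on the right-hand side of \eqref{eq:term J3}, decomposing the bound into the four non-negative quantities $H_0,H_1,H_2,H_3$ defined in terms of $\Delta^{(\mu_1)}=\sum_{u^n\in\TDeltaN(U)}\alpha_{u^n}\gammaCoeff$, the pruning operator $\CutOffA$, the typical projector $\PiA$, and the pruned states $\rhotilduA$. Next, applying the Pruning Trace Inequality (Lemma \ref{lem:newMarkov}) to $\Sigma_A^{(\mu_1)}$, whose expectation is dominated by $\PiA/(1+\eta)$, converts $\EE[H_1]$ into a covering-type expectation $\EE[\widetilde{H}]$ exactly as in \eqref{eq:Simplification_H1}; the identical manipulations of \eqref{eq:Simplification_H2} and \eqref{eq:Simplification_H3} give $H_2=(1-\varepsilon)\widetilde{H}$ and $H_3\le(1+\eta)H_0$. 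Everything therefore reduces to showing that $\EE[H_0]$ and $\EE[\widetilde{H}]$ both vanish.

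Both quantities are expectations of trace norms of empirical averages over pairwise-independent UCC codewords around their means, which is precisely the setting of the $n$-letter pairwise covering lemma, Lemma \ref{lem:nf_SoftCovering}. The critical choice is the purification of $\rho_A$: taking it to be $\Psi_{RAB}^{\rho_{AB}}$, the reference of $\rho_A$ is the composite system $RB$, so that the natural classical-quantum state at which to apply the lemma is $\sigma_1^{RSB}$ of the theorem, after the relabeling $U=f_S(S)$. With $\theta_u$ identified as the conditional state of $\sigma_1$ on $RB$ given $U=u$, Lemma \ref{lem:nf_SoftCovering} yields $\EE[\widetilde{H}]\to 0$ provided $\frac{k+l_1}{n}\log p > I(U;RB)_{\sigma_1} - S(U)_{\sigma_1} + \log p$, and the same rate controls $\EE[H_0]$. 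Since $S(U)_{\sigma_1}=S(U)_{\sigma_3}$, both being determined by the marginal distribution $\{\lambda_u^A\}$, this matches the claimed constraint.

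The main obstacle is ensuring that the trace norm in \eqref{eq:term J3}, which lives on Alice's subsystem alone, produces a rate controlled by the enlarged reference $RB$. This is the standard phenomenon of measurement compression in the presence of quantum side information: since Alice's measurement acts trivially on $R$ and $B$, purifying $\rho_A$ via $\Psi_{RAB}^{\rho_{AB}}$ automatically promotes Bob's subsystem to part of the reference, and the covering estimate in this enlarged frame produces $I(U;RB)_{\sigma_1}$ rather than $I(U;R)$ alone. Once this identification is made, the remainder of the proof is a line-by-line transcription of Proposition \ref{prop:Lemma for p2p:S_13}.
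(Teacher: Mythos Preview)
Your proposal is correct and follows essentially the same route as the paper: both recognize that the bound \eqref{eq:term J3} on $J_2$ is structurally identical to $\widetilde{S}_2$ in the point-to-point proof, apply Lemma \ref{lem:LemAandABar} to obtain the $H_0^A,H_1^A,H_2^A,H_3^A$ decomposition, invoke the simplifications \eqref{eq:Simplification_H1}--\eqref{eq:Simplification_H3}, and then defer to the analysis of Proposition \ref{prop:Lemma for p2p:S_13}. Your explicit discussion of why the reference system for $\rho_A$ is $RB$ (via the purification $\Psi_{RAB}^{\rho_{AB}}$, so that $S(\rho_A)=S(RB)_{\sigma_1}$ and $\sum_u\lambda_u^A S(\hat{\rho}_u^A)=S(RB|U)_{\sigma_1}$) is a point the paper leaves implicit; note only that the operators in $\widetilde{H}^A$ live on $A^n$ rather than on $(RB)^n$, so the covering lemma is applied with $\theta_u=\hat{\rho}_u^A$ on $A$ and the entropies are then matched to those of $\sigma_1$ via the purification isometry.
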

\begin{proof}
The proof is provided in Appendix \ref{appx:proof of J_2}.
\end{proof}

Since $Q_1 \leq J \leq J_1 + J_2$, hence $\mathbb{E}[J]$, and consequently $\mathbb{E}[Q_1]$, can be made arbitrarily small for sufficiently large n, if $\frac{k+l_1}{n}\log{p}+\frac{1}{n}\log{N_1} > I(U;RZV)_{\sigma_{3}} - S(U)_{\sigma_{3}} + \log{p} $ and $\frac{k+l_1}{n}\log{p} > I(U;RB)_{\sigma_{1}} -S(U)_{\sigma_{3}} + \log{p} $. Now we move on to bounding $ Q_2 $.

\noindent{\bf Step 4: Analyzing the effect of Bob's approximating measurement}\\
Step 3 ensured that the sub-system $RZV$ is close to a tensor product state in trace-norm. In this step, we approximate the state corresponding to the sub-system $RZ$ using the approximating POVM $M_B^{(n)}$, producing outputs on the alphabet $\mathcal{V}^n$. We proceed with the following proposition. 
\begin{proposition} \label{prop:Lemma for Q2}
For any $\epsilon \in (0,1)$, any $\eta, \delta \in (0,1)$ sufficiently small, and any $n$ sufficiently large, we have $\EE[{Q}_2] \leq\epsilon$, if $ \frac{k+l_1}{n}\log{p}+\frac{1}{n}\log{N_1} >  I(U;RZV)_{\sigma_3} - S(U)_{\sigma_{3}} + \log{p}$, $ \frac{k+l_2}{n}\log{p}+\frac{1}{n}\log{N_2} > I(V;RZ)_{\sigma_3} - S(V)_{\sigma_{3}} + \log{p}$,  $ \frac{k+l_1}{n}\log{p} > I(U;RB)_{\sigma_{1}} - S(U)_{\sigma_{3}} + \log{p} $,  and $ \frac{k+l_2}{n}\log{p} > I(V;RA)_{\sigma_{2}} - S(V)_{\sigma_{3}} + \log{p}$  where $\sigma_1, \sigma_2$, $\sigma_3$ are the auxiliary states defined in the statement of the theorem.
\end{proposition}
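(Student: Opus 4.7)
The plan is to parallel the Step~3 analysis of $Q_1$, now with the roles of Alice and Bob swapped, so as to cover $\bar{\Lambda}^B_{v^n}$ by the structured approximation $\zetaCoeff B_{v^n}^{(\mu_2)}$. First, decompose $Q_2$ by telescoping through intermediate operators in which Alice's approximation $\frac{1}{N_1}\sum_{\mu_1}\gammaCoeff A_{u^n}^{(\mu_1)}$ is replaced by $\bar{\Lambda}^A_{u^n}$: applying the triangle inequality together with Lemma~\ref{lem:Separate} (with the Bob sub-POVM $\frac{1}{N_2}\sum_{\mu_2}\zetaCoeff B^{(\mu_2)}_{v^n}\leq I$ pulled off) shows that this residual is bounded by a $J$-type expression already treated in Step~3, hence controlled by the constraints $\frac{k+l_1}{n}\log p + \frac{1}{n}\log N_1 > I(U;RZV)_{\sigma_3} - S(U)_{\sigma_3} + \log p$ and $\frac{k+l_1}{n}\log p > I(U;RB)_{\sigma_1} - S(U)_{\sigma_3} + \log p$. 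What remains is the pure Bob-side covering error $\sum_{z^n}\big\|\sum_{u^n,v^n}\sqrt{\rho_{AB}^{\tensor n}}\,\bar{\Lambda}^A_{u^n}\otimes\bigl(\bar{\Lambda}^B_{v^n}-\tfrac{1}{N_2}\sum_{\mu_2}\zetaCoeff B^{(\mu_2)}_{v^n}\bigr)\sqrt{\rho_{AB}^{\tensor n}}\,P^n_{Z|W}(z^n|u^n+v^n)\big\|_1$.

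For this Bob-side term, introduce an analog $J'$ of $J$ by bringing the $u^n$-sum outside the trace norm via the block-diagonal triangle identity on a classical register over $\mathcal{U}^n$, and then split $J'\leq J'_1+J'_2$ in direct analogy with $J\leq J_1+J_2$. Here $J'_1$ replaces $\bar{\Lambda}^B_{v^n}$ by the un-pruned structured operator $\zetaCoeff\bar{B}^{(\mu_2)}_{v^n}$, and $J'_2$ collects the pruning gap $\zetaCoeff(\bar{B}^{(\mu_2)}_{v^n}-B^{(\mu_2)}_{v^n})$. To bound $\EE[J'_1]$, invoke the $n$-letter Covering Lemma~\ref{lem:nf_SoftCovering} on the ensemble $\{\lambda^B_v,\hat{\rho}^B_v\}$ tensored with an auxiliary classical register carrying $(R,Z)$; a block-decomposition argument folds $\bar{\Lambda}^A_{u^n}$ into the state being covered and identifies the required information quantity as $I(V;RZ)_{\sigma_3}$, yielding the constraint $\frac{k+l_2}{n}\log p + \frac{1}{n}\log N_2 > I(V;RZ)_{\sigma_3} - S(V)_{\sigma_3} + \log p$.

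The term $\EE[J'_2]$ is handled along the lines of Proposition~\ref{prop:Lemma for p2p:S_13}: Lemma~\ref{lem:LemAandABar} exposes the pruning gap as a sum $H_0+H_1+H_2+H_3$, the $H_1$ contribution is tamed by the Pruning Trace Inequality (Lemma~\ref{lem:newMarkov}) applied to $\Sigma_B^{(\mu_2)}$, and $H_0,H_2,H_3$ are controlled by the $n$-letter Covering Lemma~\ref{lem:nf_SoftCovering} with the auxiliary state $\sigma_2$. This produces the fourth constraint $\frac{k+l_2}{n}\log p > I(V;RA)_{\sigma_2} - S(V)_{\sigma_3} + \log p$, completing the list of rate conditions.

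The principal obstacle is that the two UCC ensembles share the common generator matrix $G$, so Alice's and Bob's structured POVMs are correlated rather than independent, and within each ensemble the codewords are only pairwise independent; the standard operator Chernoff bound is therefore unavailable. This is circumvented by the new second-order Covering Lemma~\ref{lem:Change Measure Soft Covering Variance Based}, which requires only pairwise independence: after conditioning on $G$, the Bob codewords $V^{n,(\mu_2)}(a_2,j)=a_2 G+h_2^{(\mu_2)}(j)$ remain pairwise independent and uniform on $\FF_p^n$, which suffices to drive the variance term to zero at the claimed rate. A related subtlety is that the state being covered on Bob's side implicitly depends on the random Alice codebook through the telescoping step; this is resolved by first bounding the Alice-side residual absolutely through the $Q_1$ analysis, so that the expectation over Bob's ensemble can be taken independently without reintroducing a coupled dependence.
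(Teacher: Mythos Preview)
Your high-level plan---telescope $Q_2$ by replacing Alice's approximation with $\bar\Lambda^A_{u^n}$ and then treat a pure Bob-side covering term---is the same strategy the paper uses. There is, however, a concrete gap in the Bob-side step. You form $J'$ by ``bringing the $u^n$-sum outside the trace norm via the block-diagonal triangle identity on a classical register over $\mathcal U^n$,'' and then assert that Lemma~\ref{lem:nf_SoftCovering} delivers the constraint involving $I(V;RZ)_{\sigma_3}$. These two moves are incompatible: once the $u^n$-sum sits outside the norm (equivalently, a classical $U$-register is adjoined), the reference system in the covering lemma becomes $RZU$, and the constraint you actually obtain is $\frac{k+l_2}{n}\log p + \frac{1}{n}\log N_2 > I(V;RZU)_{\sigma_3}-S(V)_{\sigma_3}+\log p$, which is in general strictly stronger than the $I(V;RZ)_{\sigma_3}$ appearing in the proposition. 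The paper achieves $I(V;RZ)_{\sigma_3}$ precisely by \emph{keeping} the $u^n$-sum inside the norm in its term $Q_{22}$: it defines the single-letter state $\mathcal T_{v}=\sum_{u,z}\tfrac{\lambda^{AB}_{uv}}{\lambda^B_v}\,\hat\rho^{AB}_{uv}\otimes P_{Z|W}(z|u+v)\,|z\rangle\langle z|$ (with no $|u\rangle\langle u|$ factor) and applies Lemma~\ref{lem:nf_SoftCovering} to $\{\lambda^B_v,\mathcal T_v\}$. The deliberate asymmetry---$v^n$ outside in Step~3 to get $I(U;RZV)$, but $u^n$ inside in the Bob step to get $I(V;RZ)$---is essential for the proposition as stated.

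A second, smaller issue concerns the Alice-swap residual, which carries the random $\zeta^{(\mu_2)}_{v^n}B^{(\mu_2)}_{v^n}$ on the $B$-side. Your claim that Lemma~\ref{lem:Separate} reduces this to ``a $J$-type expression already treated in Step~3'' is not accurate: applying Lemma~\ref{lem:Separate} after triangle inequality on $u^n,v^n$ leaves the purely $A$-marginal quantity $\sum_{u^n}\big\|\sqrt{\rho_A^{\otimes n}}\big(\bar\Lambda^A_{u^n}-\tfrac1{N_1}\sum_{\mu_1}\gamma^{(\mu_1)}_{u^n}A^{(\mu_1)}_{u^n}\big)\sqrt{\rho_A^{\otimes n}}\big\|_1$, which is not $J$ (the latter still retains $\bar\Lambda^B_{v^n}$ and the $v^n$-block). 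The paper handles this residual by a finer five-term telescope $Q_{21}+\cdots+Q_{25}$ that passes through the intermediate Bob operator $\tfrac{\beta_{v^n}\zeta^{(\mu_2)}_{v^n}}{\lambda^B_{v^n}}\bar\Lambda^B_{v^n}$; being a scalar multiple of $\bar\Lambda^B_{v^n}$ depending only on $(G,h_2^{(\mu_2)})$, it allows the conditional expectation $\EE_{h_2\mid G}$ to collapse the corresponding term exactly to $J/(1+\eta)$ (this is the $Q_{23}$ step), so that the Step~3 bound genuinely applies. Your Lemma~\ref{lem:Separate} route can be completed, but it requires its own covering argument rather than an appeal to $J$, and in fact introduces an additional constraint not listed in the proposition.
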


\begin{proof}
 The proof is provided in Appendix \ref{appx:proof of Q2}.
\end{proof}

 \subsection{Rate Constraints}
 To sum-up, we showed $\EE[K] \leq \epsilon$
holds for sufficiently large $n$
if the following bounds hold:
 \begin{subequations}\label{eq:nf_rate-region 1}
 	\begin{align}
 	\tilde{R}+R_1 &> I(U; RB)_{\sigma_1} - S(U)_{\sigma_{3}} + \log{p}  ,\\
 	\tilde{R}+R_2 &> I(V; RA)_{\sigma_2} - S(V)_{\sigma_{3}} + \log{p} ,\\
	\tilde{R}+R_1 + C_1  &> I(U; RZV)_{\sigma_3} - S(U)_{\sigma_{3}} + \log{p} ,\\
	\tilde{R}+R_2 + C_2  &> I(V; RZ)_{\sigma_3} - S(V)_{\sigma_{3}} + \log{p}  ,\\
   0 \leq  \tilde{R} &< \log{p} - S(U+ V)_{\sigma_3},\\
    C_1  &\geq 0,\quad C_2\geq 0,
 	\end{align}
 \end{subequations}
 where $C_i\deq \frac{1}{n}\log_2 N_i, i \in \{1,2\}$ and $\tilde{R}\deq \frac{k}{n}\log{p}$. 
Therefore, there exists a distributed protocol with parameters $(n, 2^{nR_1}, 2^{nR_2}, 2^{nC_1}, 2^{nC_2})$ such that its overall POVM $\hat{M}_{AB}$ is $\epsilon$-faithful to $M_{AB}^{\tensor n}$ with respect to $\rho_{AB}^{\tensor n}$. 

Let us denote the above achievable rate-region by $\CalR_1$. By doing an exact symmetric analysis, but by replacing the first encoder by a product distribution instead of the second encoder in $S_1$ (as performed in \eqref{eq:binningIsolation}), all the constraints remain the same, except that the constraints on $\tilde{R}+R_1 + C_1$ and $\tilde{R}+R_2 + C_2$ change as follows
\begin{align}
    \tilde{R}+R_1 + C_1 &\geq I(U;RZ)_{\sigma_3} - S(U)_{\sigma_{3}} + \log{p}, \quad\nonumber\\
    \tilde{R}+R_2 + C_2 &\geq  I(V;RZU)_{\sigma_3} - S(V)_{\sigma_{3}} + \log{p} . \label{SCrate2}
\end{align}
Let us denote the above achievable rate-region by $\mathcal{R}_2$.
By time sharing between the any two points of $\mathcal{R}_1$ and $\mathcal{R}_2$ one can achieve any point in the convex closure of $(\CalR_1\bigcup\CalR_2).$ 
The following lemma gives a symmetric  characterization of the closure of convex hull of the union of the above achievable rate-regions.

\begin{lem}
For the above defined rate regions $\CalR_1$ and $\CalR_2$, we have $\CalR_3 = \text{Convex Closure}(\CalR_1 \bigcup \CalR_2)$, where $\CalR_3$ is given by the set of all the quintuples $(\tilde{R}, R_1, R_2, C_1, C_2)$ satisfying the following constraints: 
\begin{align}\label{eq:rate-regionR3}
    \tilde{R}+R_1 & \geq I(U;RB)_{\sigma_1}- S(U)_{\sigma_{3}} + \log{p}, \nonumber \\
    \tilde{R}+R_2  &\geq I(V;RA)_{\sigma_2}- S(V)_{\sigma_{3}} + \log{p}, \nonumber \\ 
    \tilde{R}+R_1+ C_1 & \geq I(U;RZ)_{\sigma_3}- S(U)_{\sigma_{3}} + \log{p},\nonumber \\ 
    \tilde{R}+R_2+ C_2  &\geq I(V;RZ)_{\sigma_3}- S(V)_{\sigma_{3}} + \log{p}, \nonumber \\ 
    2\tilde{R}\!+\!R_1\!+R_2 +C_1 +C_2 &\geq I(UV;RZ)_{\sigma_3} - S(U,V)_{\sigma_{3}}\nonumber \\
    & \hspace{10pt}+ 2\log{p} , \nonumber \\ 
   0 \leq  \tilde{R}  &\leq \log{p} - S(U+V)_{\sigma_3}, \\
     R_1 \geq 0, R_2 &\geq  0 \quad
    C_1  \geq 0, C_2 \geq 0. 
\end{align}
\end{lem}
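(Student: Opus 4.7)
The plan is to prove the claimed equality by showing two set inclusions. As a setup, I will introduce shorthands for the four "$C$-constraints" that differ between the two regions: $A_1 \deq I(U;R,Z,V)_{\sigma_3} - S(U)_{\sigma_3} + \log p - \tilde{R} - R_1$, $B_1 \deq I(U;R,Z)_{\sigma_3} - S(U)_{\sigma_3} + \log p - \tilde{R} - R_1$, and analogously $A_2, B_2$ from the $V$-bounds. A chain-rule calculation gives the key identity $A_1 - B_1 = I(U;V|R,Z)_{\sigma_3} = B_2 - A_2$, and I denote this nonnegative common ``gap'' by $\Delta$. Note that the remaining constraints (on $\tilde{R}+R_1$, $\tilde{R}+R_2$, $\tilde{R}$, and nonnegativity of the rate variables) are shared identically between $\mathcal{R}_1$, $\mathcal{R}_2$, and $\mathcal{R}_3$, so throughout the argument these will be satisfied automatically whenever we keep $(\tilde{R}, R_1, R_2)$ fixed.

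For the easy inclusion $\mathcal{R}_1 \cup \mathcal{R}_2 \subseteq \mathcal{R}_3$, which combined with the fact that $\mathcal{R}_3$ is closed and convex (being defined by linear inequalities) yields $\text{Convex Closure}(\mathcal{R}_1 \cup \mathcal{R}_2) \subseteq \mathcal{R}_3$, it suffices to verify $\mathcal{R}_i \subseteq \mathcal{R}_3$ separately. The individual $C$-constraints in $\mathcal{R}_3$ are weaker than the corresponding ones in each $\mathcal{R}_i$ since $A_1 \geq B_1$ and $B_2 \geq A_2$, so they follow immediately. For the sum-rate constraint, I will add the two $C$-constraints of $\mathcal{R}_1$ and apply the chain-rule identity $I(U;R,Z,V)_{\sigma_3} + I(V;R,Z)_{\sigma_3} = I(U,V;R,Z)_{\sigma_3} + I(U;V)_{\sigma_3}$ together with $I(U;V)_{\sigma_3} = S(U)_{\sigma_3} + S(V)_{\sigma_3} - S(U,V)_{\sigma_3}$ to recover exactly the $\mathcal{R}_3$ sum bound; the argument for $\mathcal{R}_2$ is symmetric, using $I(U;R,Z)_{\sigma_3} + I(V;R,Z,U)_{\sigma_3} = I(U,V;R,Z)_{\sigma_3} + I(U;V)_{\sigma_3}$.

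For the harder inclusion $\mathcal{R}_3 \subseteq \text{Convex Closure}(\mathcal{R}_1 \cup \mathcal{R}_2)$, I fix $P = (\tilde{R}, R_1, R_2, C_1, C_2) \in \mathcal{R}_3$ and will express it as $\lambda A + (1-\lambda) B$ with $A \in \mathcal{R}_1$, $B \in \mathcal{R}_2$ sharing the same $(\tilde{R}, R_1, R_2)$ coordinates. If $C_1 \geq A_1$ then $P \in \mathcal{R}_1$ directly (since $C_2 \geq A_2$ already holds in $\mathcal{R}_3$), and symmetrically if $C_2 \geq B_2$ then $P \in \mathcal{R}_2$. The remaining regime is $B_1 \leq C_1 < A_1$ and $A_2 \leq C_2 < B_2$, where I set $\lambda \deq (C_1 - B_1)/\Delta \in [0,1)$, choose $A \deq (\tilde{R}, R_1, R_2, A_1, a_2)$ and $B \deq (\tilde{R}, R_1, R_2, B_1, b_2)$, and solve $\lambda a_2 + (1-\lambda) b_2 = C_2$ subject to $a_2 \geq A_2$, $b_2 \geq B_2$. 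Taking $a_2 = A_2$ and $b_2 = (C_2 - \lambda A_2)/(1-\lambda)$, the requirement $b_2 \geq B_2$ rearranges exactly to the sum-rate constraint $C_1 + C_2 \geq A_1 + A_2$, which holds by assumption.

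The main obstacle I anticipate is the bookkeeping around the nonnegativity $C_1, C_2 \geq 0$ together with the possibility that some $A_i$ or $B_i$ is negative, in which case the operative bound in the corresponding $\mathcal{R}_j$ is the nonnegativity constraint rather than $A_i$ or $B_i$. This is handled by replacing $A_i, B_i$ by their positive parts throughout the case analysis and rechecking that the feasibility condition on $(a_2, b_2)$ is still implied by the $\mathcal{R}_3$ constraints. The degenerate case $\Delta = 0$ is trivial, since then $A_i = B_i$ for $i=1,2$, the sum-rate constraint becomes redundant, and all three regions coincide. Once these boundary cases are confirmed, combining the two inclusions establishes the lemma.
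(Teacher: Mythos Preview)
Your proposal is correct and is precisely the ``elementary convex analysis'' that the paper invokes without further detail; the paper's own proof consists of the single sentence ``The proof follows from elementary convex analysis.'' Your chain-rule reduction to the two $(C_1,C_2)$ corner constraints with common gap $\Delta=I(U;V\mid R,Z)$, followed by the standard two-corner time-sharing argument, is exactly the intended route.
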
 
\begin{proof}
The proof follows from elementary convex analysis.
\end{proof}

 Lastly,  we complete the proof of the theorem using the following lemma.
\begin{lem}
Let $\bar{\mathcal{R}}_3$ denote the set of all quadruples $(R_1,R_2,C_1,C_2)$ for which there exists $\tilde{R}$ such that the quintuple $(R_1,R_2, C_1, C_2, \tilde{R})$ satisfies the inequalities in (\ref{eq:rate-regionR3}). Let $\mathcal{R}_F$ denote the set of all quadruples $(R_1, R_2, C_1, C_2)$ that satisfy  the inequalities in \eqref{eq:nf_dist POVm appx rates} given in the statement of the theorem. Then, $
\bar{\mathcal{R}}_3=\mathcal{R}_F$.
\end{lem}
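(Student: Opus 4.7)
The plan is to perform a Fourier-Motzkin elimination of the auxiliary variable $\tilde{R}$ from the inequalities defining $\CalR_3$, and then rewrite the projected constraints in the mutual-information form required by $\mathcal{R}_F$. First I would isolate $\tilde{R}$ in each constraint of (\ref{eq:rate-regionR3}). This yields six lower bounds (one from each of the four individual rate inequalities, one from the sum-rate inequality after dividing by two, and the trivial $\tilde{R}\ge 0$) together with a single upper bound $U^\ast \deq \log p - S(U+V)_{\sigma_3}$. Since there is only one upper bound, a valid $\tilde R$ exists if and only if every lower bound is $\le U^\ast$; conversely, the choice $\tilde R = U^\ast$ then satisfies every inequality simultaneously. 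Hence $\bar{\CalR}_3$ is described exactly by the system of ``(lower bound)$\,\le U^\ast$'' inequalities, together with $R_i,C_i\ge 0$.

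Next I would translate these constraints into the form (\ref{eq:nf_dist POVm appx rates}) using entropy identities that hold on $\sigma_3$. Because $U,V$ are classical in $\sigma_3$ and $W=U+V$ is a deterministic function of $(U,V)$, the pair $(W,U)$ (respectively $(W,V)$) determines $(U,V)$, so $S(W,U)_{\sigma_3}=S(W,V)_{\sigma_3}=S(U,V)_{\sigma_3}$. This produces the two key identities
\begin{align*}
I(W;V)_{\sigma_3} - I(U;V)_{\sigma_3} &= S(W)_{\sigma_3} - S(U)_{\sigma_3},\\
I(W;U)_{\sigma_3} - I(U;V)_{\sigma_3} &= S(W)_{\sigma_3} - S(V)_{\sigma_3},
\end{align*}
and, summing, $I(W;U)_{\sigma_3}+I(W;V)_{\sigma_3}-I(U;V)_{\sigma_3}=2S(W)_{\sigma_3}-S(U,V)_{\sigma_3}$. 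Substituting these identities into the four individual-rate bounds recovers the four corresponding constraints of $\mathcal{R}_F$, and substituting the summed identity into the sum-rate bound reproduces the sum-rate constraint of $\mathcal{R}_F$. The trivial condition $0\le U^\ast$ becomes $S(W)_{\sigma_3}\le \log p$, which is automatic since $W$ takes values in $\FF_p$.

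The argument is essentially routine linear arithmetic once these identities are in hand; the only place I expect to pause is the bookkeeping for the sum-rate inequality, where the coefficient $2$ multiplying $\tilde R$ must be divided out before pairing with $U^\ast$ so that the resulting constant comes out as $2S(W)_{\sigma_3}-S(U,V)_{\sigma_3}$. Because there is a single upper bound on $\tilde R$, there are no ``lower-bound vs.\ lower-bound'' pairings in the elimination, so the listed constraints are complete and no extraneous inequalities appear. This yields both $\bar{\CalR}_3\subseteq\mathcal{R}_F$ (every projected inequality follows from the original system) and $\mathcal{R}_F\subseteq\bar{\CalR}_3$ (by exhibiting $\tilde R=U^\ast$ as a witness), completing the proof.
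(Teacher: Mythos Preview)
Your proposal is correct and is exactly the approach the paper takes: the paper's own proof simply states that it follows from Fourier--Motzkin elimination, and what you have written is a careful execution of that elimination together with the entropy identities needed to put the projected constraints into the form of \eqref{eq:nf_dist POVm appx rates}. Your bookkeeping on the sum-rate constraint and the observation that the single upper bound makes $\tilde R = \log p - S(W)_{\sigma_3}$ a universal witness are both on point.
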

 \begin{proof}
 	The proof follows from Fourier-Motzkin elimination \cite{fourier-motzkin}.
 	\end{proof}
\section{Conclusion}\label{sec:conclusion}
We developed a technique of randomly generating structured POVMs using algebraic codes. Using this technique, we demonstrated a new achievable information-theoretic rate-region for the task of faithfully simulating a distributed quantum measurement and function computation. We further devised a Pruning Trace inequality which is a tighter version of the known operator Markov inequality, and a covering lemma which is independent of the operator Chernoff inequality, so as to analyse pairwise-independent POVM elements. Finally, combining these techniques,  we demonstrated rate gains for this problem over traditional coding schemes, and provided a multi-party distributed faithful simulation and function computation protocol. 

\textbf{Acknowledgement:} We thank Arun Padakandla for his valuable discussion and inputs in developing the proof techniques.

\appendix

\section{Proof of Lemmas}




\subsection{Proof of Lemma \ref{lem:Change Measure Soft Covering Variance Based}}
\begin{proof} \label{appx:ProofUniformSoftCovering}
We begin by defining the ensemble $\{\lambda_x,\tilde{\sigma}_x\}_{x \in \mathcal{X}}$ where  $ \tilde{\sigma}_x = \Pi\Pi_x\sigma_x\Pi_{x}\Pi $ for all $\eleInSet{x}{X} $. Further, let $ S$ be defined as
\begin{align}
    S \deq  \Big\|\sum_{\eleInSet{x}{X}}\lambda_x\sigma_x - \frac{1}{M}\sum_{\eleInSet{x}{X}}\sum_{m=1}^M \frac{\lambda_x}{\mu_x}\sigma_{x}\11_{\{C_m = x\}}\Big\|_1. \nonumber
\end{align}
By adding an subtracting appropriate terms within the trace norm of $ S $ and using the triangle inequality we obtain, $ S \leq S_{1} +S_{2} + S_{3},$ where
\begin{align}
S_{1}  &\deq \big\|\sum_{\eleInSet{x}{X}}\lambda_x\sigma_x - \sum_{\eleInSet{x}{X}}\lambda_x\sigTilde\big\|_1,   \nonumber\\
S_{2}  &\deq \big\| \frac{1}{M}\sum_{m=1}^M \frac{\lambda_{C_m}}{\mu_{C_m}}\tilde{\sigma}_{C_m} - \frac{1}{M}\sum_{m=1}^M\frac{\lambda_{C_m}}{\mu_{C_m}} {\sigma}_{C_m}\big\|_1,   \quad \text{and}\nonumber \\
S_{3}  &\deq  \big\|\sum_{\eleInSet{x}{X}}\lambda_x\sigTilde - \frac{1}{M}\sum_{\eleInSet{x}{X}}\sum_{m=1}^M \frac{\lambda_{x}}{\mu_{x}}\sigTilde\11_{\{C_m = x\}}\big\|_1. \nonumber 
\end{align}
We begin by bounding the term corresponding to $ S_{1}$ and $ S_{2} $ as follows:
\begin{align}\label{eq:result_S1}
S_1 &\leq \sum_{\eleInSet{x}{X}}\lambda_x \|\sigma_x - \Pi\Pi_x\sigma_x\Pi_{x}\Pi\|_1 \nonumber \\
 &\leq \sum_{\eleInSet{x}{X}}\lambda_x \|\sigma_x - \Pi\sigma_x\Pi\big\|_1 \nonumber \\
 & \hspace{20pt}+ \sum_{\eleInSet{x}{X}}\lambda_x \|\Pi\sigma_x\Pi - \Pi\Pi_x\sigma_x\Pi_{x}\Pi\|_1  \nonumber \\
 & \leq 2\sqrt{\epsilon} + \sum_{\eleInSet{x}{X}}\lambda_x \|\Pi\|_{\infty}\|\sigma_x - \Pi_x\sigma_x\Pi_{x}\|_1\|\Pi\|_{\infty} \nonumber \\
 &\leq 4\sqrt{\epsilon} = \delta(\epsilon),
\end{align}
where the first two inequalities use the triangle inequality, the third uses the gentle measurement lemma (given the assumption  \eqref{Soft_Covering-constraints1} from the statement of the Lemma) for the first term, and operator Holder's inequality (Exercise 12.2.1 in \cite{book_quantum}) for the second term. The last inequality follows again from the gentle measurement given the assumption \eqref{Soft_Covering-constraints2}.
Similarly, for $ S_{2} $ we have
\begin{align}\label{eq:result_S2}
\EE_{\mathbbm{C}}[S_{2}] &\leq \EE_{\mathbbm{C}}\left[\frac{1}{M}\sum_{m=1}^{M}\sum_{\eleInSet{x}{X}}\frac{\lambda_x}{\mu_x}\mathbbm{1}_{\{C_m = x\}}\|\sigma_x - \tilde{\sigma}_x\|_1\right]\nonumber \\ 
& =  \frac{1}{M}\sum_{m=1}^{M}\sum_{\eleInSet{x}{X}}\lambda_x\|\sigma_x - \tilde{\sigma}_x\|_1 \leq 4\sqrt{\epsilon} = \delta(\epsilon),
\end{align}
where we use the fact that $\EE_{\mathbbm{C}}[\mathbbm{1}_{\{c_{m} = x\}}] = \mu_x$, and the last inequality uses similar arguments as in \eqref{eq:result_S1}.
Finally, we proceed to bound the term corresponding to $ S_3$. 
Firstly, note that, $\EE_{\mathbbm{C}}[\frac{1}{M}\sum_{m}\frac{\lambda_{C_m}}{\mu_{C_m}}\tilde{\sigma}_{C_m}] = \sum_{\eleInSet{x}{X}}\lambda_x\tilde{ \sigma_x}$.  This gives
\begin{widetext}
\begin{align}\label{eq:trace_sqrt}
\EE_{\mathbbm{C}}[S_3] &= \EE_{\mathbbm{C}}\left[\Big\|\frac{1}{M}\sum_m\frac{\lambda_{C_m}}{\mu_{C_m}} \tilde{\sigma}_{C_m} - \EE_{\mathbbm{C}}\bigg[\frac{1}{M}\sum_{m}\frac{\lambda_{C_m}}{\mu_{C_m}}\tilde{\sigma}_{C_m}\bigg]\Big\|_1\right] \nonumber \\
& \leq \Tr{\sqrt{\EE_{\mathbbm{C}}\left[\left(\frac{1}{M}\sum_m \frac{\lambda_{C_m}}{\mu_{C_m}}\tilde{\sigma}_{C_m} - \EE_{\mathbbm{C}}\bigg[\frac{1}{M}\sum_{m}\frac{\lambda_{C_m}}{\mu_{C_m}}\tilde{\sigma}_{C_m}\bigg] \right)^2\right]}} \nonumber \\
& = \Tr{\sqrt{\EE_{\mathbbm{C}}\left[\left(\frac{1}{M}\sum_m \frac{\lambda_{C_m}}{\mu_{C_m}}\tilde{\sigma}_{C_m}  \right)^2\right] - \left(\EE_{\mathbbm{C}}\bigg[\frac{1}{M}\sum_{m}\frac{\lambda_{C_m}}{\mu_{C_m}}\tilde{\sigma}_{C_m}\bigg] \right)^2}} \nonumber \\
& = \Tr{\sqrt{\frac{1}{M^2}\sum_m\EE_{\mathbbm{C}}\left[ \left(\frac{\lambda_{C_m}}{\mu_{C_m}}\tilde{\sigma}_{C_m}\right)^2\right] + \frac{1}{M^2}\sum_{\substack{m,m'\\m\neq m'}} \EE_{\mathbbm{C}}\left[ \frac{\lambda_{C_m}\tilde{\sigma}_{C_m}}{\mu_{C_m}} \frac{\lambda_{C_{m'}}\tilde{\sigma}_{C_{m'}}}{\mu_{C_{m'}}}\right] - \left(\frac{1}{M}\sum_{m}\EE_{\mathbbm{C}}\bigg[\frac{\lambda_{C_m}\tilde{\sigma}_{C_m}}{\mu_{C_m}}\bigg] \right)^2}} \nonumber \\
& = \Tr{\sqrt{\frac{1}{M}\EE_{\mathbbm{C}}\left[ \left(\frac{\lambda_{C_1}\tilde{\sigma}_{C_1}}{\mu_{C_1}}\right) ^2\right]  - \frac{1}{M} \left(\EE_{\mathbbm{C}}\left[\frac{\lambda_{C_1}\tilde{\sigma}_{C_1}}{\mu_{C_1}}\right] \right)^2}}\nonumber \\
& \leq \Tr{\sqrt{\frac{1}{M}\EE_{\mathbbm{C}}\left[ \left(\frac{\lambda_{C_1}\tilde{\sigma}_{C_1}}{\mu_{C_1}}\right)^2\right]  }}, 
\end{align}
\end{widetext}
where the first inequality follows from concavity of operator square-root function (L\"{o}wner-Heinz theorem, see Theorem $2.6$ in \cite{carlen2010trace}). The last equality uses the fact that codewords of the random code $ \mathbbm{C} $ are pairwise independent, and the last inequality follows from monotonicity of the operator square-root function (Theorem $2.6$ in \cite{carlen2010trace}). 

\noindent Moving on, we now bound the operator within the square root of \eqref{eq:trace_sqrt} as
\begin{align}
\EE_{\mathbbm{C}}\left[ \left(\frac{\lambda_{C_1}\tilde{\sigma}_{C_1}}{\mu_{C_1}}\right)^2\right] & = 
\sum_{\eleInSet{x}{X}}\frac{\lambda_x^2}{\mu_x}\tilde{\sigma}_x^2  \leq \sum_{\eleInSet{x}{X}}\kappa\lambda_x\tilde{\sigma}_x^2 \nonumber \\
& = \kappa\! \sum_{\eleInSet{x}{X}}\!\lambda_x \Pi \left(\Pi_{x}\sigma_x\Pi_x\right)\Pi \left(\Pi_{x}\sigma_x\Pi_x\right)\Pi, \nonumber
\end{align}
where we use the assumption $\frac{\lambda_x}{\mu_x} \leq \kappa $ for all $\eleInSet{x}{X}$. Further since, $ \Pi \leq I $, we have $ \left(\Pi_{x}\sigma_x\Pi_x\right)\Pi \left(\Pi_{x}\sigma_x\Pi_x\right) \leq \left (\Pi_{x}\sigma_x\Pi_x\right )^2 $, which gives
\begin{align}
\EE_{\mathbbm{C}}\left[ 
\left( \frac{\lambda_{C_1} \tilde{\sigma}_{C_1}}{\mu_{C_1}} \right)^2  \right] & \leq \kappa\sum_{\eleInSet{x}{X}}\lambda_x \Pi \left(\Pi_{x}\sigma_x\Pi_x\right)^2\Pi. \nonumber
\end{align}
Moreover, using the assumption \ref{Soft_Covering-constraints4}, i.e., $ \Pi_{x}\sigma_x\Pi_x \leq \frac{1}{d}\Pi_x \leq \frac{1}{d}I $, we get
\begin{align}
\left(\Pi_{x}\sigma_x\Pi_x\right)^2 & = \sqrt{\Pi_{x}\sigma_x\Pi_x}\left(\Pi_{x}\sigma_x\Pi_x\right)\sqrt{\Pi_{x}\sigma_x\Pi_x}  \nonumber \\ 
& \leq \frac{1}{d}\Pi_{x}\sigma_x\Pi_x, \quad \text{for all } \eleInSet{x}{X}. \nonumber
\end{align}
Thus,
\begin{align}\label{eq:bound_expec_sigma}
\EE_{\mathbbm{C}}\left[ \left( \frac{\lambda_{C_1} \tilde{\sigma}_{C_1}}{\mu_{C_1}}\right)^2 \right] & \leq \frac{\kappa}{d}\Pi \left(\sum_{\eleInSet{x}{X}}\lambda_x\Pi_{x}\sigma_x\Pi_x\right)\Pi \leq \frac{\kappa}{d}\Pi\sigma\Pi,
\end{align}
where the second inequality uses the assumption  \eqref{Soft_Covering-constraints5} from the statement of the Lemma. Substituting the simplification obtained in \eqref{eq:bound_expec_sigma} into \eqref{eq:trace_sqrt} and using the monotonicity of square-root operator, we obtain
\begin{align}\label{eq:result_S3}
\EE_{\mathbbm{C}}[S_3] \leq \Tr{\sqrt{\frac{\kappa}{Md}\Pi\sigma\Pi}} \leq \sqrt{\frac{\kappa D}{Md}},
\end{align}
where the second inequality uses the assumption \eqref{Soft_Covering-constraints3}. Combining the bounds \eqref{eq:result_S1}, \eqref{eq:result_S2}, and \eqref{eq:result_S3} we get the desired result.

\end{proof}

\subsection{Proof of Lemma \ref{lem:newMarkov}}
\label{appx:newMarkov}
	Note that if $ P $ prunes $ X $, then $ P $ also prunes $ \frac{1}{\eta}(X - (1-\eta)I_A) $ with respect to $ I_A. $ Using Lemma \ref{lem:newMarkov0}, we obtain
	\begin{align}
	\Tr{I_A - P} \leq \frac{1}{\eta}\Tr{X - (1-\eta) I_A}. \nonumber
	\end{align}
	Applying expectation and using the assumption on $\EE[X]$, 
we get
	\begin{align}
	\EE[\Tr{\!I_A\! -\! P}] 
	 \!&\leq \!\frac{1}{\eta}\EE\left[\Tr{X\! - \!\EE[X]}\right]  \leq \frac{1}{\eta}\EE\left[\|X - \EE[X]\|_1\right].\nonumber
	\end{align}



\subsection{Proof of Lemma \ref{lem:nf_SoftCovering}}
\label{appx:nf_SoftCovering_Proof}
We begin by defining $L$ as 
\begin{align}
L &\deq \bigg\|\sum_{w^n}\lambdawA\theta_{w^n} - \nonumber \\ 
& \frac{1}{(1+\eta)}\frac{p^n}{p^{k+l}N'}\sum_{\mu=1}^{N'}\sum_{a,m}\sum_{w^n} \lambda_{w^n} \theta_{w^n}\mathbbm{1}_{\{W^{n,(\mu)}(a,m) = w^n\}} \bigg\|_1. \nonumber
\end{align}
Further, let $ \theta \deq \sum_{w \in \mathcal{W}}\lambda_w\theta_w$ and let  $ \Pi_{\theta} $ and $ \PiTauWn $ denote the $\delta$-typical projector of $ \theta $ and conditional typical projector of $ \theta_{w^n}, $ respectively. 
Define $\Tilde{\lambda}_{w^n} = \frac{\lambda_{w^n}}{1-\varepsilon}$ for $w^n \in \TDeltan(W)$, and $0$ otherwise, where $\varepsilon = \sum_{w^n \notin \TDeltan(W)}\lambda_{w^n}.$
Using the triangle inequality we can bound $ L $ as $ L \leq L_1 + L_2+L_3, $ where
\begin{align}
L_1 & \deq \bigg\|\sum_{w^n}\lambda_{w^n}\theta_{w^n} - \sum_{w^n}\tilde{\lambda}_{w^n}{ \theta}_{w^n}\bigg\|_1, \nonumber \\
L_2 & \deq \bigg\|\sum_{w^n }\tilde{\lambda}_{w^n}{ \theta}_{w^n} - \nonumber \\
& \hspace{22pt}\frac{p^n}{p^{k+l}N'}\sum_{\mu=1}^{N'}\sum_{a,m}\sum_{w^n}\tilde{\lambda}_{w^n}{ \theta}_{w^n} \mathbbm{1}_{\{W^{n,(\mu)}(a,m) = w^n\}}\bigg\|_1, \nonumber \\
 L_3  & \deq \bigg\| \frac{p^n}{p^{k+l}N'} \sum_{\mu}\sum_{a,m}\sum_{w^n}\left(\tilde{\lambda}_{w^n}-{\frac{\lambda_{w^n}}{(1+\eta)}}\right)\nonumber \\
 &\hspace{1.5in}\times\theta_{w^n} \mathbbm{1}_{\{W^{n,(\mu)}(a,m) = w^n\}}\bigg\|_1. \nonumber 
\end{align}
\noindent We begin by bounding the term corresponding to $ L_1 $ as 
\begin{align}
    L_1 & \leq \sum_{w^n \in \TDeltan(W)}\!\!\!\!\lambda_{w^n}\frac{\varepsilon}{1-\varepsilon}\underbrace{\left\|\theta_{w^n}\right\|_1}_{=1} + \sum_{w^n\notin \TDeltan(W)}\!\!\!\!\lambda_{w^n}\underbrace{\left\|\theta_{w^n}\right\|_1}_{=1} \nonumber \\
    & = 2\varepsilon. \label{eq:L1Term}
\end{align}

Now consider the term corresponding to $ L_2$, for which we employ Lemma \ref{lem:Change Measure Soft Covering Variance Based}. Toward this, we consider the following identification: $\lambda_x$  with $\LambdaTildew$, $\sigma_x$ with  $\theta_{w^n}$, $ \mathcal{X}$ with $\TDeltan(W)$, $\Bar{\mathcal{X}}$ with $\FF_p^n$, $\sigma$ with $\widetilde{\theta} \deq \sum_{w^n}\LambdaTildew{ \theta}_{w^n} $, $\Pi$ with $\Pi_{\theta}$, $\Pi_x$ with $\PiTauWn$, and $\mu_x = \frac{1}{p^n}$ for all $x\in \bar{\mathcal{X}}$.  Since the collection of random variables $ \{W^{n,(\mu)}(a,m)\} $ are generated using Unionized Coset Codes, we have  
\begin{align}
\PP\left(\mathbbm{1}_{\{W^{n,(\mu)}(a,m) = w^n\}}=1\right) = \frac{1}{p^n}, \quad \text{for all}\quad w^n \in \FF_p^n. \nonumber
\end{align}
Note that $\frac{\LambdaTildew}{1/p^n} \leq 2^{-n{(S(W)_{\sigma_\theta}-\log{p}-\delta_w)}}$ for all $w^n \in \FF_p^n$, where $\delta_{w}(\delta) \searrow 0$ as $\delta \searrow 0$, and $\sigma_\theta$ is defined in the statement of the lemma.
With these, we check the hypotheses of Lemma \ref{lem:Change Measure Soft Covering Variance Based}. Firstly, using the pinching arguments described in \cite[Property 15.2.7]{Wilde_book}, we have $\Tr{\Pi_{\theta}\theta_{w^n}} \geq 1-\epsilon$ for all $\epsilon \in(0,1), \delta>0$ and sufficiently large $n$, satisfying hypothesis \eqref{Soft_Covering-constraints1}. Secondly, \eqref{Soft_Covering-constraints2} and \eqref{Soft_Covering-constraints5} are satisfied from the construction of $\PiTauWn$. Next, we consider the hypothesis \eqref{Soft_Covering-constraints3}. We have
\begin{align}
    \left\|\Pi^{\theta}\sqrt{\Tilde{\theta}}\right\|_1 \!\!
    &= \Tr{\sqrt{\Pi^{\theta}\tilde{\theta}\Pi^{\theta}}} \nonumber \\ 
    &\leq \frac{1}{\sqrt{(1-\varepsilon)}}\Tr{\sqrt{\Pi^{\theta}\theta^{\tensor n}\Pi^{\theta}}} \leq 2^{\frac{n}{2}(S(R)_{\sigma_{\theta}} +\delta_w')}, \nonumber
\end{align}
where the first inequality above follows from the fact that $ \sum_{w^n}\LambdaTildew\theta_{w^n} \leq \frac{1}{(1-\varepsilon)}\sum_{w^n}\lambda_{w^n}\theta_{w^n} = \frac{\theta^{\tensor n}}{(1-\varepsilon)}$ and using the operator monotonicty of the  square-root function (Theorem $2.6$ in \cite{carlen2010trace}). The second inequality follows from the property of the typical projector for some $\delta_w'$ such that $\delta_w' \searrow 0$ as $\delta \searrow 0$. This gives \[{D} = {2^{n(S(R)_{\sigma_{\theta}} + \delta_{w}')}}.\] Finally, the hypotheses \eqref{Soft_Covering-constraints4} is satisfied from the property of conditional typical projectors  for $ d = 2^{n(S(R|W)_{\sigma_{\theta}} - \delta_w'')}$,
where $\delta_w'' \searrow 0$ as $\delta \searrow 0 $ (see \cite[Property 15.2.6]{Wilde_book}).
Next we check the pairwise independence of $\WCodewordm$ and $W^{n,(\mu)}(\tilde{a},\tilde{m})$. Since these are constructed using randomly and uniformly generated $G^{}$ and $h^{(\mu)}$, we have $\{\WCodewordm\}_{a\in \FF^k_p,m\in\FF_p^l,\mu\in[1:N']}$ to be pairwise independent for each (see \cite{pradhanalgebraic} for details).
 Therefore, employing Lemma \ref{lem:Change Measure Soft Covering Variance Based} we get
\begin{align}
    \EE[L_2] & \leq  \sqrt{\frac{2^{n(S(R)_{\sigma_{\theta}} + \delta_{w}')}2^{-n{(S(W)_{\sigma_\theta}-\log{p}-\delta_w)}}}{N'p^{k+l}2^{n(S(R|W)_{\sigma_{\theta}} - \delta_w'')}}} + 8\sqrt{\epsilon}\nonumber \\ & \leq \text{exp}_2\bigg[-\frac{n}{2}\bigg(\frac{k+l}{n}\log{p} + \frac{1}{n}\log{N'} -  I(R;W)_{\sigma_{\theta}}  \nonumber \\
& \hspace{30pt}- \log{p} +S(W)_{\sigma_{\theta}} - \delta_{w} - \delta_{w}' -\delta_{w}'' \bigg)\bigg] + 8\sqrt{\epsilon}, \label{eq:L2Term}
\end{align} 
where exp$_2(x)\deq 2^x.$
\noindent As for $ L_3 $, taking expectation and using $ \EE[\mathbbm{1}_{\{W^{n,(\mu)}(a,m) = w^n\}}] = \frac{1}{p^n} $ gives 
\begin{align}\label{eq:L3Term}
    \EE[L_3] \leq \frac{\eta+\varepsilon}{ (1+\eta)} + \frac{\varepsilon}{(1+\eta)} =\frac{\eta+2 \varepsilon}{1+\eta}.
\end{align}

\noindent Combining the bounds from $\eqref{eq:L1Term}, \eqref{eq:L2Term}$ and $\eqref{eq:L3Term}$ gives the desired result.

\subsection{Proof of Lemma \ref{lem:LemAandABar}} \label{appx:ProofofLemAandABar}
We begin by using the H\'older's inequality \cite{book_quantum,carlen2010trace} for operator norm, i.e., ($ \|AB\|_1 \leq \|A\|_{\infty}\|B\|_1 $), and defining $\hat{\Lambda}_{w^n} = \sqrt{\rho^{\tensor n}}^{-1}\rhotildwA\sqrt{\rho^{\tensor n}}^{-1}  $. This gives us
\begin{widetext}
\begin{align}
\sum_{w^n}\!\gammaWCoeff\left\|\sqrt{\rho^{\tensor n}} \! \left (\!\bar{A}_{w^n}^{(\mu)} -	A_{w^n}^{(\mu)} \!\right)\!\sqrt{\rho^{\tensor n}}\right\|_1 = &  \sum_{w^n}\alpha_{w^n}\gammaWCoeff\left\|\Pi_\rho \sqrt{\rho^{\tensor n}} \hat{\Lambda}_{w^n}\sqrt{\rho^{\tensor n}}\Pi_\rho - \Pi_\rho \sqrt{\rho^{\tensor n}}\CutOff \hat{\Lambda}_{w^n}  \CutOff\sqrt{\rho^{\tensor n}}\Pi_\rho\right\|_1 \nonumber \\ 
\leq &  \sum_{w^n}\alpha_{w^n}\gammaWCoeff\left\|\Pi_\rho \sqrt{\rho^{\tensor n}}\right\|_{\infty}^2\left\|\hat{\Lambda}_{w^n} - \CutOff \hat{\Lambda}_{w^n} \CutOff\right\|_1 \nonumber \\ 
\leq & \;2^{-n(S(\rho)-\delta_\rho)} \sum_{w^n}\alpha_{w^n}\gammaWCoeff 2\sqrt{\Tr{(\PiA-\CutOff)\hat{\Lambda}_{w^n}}\Tr{\hat{\Lambda}_{w^n}}}, \nonumber
\end{align}
\end{widetext}
where the equality follows from the fact that $\Pi_{\rho}$ and $\CutOff$ commute, the first inequality follows from the H\'older's inequality, and the second inequality uses the following bounds
\begin{align}
\bigg\|&\hat{\Lambda}_{w^n} - \CutOff \hat{\Lambda}_{w^n}  \CutOff\bigg\|_1 \nonumber \\
& \leq \left\|\hat{\Lambda}_{w^n} - \CutOff \hat{\Lambda}_{w^n} \right\|_1 + \left\|\CutOff \hat{\Lambda}_{w^n} - \CutOff \hat{\Lambda}_{w^n}  \CutOff\right\|_1 \nonumber \\ 
& = \Tr\left\{\left|(\Pi_{\rho} - \CutOff) \sqrt{\hat{\Lambda}_{w^n}} \sqrt{\hat{\Lambda}_{w^n}}\right|\right\}  \nonumber \\
& \hspace{1in}+ \Tr\left\{\left| \CutOff\sqrt{\hat{\Lambda}_{w^n}}\sqrt{\hat{\Lambda}_{w^n}}(\Pi_{\rho} - \CutOff)\right|\right\} \nonumber \\
& \leq \sqrt{\Tr\left\{(\Pi_{\rho} - \CutOff)^2 {\hat{\Lambda}_{w^n}}\right\}\Tr\left\{ {\hat{\Lambda}_{w^n}}\right\}} \nonumber \\
& \hspace{0.85in} + \sqrt{\Tr\left\{ \CutOff{\hat{\Lambda}_{w^n}}\right\}\Tr\left\{{\hat{\Lambda}_{w^n}}(\Pi_{\rho} - \CutOff)^2\right\}} \nonumber \\
& \leq 2\sqrt{\Tr{(\Pi_{\rho}-\CutOff )\hat{\Lambda}_{w^n}}\Tr{\hat{\Lambda}_{w^n}}}, \nonumber
\end{align}
where the second inequality uses Cauchy-Schwarz inequality along with the polar decomposition (see the usage in  \cite[Lemma 9.4.2]{book_quantum}) and the last inequality uses the arguments: 
(i) $\CutOff$ is a projector onto a subspace of $\Pi_\rho$ and (ii) $ \Tr{\CutOff\hat{\Lambda}_{w^n}} \leq \Tr{\hat{\Lambda}_{w^n}} $.
Further, using the  fact that for $ w^n \in \TDeltaN(W),$  
\begin{align*}
    \Tr\{\hat{\Lambda}_{w^n}\} & = \|\Pi_{\rho}\hat{\Lambda}_{w^n}\Pi_{\rho}\|_1\\
    & \leq \|\Pi_{\rho}\sqrt{\rho^{\tensor n}}^{-1}\|_{\infty}\underbrace{\|\rhotildwA\|_1}_{\leq 1}\|\Pi_{\rho}\sqrt{\rho^{\tensor n}}^{-1}\|_{\infty} \\
    & \leq  \|\Pi_{\rho}\sqrt{\rho^{\tensor n}}^{-1}\|_{\infty}^2 \leq 2^{n(S(\rho)+\delta_\rho)},
\end{align*}
it follows that
\begin{align}
&\sum_{w^n}\gammaWCoeff\left\|\sqrt{\rho^{\tensor n}}\!  \left (\bar{A}_{w^n}^{(\mu)} -	A_{w^n}^{(\mu)} \right )\!\sqrt{\rho^{\tensor n}}\right\|_1 \nonumber \\ 
& \leq 2\cdot  2^{-\frac{n}{2}(S(\rho)-4\delta_\rho)}
\sum_{w^n}{\alpha_{w^n}\gammaWCoeff} \sqrt{\Tr{(\Pi_\rho-\CutOff )\hat{\Lambda}_{w^n}}} \nonumber \\
& \leq 2 \cdot {2^{3n\delta_\rho}}  \Delta^{(\mu)}
\sqrt{\sum_{w^n}\frac{\alpha_{w^n} \gammaWCoeff}{\Delta^{(\mu)}} \Tr{(\Pi_\rho-\CutOff )\rhotildwA}} \nonumber \\ 
& = 2\cdot {2^{3n\delta_\rho}} \left(\Delta^{(\mu)} -\EE[\Delta^{(\mu)}] + \EE[\Delta^{(\mu)}]\right) \nonumber \\
&\hspace{0.9in}\times \sqrt{\Tr{(\Pi_\rho-\CutOff )\sum_{w^n}\frac{\alpha_{w^n} \gammaWCoeff}{\Delta^{(\mu)}} \rhotildwA}}  \nonumber \\
& \leq 2 \cdot {2^{3n\delta_\rho}} \EE[\Delta^{(\mu)}]
 \sqrt{\Tr{(\Pi_\rho-\CutOff )\sum_{w^n}\frac{\alpha_{w^n} \gammaWCoeff}{\Delta^{(\mu)}} \rhotildwA}}   \nonumber \\
&\hspace{1.7in} + 2\cdot2^{3n\delta_\rho}\underbrace{\left|\Delta^{(\mu)} - \EE[\Delta^{(\mu)}]\right|}_{H_0} \nonumber \\   
& \leq 2 \cdot{2^{3n\delta_\rho}}
 \left(H_0 + \frac{\sqrt{(1-\varepsilon)}}{(1+\eta)}\sqrt{H_1+H_2 + H_3}\right) , \nonumber  
\end{align}
where the second inequality above follows by defining $ \Delta^{(\mu)} = \sum_{w^n\in\TDeltaN(W)}{\alpha_{w^n}\gammaWCoeff}$ and using the concavity of the square-root function, the third inequality follows by  using the fact that 
\begin{align}
\sum_{w^n}\frac{\alpha_{w^n} \gammaWCoeff}{\Delta^{(\mu)}} & \Tr{(\Pi_\rho-\CutOff )\rhotildwA} \nonumber \\
 & \leq \sum_{w^n}\frac{\alpha_{w^n} \gammaWCoeff}{\Delta^{(\mu)}}\Tr{\rhotildwA}\leq 1, 
\end{align}
and defining $H_0$ as above. and the last one follows by first using $\EE[\Delta^{(\mu)}]=\frac{(1-\varepsilon)}{(1+\eta)}$ and then defining $ H_1, H_2 $  and $ H_3 $ as in the statement of the lemma and using the inequality $ \Tr{\Lambda(\omega-\sigma)} \leq  \|\Lambda(\omega-\sigma)\|_1 \leq \|\Lambda\|_\infty\|\omega-\sigma\|_1 $. This completes the proof.

\section{Proof of Propositions}

\subsection{Proof of Proposition \ref{prop:Lemma for p2p:Tilde_S1}}\label{appx:proof of p2p:Tilde_S1}
Applying the triangle inequality on $\widetilde{S}_1$ gives $ \widetilde{S}_1 \leq  \widetilde{S}_{11} +  \widetilde{S}_{12}$, where
\begin{align}
    \widetilde{S}_{11} & \deq {\frac{1}{N}\sum_{\mu}\left\| \sum_{w^n}\lambdawA\rhohatwA - \sum_{w^n}\alpha_{w^n}\gammaWCoeff\hat{\rho}_{w^n} \right\|_1}, \nonumber\\   \widetilde{S}_{12} &\deq { \frac{1}{N}\sum_{\mu}\sum_{w^n}\alpha_{w^n}\gammaWCoeff\left\| \rhohatwA -  \rhotildwA \right\|_1}. \nonumber  \label{eq:p2pS2simplification}
\end{align}
For the first term $\widetilde{S}_{11}$, we use Lemma \ref{lem:nf_SoftCovering}, and identify $\theta_{w^n}$ with $\rhohatwA$ and $N'= 1$. Using this lemma, we obtain the following: For any 
$\epsilon>0$, and any $\eta,\delta \in (0,1)$ sufficiently small and any $n$ sufficiently large,  $\EE[\widetilde{S}_{11}] \leq \epsilon$, if the $ \frac{k+l}{n} \log p > I(W;R)_{\sigma} - S(W)_{\sigma} + \log{p}$, where $ \sigma$ is defined in the statement of the theorem. 
As for the second term $\widetilde{S}_{12}$, we use the gentle measurement lemma and bound its expected value as
\begin{align}
\EE&\left[\frac{1}{N}\sum_{\mu}\sum_{w^n}\alpha_{w^n}\gammaWCoeff \left\| \rhohatwA -  \rhotildwA \right\|_1\right]  \nonumber\\
&\leq \hspace{-17pt}\sum_{w^n \in \TDeltaN(W)}\hspace{-6pt}\frac{\lambdawA}{(1+\eta)}\left\| \rhohatwA -  \rhotildwA \right\|_1 + \hspace{-17pt} \sum_{w^n \notin \TDeltaN(W)}\hspace{-6pt}\frac{\lambdawA}{(1+\eta)} \leq \epsilon_{\scriptscriptstyle \widetilde{S}_{12}},\nonumber
\end{align}
where the inequality is based on the repeated usage of the average gentle measurement lemma by setting $ \epsilon_{\scriptscriptstyle\widetilde{S}_{12}} = \frac{(1-\varepsilon)}{(1+\eta)} (2\sqrt{\varepsilon'} + 2\sqrt{\varepsilon''}) $ with  $ \epsilon_{\scriptscriptstyle \widetilde{S}_{12}} \searrow 0 $ as $n \rightarrow \infty$ and {$ \varepsilon' = \varepsilon'_p + 2\sqrt{\varepsilon'_p} $ and $ \varepsilon'' = 2\varepsilon'_p + 2\sqrt{\varepsilon'_p} $ for $\varepsilon'_p \deq 1-\min\left\{\tr{\Pi_{\rho}\hat{\rho}_{w^n}}, \tr{\Pi_{w^n}\hat{\rho}_{w^n}},1-\varepsilon \right\}$}
(see (35) in \cite{wilde_e} for more details).


\subsection{Proof of Proposition \ref{prop:Lemma for p2p:S_13}}\label{appx:proof of p2p:TildeS_2}
To provide a bound for $\widetilde{S}_2$, we individually bound the terms corresponding to $H_0$ and $\widetilde{H}$ in an expected sense. Let us first consider $\widetilde{H}$. To provide a bound for $\tilde{H}$ we use Lemma \ref{lem:Change Measure Soft Covering Variance Based} with the following identification: $\lambda_x$ with $\frac{\lambdawA}{(1-\varepsilon)}$, $\sigma_x$ with $\rhohatwA$, $\mathcal{X}$ with $\TDeltan(W)$, $\mathcal{\bar{X}}$ with $\FF_p^n$, $\Pi$ with $\Pi_\rho$, $\Pi_x$ with $\Pi_{w^n}$, and  $\mu_x$ with $\frac{1}{p^n}$.

\noindent Firstly, we have $\frac{\lambdawA}{1/p^n} \leq 2^{-n{(S(W)_{\sigma}-\log{p}-\delta_w)}}$ for all $w^n \in \FF_p^n$, where $\delta_{w}(\delta) \searrow 0$ as $\delta \searrow 0$, which gives $$\kappa = 2^{-n{(S(W)_{\sigma}-\log{p}-\delta_w)}}.$$
With these, we check the hypotheses of Lemma \ref{lem:Change Measure Soft Covering Variance Based}. As for the first hypothesis \eqref{Soft_Covering-constraints1}, using the pinching arguments described in  \cite[Property 15.2.7]{Wilde_book}, we have $\Tr{\Pi_{\rho}\rhohatwA} \geq 1-\epsilon$ for all $\epsilon \in(0,1), \delta>0$ and sufficiently large $n$. Then the hypotheses \eqref{Soft_Covering-constraints2} and \eqref{Soft_Covering-constraints5} are satisfied from the construction of $\Pi_{w^n}$. Next, consider the hypothesis \eqref{Soft_Covering-constraints3}. We have
\begin{align}
    & \Bigg\|\Pi_\rho\sqrt{\bigg(\sum_{w^n\in\TDeltan(W)}\frac{\lambdawA}{(1-\varepsilon)}\rhohatwA\bigg)}\Bigg\|_1 \nonumber \\
    & \leq \frac{1}{\sqrt{1-\varepsilon}}\Tr{\sqrt{\Pi_\rho\rho^{\tensor n}\Pi_{\rho}}} \leq 2^{\frac{n}{2}(S(R)_{\sigma_{}} +\delta_\rho')}, \nonumber
\end{align}
where the first inequality above follows from using $ \sum_{w^n\in\TDeltan(W)}\frac{\lambdawA}{(1-\varepsilon)}\rhohatwA \leq\frac{1}{(1-\varepsilon)}\rho^{\tensor n}$ and the operator monotonicty of the  square-root function. The second inequality follows from the property of the typical projector for some $\delta_\rho'$ such that $\delta_w' \searrow 0$ as $\delta \searrow 0$. This gives \[{D} = {2^{n(S(R)_{\sigma} + \delta_{\rho}')}},\] where $\sigma$ is as defined in the statement of the theorem. Finally, the hypotheses \eqref{Soft_Covering-constraints4} is satisfied from the property of conditional typical projectors for $ d = 2^{n(S(R|W)_{\sigma_{}} - \delta_w'')}$,
where $\delta_w'' \searrow 0$ as $\delta \searrow 0.$
Next we check the pairwise independence of $\WCodewordm$ and $W^{n,(\mu)}(\tilde{a},\tilde{m})$. Since these are constructed using randomly and uniformly generated $G^{}$ and $h^{(\mu)}$, we have $\{\WCodewordm\}_{a\in \FF^k_p,m\in\FF_p^l,\mu\in[1,N]}$ to be pairwise independent  (see \cite{pradhanalgebraic} for details).
 Therefore, employing inequality \eqref{lem:ChangeMeasure:ineq2} of Lemma \ref{lem:Change Measure Soft Covering Variance Based}, we get
\begin{align}
   & \EE[\tilde{H}]  \leq  \sqrt{\frac{2^{n(S(R)_{\sigma_{}} + \delta_{\rho}')}2^{-n{(S(W)_{\sigma}-\log{p}-\delta_w)}}}{N2^{nS}2^{n(S(R|W)_{\sigma_{}} - \delta_w'')}}} \nonumber \\
    &\;\leq 2^{-\frac{n}{2}\left(\frac{k+l}{n}\log{p} + \frac{1}{n}\log{N} -  I(R;W)_{\sigma_{}}  - \log{p} +S(W)_{\sigma_{}} - \delta_{w} - \delta_{\rho}' -\delta_{w}'' \right)}. \nonumber
\end{align}


\noindent Next, consider $H_0$ and perform the following simplification
\begin{align}
\EE[H_0] & =\frac{(1-\varepsilon)}{(1+\eta)}\EE\bigg|\sum_{w^n\in \TDeltan(W)}\!\!\!\!\frac{\lambdawA}{(1-\varepsilon)} \nonumber \\
& \hspace{10pt}- \frac{p^n}{p^{k+l}}\!\!\!\! \sum_{w^n\in \TDeltan(W)}\sum_{a,i}\frac{\lambdawA}{(1-\varepsilon)}\11_{\{\WCodeword = w^n\}}\bigg| \nonumber \\
& = \frac{(1-\varepsilon)}{(1+\eta)}\EE\bigg\|\sum_{w^n\in \TDeltan(W)}\!\!\!\!\frac{\lambdawA}{(1-\varepsilon)}\omega_0^{\tensor n} - \frac{p^n}{p^{k+l}}\hspace{-15pt}  \nonumber \\
& \hspace{5pt}\times\sum_{w^n\in \TDeltan(W)}\sum_{a,i}\frac{\lambdawA}{(1-\varepsilon)}\11_{\{\WCodeword = w^n\}}\omega_0^{\tensor n}\bigg\|_1,
\end{align}
where $\omega_0 \in \mathcal{D}(\mathcal{H})$ is any state independent of $W$. We again apply Lemma \ref{lem:Change Measure Soft Covering Variance Based} to the above term with the following identification: $\lambda_x$ with $\frac{\lambdawA}{(1-\varepsilon)}$, $\sigma_x$ with $\omega_0^{\tensor n}$, $\mathcal{X}$ with $\TDeltan(W)$, $\mathcal{\bar{X}}$ with $\FF_p^n$, $\Pi$ and $\Pi_x$ with Identity operator $I$, and  $\mu_x$ with $\frac{1}{p^n}$. With this identification, $\kappa$ remains as above, $\kappa = 2^{-n{(S(W)_{\sigma}-\log{p}-\delta_w)}}$  and $D= d = 1$. 
Hence, using in inequality \eqref{lem:ChangeMeasure:ineq2} of Lemma \ref{lem:Change Measure Soft Covering Variance Based}, we obtain 
\begin{align}
    \EE[H_0] & \leq 2^{-\frac{n}{2}\left(\frac{k+l}{n}\log{p}  - \log{p} +S(W)_{\sigma_{}} - \delta_{w} \right)}. \nonumber
\end{align}
This completes the proof.

 
\subsection{Proof of Proposition \ref{prop:Lemma for p2p:S_2}}\label{appx:proof of p2p:S2}
We begin using the definition of $ A_{w^n}^{(\mu)} $ and applying triangle inequality to $ S_2 $ to obtain
\begin{align}
S_{2} & \leq \frac{1}{(1+\eta)}\frac{1}{N}\sum_{\mu}\sum_{a,i>0} \sum_{w^n,z^n} \cfrac{\lambdawA p^n}{p^{k+l}} \mathbbm{1}_{\{aG^{} + h^{(\mu)}(i) = w^n\}} \nonumber \\
& \hspace{15pt}\times{\left \|\sqrt{\rho^{\tensor n}}  \CutOff\sqrt{\rho^{\tensor n}}^{-1}\rhotildwA\sqrt{\rho^{\tensor n}}^{-1}\CutOff\sqrt{\rho^{\tensor n}}^{}\right \|_1}\nonumber\\ 
& \hspace{0.7in}\times\left|P^n_{Z|W}(z^n|w^n)-   P^n_{Z|W}\left (z^n|F^{(\mu)}(i)\right )\right| \nonumber \\
& \leq \frac{2^{2n\delta_\rho}}{(1+\eta)}\frac{1}{N}\sum_{\mu}\sum_{a,i>0} \sum_{w^n,z^n} \cfrac{\lambdawA p^n}{p^{k+l}} \mathbbm{1}_{\{aG^{} + h^{(\mu)}(i) = w^n\}}  \nonumber \\
& \hspace{20pt} \times  \left|P^n_{Z|W}(z^n|w^n)-   P^n_{Z|W}\left (z^n|F^{(\mu)}(i)\right )\right| \nonumber \\
& \leq \frac{2^{2n\delta_\rho}}{(1+\eta)}\frac{1}{N}\sum_{\mu}\sum_{a,i>0} \sum_{w^n} 2\cfrac{\lambdawA p^n}{p^{k+l}}  \nonumber \\
& \hspace{.9in}\times\mathbbm{1}_{\{aG^{} + h^{(\mu)}(i) = w^n\}}\mathbbm{1}^{(\mu)}(w^n,i),   \label{eq:S_2ineq}
\end{align}
where the second inequality above uses the following arguments
\begin{align}
   &\left\| \sqrt{\rho^{\tensor n}}  \CutOff\sqrt{\rho^{\tensor n}}^{-1}\rhotildwA\sqrt{\rho^{\tensor n}}^{-1}\CutOff\sqrt{\rho^{\tensor n}}^{}\right\|_1  \nonumber \\
   & = \left\|\sqrt{\rho^{\tensor n}}\Pi_\rho  \CutOff\sqrt{\rho^{\tensor n}}^{-1}\Pi_\rho\rhotildwA\Pi_\rho\sqrt{\rho^{\tensor n}}^{-1}\CutOff\Pi_\rho\sqrt{\rho^{\tensor n}}^{} \right\|_1\nonumber \\
    & \leq \left\|\sqrt{\rho^{\tensor n}}\Pi_\rho\right\|_\infty\left\|  \CutOff\sqrt{\rho^{\tensor n}}^{-1}\Pi_\rho\rhotildwA\Pi_\rho\sqrt{\rho^{\tensor n}}^{-1}\CutOff \right\|_1 \nonumber\\ & \hspace{20pt}\times \left\|\sqrt{\rho^{\tensor n}}\Pi_\rho\right\|_\infty \nonumber \\
     & \leq 2^{-n(S(\rho)-\delta_\rho)}\left\|  \CutOff\right\|^2_{\infty}\left\|\sqrt{\rho^{\tensor n}}^{-1}\Pi_\rho\rhotildwA\Pi_\rho\sqrt{\rho^{\tensor n}}^{-1}\right\|_1 \nonumber \\
    & \leq 2^{2n\delta_\rho}\left\|\rhotildwA\right\|_1 \leq 2^{2n\delta_\rho},
\end{align}
where the above inequalities follow from the H\'older's inequality. 
Finally, the last inequality in \eqref{eq:S_2ineq} follows by defining $\mathbbm{1}^{(\mu)}(w^n,i) $ as
\begin{align}
\mathbbm{1}^{(\mu)}(w^n,i)  \deq \11\bigg\{\exists & (\tilde{w}^n,\tilde{a}^n): \tilde{w}^n = \tilde{a}^nG^{}+ h^{(\mu)}(i), \nonumber \\
& \tilde{w}^n \in \mathcal{T}_{{\delta}}^{(n)}(W), \tilde{w}^n \neq w^n\bigg\}. \nonumber
\end{align}
Observe that
\begin{align}
\EE[\mathbbm{1}^{(\mu)}(w^n,i)\mathbbm{1}_{\{ aG^{} + h^{(\mu)}(i)= w^n\}}] \leq 
 \sum_{\tilde{a} \in \FF_p^{k}}\sum_{\substack{\tilde{w} \in \mathcal{T}_{{\delta}}^{(n)}(W) \\ \tilde{w} \neq w^n}}\frac{1}{p^np^n}, \nonumber
\end{align}
which follows from the pairwise independence of the codewords.
Using this, we obtain
\begin{align}
\EE[S_2] &\leq \frac{2\;2^{2n\delta_\rho}}{(1+\eta)}\frac{2^{-nR}p^{k+l}}{p^n}\sum_{\tilde{w}^n \in \mathcal{T}_{{\delta}}^{(n)}(W)} \sum_{{w}^n \in \mathcal{T}_{{\delta}}^{(n)}(W)} \lambdawA  \nonumber \\
& \leq 2\; {2^{n(\frac{k+l}{n}\log{p}-R - \log{p} + S(W)_{\sigma} + \delta_{S_2})}}, \nonumber
\end{align}
where $ \delta_{S_2} \searrow 0 $ as $  \delta \searrow 0 $, and $ \sigma $ is as defined in the statement of the theorem. This completes the proof.

\subsection{Proof of Proposition \ref{prop:Lemma for S_2}}\label{appx:proof of S_2}

Recalling $ S_{2} $, we have $S_2 \leq S_{21}+S_{22}$, where
\begin{align}
S_{21} & \deq \frac{2}{N_1N_2}  \sum_{\bar{\mu}_1,\bar{\mu}_2} \sum_{u^n,v^n}\alpha_{u^n}\beta_{v^n}\gammaBarCoeff\zetaBarCoeff\Omega_{u^n,v^n} \nonumber \\ & \hspace{1.6in}\times 
\mathbbm{1}_{\{(u^n,v^n)\not \in\TDeltaN(U,V)\}}
, \nonumber \\
S_{22} & \deq  \frac{2}{\bar{N}_1\bar{N}_2}  \sum_{\bar{\mu}_1,\bar{\mu}_2} \sum_{u^n,v^n}\alpha_{u^n}\beta_{v^n}\gammaBarCoeff\zetaBarCoeff\Omega_{u^n,v^n} \nonumber \\ & \hspace{1.6in} \times  \mathbbm{1}^{(\bar{\mu}_1,\bar{\mu}_2)}(u^n+ v^n,i,j),\nonumber 
\end{align}
where $ \Omega_{u^n,v^n} $ and $ \mathbbm{1}^{(\bar{\mu}_1,\bar{\mu}_2)}(w^n,i,j)$ are defined as
\begin{align*}
&\Omega_{u^n,v^n} \deq \tr\Big\{\left[\left( \CutOffBarA\tensor\CutOffBarB\right)\sqrt{\rho^{\tensor n}_A\tensor \rho^{\tensor n}_B}^{-1}(\rhotilduA\tensor\right. \\
& \hspace{0.9in}\left. \rhotildvB)  \sqrt{\rho^{\tensor n}_A\tensor \rho^{\tensor n}_B}^{-1}\left( \CutOffBarA\tensor\CutOffBarB\right)\right] \rho^{\tensor n}_{AB}\Big\},\\
& \mathbbm{1}^{(\bar{\mu}_1,\bar{\mu}_2)}(w^n,i,j) \\ & \hspace{10pt} \deq \11\bigg\{\exists (\tilde{w}^n,\tilde{a}^n): \tilde{w}^n = \tilde{a}^nG^{}+ h_1^{(\bar{\mu}_1)}(i) + h_2^{(\bar{\mu}_2)}(j), \\& \hspace{1.5in} \tilde{w}^n \in \mathcal{T}_{\hat{\delta}}^{(n)}(U+ V), \tilde{w}^n \neq w^n\bigg\}.
\end{align*}

We begin by bounding the term corresponding to $S_{21}$. Consider the following argument. 
\begin{widetext}
\begin{align*}
    S_{21} 
    &\leq \Bigg| \frac{2}{\bar{N}_1\bar{N}_2}  \sum_{\bar{\mu}_1,\bar{\mu}_2}  \sum_{u^n,v^n}\alpha_{u^n}\beta_{v^n}\gammaBarCoeff\zetaBarCoeff\Omega_{u^n,v^n} \mathbbm{1}_{\{(u^n,v^n)\not \in\TDeltaN(U,V)\}} - \hspace{-20pt}\sum_{
    \substack{(u^n,v^n) \not \in \TDeltaN(UV) \nonumber \\
    u^n \in \TDeltaN(U),v^n \in \TDeltaN(V)}}\hspace{-30pt} 2\lambda_{u^n,v^n}^{AB}\Bigg| + \hspace{-10pt} \sum_{(u^n,v^n) \not \in \TDeltaN(U,V)} \hspace{-10pt} 2\lambda_{u^n,v^n}^{AB} \\
    &\overset{(a)}{\leq} 2\sum_{u^n \in \mathcal{U}^n} \sum_{v^n \in \mathcal{V}^n} \left|\lambda^{AB}_{u^n,v^n} - \frac{1}{\bar{N}_1\bar{N}_2} \sum_{\bar{\mu}_1,\bar\mu_2}\alpha_{u^n}\beta_{v^n} \gamma^{(\bar\mu_1)}_{u^n} \zeta^{(\bar\mu_2)}_{v^n} \Omega_{u^n,v^n} \right|  
    + \sum_{(u^n,v^n) \not \in \TDeltaN{(UV)}}2\lambda^{AB}_{u,v} \\
    &\overset{(b)}{\leq} 2\tilde{S}_1+2\sum_{(u^n,v^n) \not \in \TDeltaN{(UV)}}\lambda^{AB}_{u^n,v^n}, 
\end{align*}
\end{widetext}
where 
\begin{align*}
\tilde{S}_1 &\deq    \bigg \| (\text{id}\tensor \bar{M}_A^{\tensor n}\tensor \bar{M}_B^{\tensor n}) (\Psi^\rho_{RAB})^{\tensor n}- \\& \hspace{15pt} \frac{1}{\bar{N}_1\bar{N}_2}\sum_{\bar{\mu}_1,\bar{\mu}_2}(\text{id}\tensor  [M_1^{(\bar{\mu}_1)}]\tensor [M_2^{(\bar{\mu}_2)}]) (\Psi^\rho_{RAB})^{\tensor n}\bigg \|_1,
\end{align*}
(a) follows by applying the triangle inequality, and (b) follows from the Lemma \ref{lem:classical prob} given below. 
Note that in $\tilde{S}_1$, the average over the entire common information sequence $(\bar{\mu}_1,\bar{\mu}_2)$ is inside the norm. 
\begin{lem} \label{lem:classical prob}
{We have 
\begin{align}
    \sum_{u^n \in \mathcal{U}^n}&\sum_{v^n \in \mathcal{V}^n}
\bigg|\lambda^{AB}_{u^n,v^n} - \nonumber \\
&\frac{1}{\bar{N}_1\bar{N}_2} \sum_{\bar{\mu}_1,\bar{\mu}_2} \alpha_{u^n}\beta_{v^n}\gammaBarCoeff \zetaBarCoeff \Omega_{u^n,v^n}
\bigg|   \leq S_1.
\end{align}}
\end{lem}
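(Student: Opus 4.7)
The plan is to identify both terms inside the absolute value on the left-hand side as outcome probabilities of two POVMs acting on $\rho_{AB}^{\tensor n}$, and then to bound the resulting $L^1$ distance between the two induced classical distributions by $\tilde{S}_1$ via monotonicity of the trace norm under the partial trace.

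First, I would verify the algebraic identifications. By \eqref{eq:dist_canonicalEnsemble}, $\lambda^{AB}_{u^n,v^n}=\tr\{(\bar{\Lambda}^A_{u^n}\tensor \bar{\Lambda}^B_{v^n})\rho_{AB}^{\tensor n}\}$, so $\{\lambda^{AB}_{u^n,v^n}\}$ is the classical distribution induced by measuring $\rho_{AB}^{\tensor n}$ with $\bar{M}_A^{\tensor n}\tensor \bar{M}_B^{\tensor n}$. Using $A^{(\bar{\mu}_1)}_{u^n}=\CutOffBarA\bar{A}^{(\bar{\mu}_1)}_{u^n}\CutOffBarA$, $B^{(\bar{\mu}_2)}_{v^n}=\CutOffBarB\bar{B}^{(\bar{\mu}_2)}_{v^n}\CutOffBarB$, the definitions of $\bar{A}^{(\bar{\mu}_1)}_{u^n}$, $\bar{B}^{(\bar{\mu}_2)}_{v^n}$ in \eqref{eq:A_uB_v}, and the definition of $\Omega_{u^n,v^n}$, a direct computation yields $\alpha_{u^n}\beta_{v^n}\,\Omega_{u^n,v^n}=\tr\{(A^{(\bar{\mu}_1)}_{u^n}\tensor B^{(\bar{\mu}_2)}_{v^n})\rho_{AB}^{\tensor n}\}$. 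Multiplying by $\gammaBarCoeff\zetaBarCoeff$ and averaging over $(\bar{\mu}_1,\bar{\mu}_2)$ then produces exactly the probability of the outcome $(u^n,v^n)$ under the averaged sub-POVM $\frac{1}{\bar{N}_1\bar{N}_2}\sum_{\bar{\mu}_1,\bar{\mu}_2}[M_1^{(\bar{\mu}_1)}]\tensor [M_2^{(\bar{\mu}_2)}]$ acting on $\rho_{AB}^{\tensor n}$.

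Next, I would recognise $\tilde{S}_1=\|\Phi_1-\Phi_2\|_1$, where $\Phi_1$ and $\Phi_2$ are the classical-quantum states obtained by applying $\id_R\tensor \bar{M}_A^{\tensor n}\tensor \bar{M}_B^{\tensor n}$ and $\id_R\tensor \frac{1}{\bar{N}_1\bar{N}_2}\sum_{\bar{\mu}_1,\bar{\mu}_2}[M_1^{(\bar{\mu}_1)}]\tensor [M_2^{(\bar{\mu}_2)}]$, respectively, to the purification $(\Psi^{\rho_{AB}}_{RAB})^{\tensor n}$. By \eqref{eq:1}, both are block-diagonal in the classical outcome registers indexed by $(u^n,v^n)$ (with $\Phi_2$ carrying additional non-negative blocks associated with the completion outcomes of $[M_1^{(\bar{\mu}_1)}]$ and $[M_2^{(\bar{\mu}_2)}]$). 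By the preceding paragraph, the partial trace over $R$ of the $(u^n,v^n)$-block of $\Phi_1$, respectively $\Phi_2$, equals the corresponding term inside the absolute value on the left-hand side.

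Finally, I would invoke monotonicity of the trace norm under the CPTP map that traces out $R$ and discards the completion-outcome classical registers; its outputs are precisely the two classical probability distributions on $\mathcal{U}^n\times\mathcal{V}^n$ whose $L^1$ distance coincides with the sum on the left-hand side. The discarded completion-outcome blocks contribute non-negatively to $\|\Phi_1-\Phi_2\|_1$ and do not appear on the left-hand side, so their omission only strengthens the inequality. The sole non-routine step is the algebraic identification in the second paragraph; once that is in place, the stated bound is an immediate consequence of the data-processing inequality for trace distance, so I do not expect any substantive obstacle.
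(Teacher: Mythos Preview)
Your proposal is correct and follows essentially the same route as the paper, which simply invokes Lemma~2 of \cite{wilde_e}: that lemma is precisely the monotonicity-of-trace-distance argument you spell out, identifying the two quantities inside the absolute value as the outcome probabilities of $\bar{M}_A^{\tensor n}\tensor\bar{M}_B^{\tensor n}$ and of the averaged approximating POVM, and then bounding the resulting $L^1$ distance by the trace distance $\tilde{S}_1$ of the corresponding post-measurement reference states. Note that the bound should indeed read $\tilde{S}_1$ (as you use it and as the paper uses it in step~(b)); the ``$S_1$'' appearing in the lemma statement is a typographical slip.
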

\begin{proof}
{The proof follows from Lemma 2 in \cite{wilde_e}}.
\end{proof}

Next we use Theorem \ref{thm:p2pTheorem} twice with (a) 
 $\rho=\rho_A$, $M=\bar{M}_A$, 
$\mathcal{W}=\mathcal{U}$, $\mathcal{Z}=\mathcal{U}$
and $P_{Z|W}(z|w)=\mathbbm{1}{\{z=w\}}$, and (b) $\rho=\rho_B$, $M=\bar{M}_B$, 
$\mathcal{W}=\mathcal{V}$, $\mathcal{Z}=\mathcal{V}$
and $P_{Z|W}(z|w)=\mathbbm{1}{\{z=w\}}$, and \cite[Lemma 5]{atif2019faithful} to yield the following: for any $\epsilon \in (0,1)$, and any $\eta,\delta \in (0,1)$ sufficiently small, and any $n$ sufficiently large $\EE[S_1] \leq  2\epsilon$ 
if $\frac{k+l_1}{n} \log p > I(U;RB)_{\sigma_1}-S(U)_{\sigma_3}+\log p$, $\frac{k+l_2}{n} \log p  > I(V;RA)_{\sigma_2}-S(V)_{\sigma_3}+\log p$, 
$\frac{k+l_1}{n} \log p +\frac{1}{n} \log \bar{N}_1 > \log p$, $\frac{k+l_2}{n} \log p +\frac{1}{n} \log \bar{N}_2 > \log p$, where 
$\sigma_1, \sigma_2$ and $\sigma_3$ are defined as in the statement of the theorem.
Consequently, we have $\EE[S_{21}] \leq 4\epsilon $ for all sufficiently large $n$.


In regards to $S_{22}$, note that
\begin{align}
& \EE\big[\mathbbm{1}^{(\bar\mu_1, \bar\mu_2)}(u^n+ v^n,i,j)\mathbbm{1}_{\{ a_1G^{} + h_1^{(\bar\mu_1)}(i)= u^n\}}\nonumber \\
&\hspace{30pt}\mathbbm{1}_{\{ a_2G^{} + h_2^{(\bar\mu_2)}(j)= v^n\}}\bigg]  \leq  \sum_{\substack{\tilde{a} \in \FF_p^{k}\\\tilde{a}\neq a}}\;\;\sum_{\substack{\tilde{w} \in \mathcal{T}_{\hat{\delta}}^{(n)}(U+ V) \\ \tilde{w} \neq u^n + v^n}}\frac{1}{p^n p^n p^n}. \nonumber
\end{align}
Using this, we obtain
\begin{align}
\EE[S_{22}] &\leq \cfrac{2}{(1+\eta)^2}\frac{p^{k+l_1}2^{nR_1}}{p^n}\sum_{\tilde{w}^n \in \mathcal{T}_{\hat{\delta}}^{(n)}(U+ V)} \nonumber \\
& \hspace{30pt} \sum_{u^n \in \TDeltaN(U)}\sum_{v^n \in \TDeltaN(V)}\lambda_{u^n}^A\lambda_{v^n}^B \Omega_{u^n,v^n} \nonumber \\
& \leq \frac{2\; {2^{n(\frac{k+l_1}{n}\log{p}-R_1 - \log{p} + S(U+ V)_{\sigma_3} + \delta_{\rho_{AB}}+\hat{\delta}_W)}}}{(1+\eta)^2},\nonumber
\end{align}
where $\hat{\delta}_W \searrow 0$ as $\delta \searrow 0$ and the above inequality follows from the following lemma (Lemma \ref{lem:omega lambda}). Hence, $\EE[S_{21}] \leq \epsilon$ if the conditions in the proposition are satisfied.
\begin{lem}\label{lem:omega lambda}
	For $ \lambdauA, \lambdavB $  and $ \Omega_{u^n,v^n} $ as defined above,
	we have $$ \sum_{u^n \in \TDeltan(U)} \sum_{v^n \in \TDeltan(V)} \Omega_{u^n,v^n}\lambdauA\lambdavB \leq 2^{n\delta_{\rho_{AB}}},$$ for some $ \delta_{\rho_{AB}} \searrow 0 $ as $ \delta \searrow 0. $
\end{lem}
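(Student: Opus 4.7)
The plan is to recognize that $\sum_{u^n}\lambdauA\rhotilduA$ is a trace-norm approximation of $\PiA\rho_A^{\tensor n}\PiA$ via the gentle measurement lemma, so that its sandwich by $\sqrt{\rho_A^{\tensor n}}^{-1}$ collapses to $\PiA$, reducing the double sum to $\tr\{(\CutOffBarA\tensor\CutOffBarB)\rho_{AB}^{\tensor n}\}\leq 1$ up to controllable errors. First I would use the tensor product factorization of $\Omega_{u^n,v^n}$ and linearity of the trace to gather the sum inside one trace,
\begin{align*}
\sum_{u^n,v^n}\lambdauA\lambdavB\,\Omega_{u^n,v^n} = \tr\Big\{(\CutOffBarA\tensor\CutOffBarB)(Z_A\tensor Z_B)(\CutOffBarA\tensor\CutOffBarB)\rho_{AB}^{\tensor n}\Big\},
\end{align*}
where $Z_A \deq \sqrt{\rho_A^{\tensor n}}^{-1}\tilde\rho^A \sqrt{\rho_A^{\tensor n}}^{-1}$ with $\tilde\rho^A \deq \sum_{u^n}\lambdauA\rhotilduA$, and analogously for $B$.

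Next, I would invoke the gentle measurement lemma twice, first on $\PiuA$ against $\rhohatuA$ (yielding $\|\PiuA \rhohatuA\PiuA - \rhohatuA\|_1 \leq 2\sqrt{\varepsilon'_p}$) and then on $\PiA$, and combine with the identity $\sum_{u^n}\lambdauA \rhohatuA = \rho_A^{\tensor n}$ to produce $\|\tilde\rho^A - \PiA\rho_A^{\tensor n}\PiA\|_1 \leq 2\sqrt{\varepsilon'_p}+\varepsilon$. Since $\PiA$ commutes with $\rho_A^{\tensor n}$, we have $\sqrt{\rho_A^{\tensor n}}^{-1}\PiA\rho_A^{\tensor n}\PiA\sqrt{\rho_A^{\tensor n}}^{-1} = \PiA$, and because $\CutOffBarA \leq \PiA$ (so that $\CutOffBarA\PiA\CutOffBarA = \CutOffBarA$), it follows that $\CutOffBarA Z_A \CutOffBarA = \CutOffBarA + F_A$ where $F_A \deq \CutOffBarA\sqrt{\rho_A^{\tensor n}}^{-1}(\tilde\rho^A - \PiA\rho_A^{\tensor n}\PiA)\sqrt{\rho_A^{\tensor n}}^{-1}\CutOffBarA$, and likewise for $B$. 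The leading contribution after this decomposition is then $\tr\{(\CutOffBarA\tensor\CutOffBarB)\rho_{AB}^{\tensor n}\} \leq \tr\{(\PiA\tensor\PiB)\rho_{AB}^{\tensor n}\} \leq 1$.

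The hard part will be controlling the residuals $\tr\{(F_A\tensor\CutOffBarB)\rho_{AB}^{\tensor n}\}$, its symmetric counterpart, and $\tr\{(F_A\tensor F_B)\rho_{AB}^{\tensor n}\}$: a naive H\"older bound passes through the amplification $\|\sqrt{\rho_A^{\tensor n}}^{-1}\PiA\|_\infty^2 \leq 2^{n(S(\rho_A)+\delta)}$, which completely overwhelms the trace-norm error $2\sqrt{\varepsilon'_p}+\varepsilon$. The key structural inequality I would rely on is the marginal estimate $\tr_B\{(I\tensor\CutOffBarB)\rho_{AB}^{\tensor n}\} \leq \rho_A^{\tensor n}$, obtained by expanding in a $\CutOffBarB$-eigenbasis, together with the two-sided typicality estimate $\PiA\rho_A^{\tensor n}\PiA \leq 2^{-n(S(\rho_A)-\delta)}\PiA$. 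In combination these give $\|\sqrt{\rho_A^{\tensor n}}^{-1}\CutOffBarA\,\rho_A^{\tensor n}\,\CutOffBarA\sqrt{\rho_A^{\tensor n}}^{-1}\|_\infty \leq 2^{2n\delta}$, so that by cyclicity of the trace and H\"older,
\begin{align*}
|\tr\{(F_A\tensor\CutOffBarB)\rho_{AB}^{\tensor n}\}| \leq \|\tilde\rho^A - \PiA\rho_A^{\tensor n}\PiA\|_1 \cdot 2^{2n\delta} \leq (2\sqrt{\varepsilon'_p}+\varepsilon)\cdot 2^{2n\delta},
\end{align*}
with an iterated partial-trace version of the same argument handling the $F_A\tensor F_B$ cross-term. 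Choosing $\delta$ small compared to the exponential decay rate of $\varepsilon'_p$ afforded by strong typicality of the canonical ensemble, all residual contributions are subexponential in $n$, and assembling the leading term with these errors yields $\sum_{u^n,v^n}\lambdauA\lambdavB\,\Omega_{u^n,v^n} \leq 1+o(1) \leq 2^{n\delta_{\rho_{AB}}}$ with $\delta_{\rho_{AB}}\searrow 0$ as $\delta\searrow 0$.
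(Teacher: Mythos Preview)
Your decomposition $\CutOffBarA Z_A\CutOffBarA=\CutOffBarA+F_A$ is correct, and the single cross-term bound $|\tr\{(F_A\tensor\CutOffBarB)\rho_{AB}^{\tensor n}\}|\leq \|E_A\|_1\cdot 2^{2n\delta_{\rho_A}}$ is valid via the partial-trace inequality $\tr_B\{(I\tensor\CutOffBarB)\rho_{AB}^{\tensor n}\}\leq\rho_A^{\tensor n}$. However, two points deserve correction. First, your claim that the residuals are $o(1)$ is wrong: the decay rate of $\varepsilon_p'$ under strong typicality is itself $2^{-nc\delta^2}$, so you cannot ``choose $\delta$ small compared to'' it. At best the single cross-terms are $O(2^{2n\delta})$, which already suffices for the lemma's stated bound $2^{n\delta_{\rho_{AB}}}$ but is not $1+o(1)$.

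Second, and more seriously, the $F_A\tensor F_B$ term does not yield to an ``iterated partial-trace'' argument using only trace-norm control. Tracing out $B$ leaves $\tau_A=\tr_B\{(I\tensor F_B)\rho_{AB}^{\tensor n}\}$ with $-\|F_B\|_\infty\,\rho_A^{\tensor n}\leq\tau_A\leq\|F_B\|_\infty\,\rho_A^{\tensor n}$, so you need $\|F_B\|_\infty$, and from $\|E_B\|_1$ alone you only get $\|F_B\|_\infty\leq\|E_B\|_1\cdot 2^{n(S(\rho_B)+\delta_{\rho_B})}$. The full entropy amplification $2^{nS(\rho_B)}$ cannot be absorbed by the $2^{-nc\delta^2}$ decay of $\|E_B\|_1$ for small $\delta$.

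The paper bypasses all of this with a single operator inequality you overlooked: since the conditional typical projector $\PiuA$ commutes with $\rhohatuA$, one has $\PiuA\rhohatuA\PiuA\leq\rhohatuA$, whence
\[
\tilde\rho^A=\PiA\Big(\sum_{u^n}\lambdauA\PiuA\rhohatuA\PiuA\Big)\PiA\leq\PiA\rho_A^{\tensor n}\PiA\leq 2^{-n(S(\rho_A)-\delta_{\rho_A})}\PiA.
\]
Conjugating by $\sqrt{\rho_A^{\tensor n}}^{-1}$ and then by $\CutOffBarA$ gives $\CutOffBarA Z_A\CutOffBarA\leq 2^{2n\delta_{\rho_A}}\CutOffBarA$ directly, and tensor monotonicity finishes the proof in one line: $\tr\{(\CutOffBarA Z_A\CutOffBarA\tensor\CutOffBarB Z_B\CutOffBarB)\rho_{AB}^{\tensor n}\}\leq 2^{2n(\delta_{\rho_A}+\delta_{\rho_B})}$. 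No gentle-measurement detour, no residuals, no cross-term.
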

\begin{proof}
Firstly, note that 
\begin{align}
& \sum_{\substack{u^n,v^n}}\Omega_{u^n,v^n}\lambdauA\lambdavB \nonumber \\ 
& \hspace{1pt} = \text{Tr}\bigg\{\bigg[\!\!\left( \CutOffBarA\tensor\CutOffBarB\right)\!\bigg(\!\sqrt{\rho_{A}^{\tensor n}}^{-1}\!\!\Big(\sum_{u^n}\lambdauA\rhotilduA\Big)\sqrt{\rho_{A}^{\tensor n}}^{-1}\hspace{-5pt}  \nonumber\\
& \;\;\;\tensor\sqrt{\rho_{B}^{\tensor n}}^{-1}\!\!\Big(\sum_{v^n}\lambdavB\rhotildvB\Big) \sqrt{\rho_{B}^{\tensor n}}^{-1}\bigg)\!\!\left( \CutOffBarA\tensor\CutOffBarB\right) \!\!\bigg] \rho^{\tensor n}_{AB}\bigg\}. \label{eq:omegalambdaUlambdaV}
\end{align} 
We know, $\sum_{u^n}\lambdauA\rhotilduA  \leq  2^{-n(S(\rho_A)-\delta_{\rho_A})}\Pi_{\rho_A}$, where $ \delta_{\rho_A} \searrow 0 $ as $ \delta \searrow 0 $.
This implies, 
\begin{align}
\CutOffBarA&\sqrt{\rho_{A}^{\tensor n}}^{-1}\left(\sum_{u^n}\lambdauA\rhotilduA\right)\sqrt{\rho_{A}^{\tensor n}}^{-1}\CutOffBarA \nonumber \\ &\leq 2^{-n(S(\rho_A)-\delta_{\rho_A})}\CutOffBarA\sqrt{\rho_{A}^{\tensor n}}^{-1}\PiA\sqrt{\rho_{A}^{\tensor n}}^{-1}\CutOffBarA \nonumber \\ &\leq  2^{2n\delta_{\rho_A}}\CutOffBarA\Pi_{\rho_A}\CutOffBarA \leq 2^{2n\delta_{\rho_A}}\CutOffBarA, \label{eq:ineqrhoA}
\end{align}
where the second inequality appeals to  the fact that $ \sqrt{\rho_{A}^{\tensor n}}^{-1} \Pi_{\rho_{A}} \sqrt{\rho_{A}^{\tensor n}}^{-1} \leq 2^{n(S(\rho_A)+\delta_{\rho_A})}\Pi_{\rho_A}$. 
Similarly, using the same arguments above for the operators acting on $ \mathcal{H}_B $, we have
\begin{equation} 
\CutOffBarB\sqrt{\rho_{B}^{\tensor n}}^{-1}\left(\sum_{v^n}\lambdavB\rhotildvB\right)\sqrt{\rho_{B}^{\tensor n}}^{-1}\CutOffBarB \leq 2^{2n\delta_{\rho_B}}\CutOffBarB, \label{eq:ineqrhoB}
\end{equation}
where $ \delta_{\rho_B} \searrow 0 $ as $ \delta \searrow 0 $.
Using (i) the simplifications in \eqref{eq:ineqrhoA} and \eqref{eq:ineqrhoB}, and (ii) the fact that for $A_1 \geq B_1 \geq 0 $ and $A_2 \geq B_2 \geq 0 $, $(A_1\tensor A_2)\geq (B_1\tensor B_2)$  in \eqref{eq:omegalambdaUlambdaV}, gives 
\begin{align}
\sum_{u^n,v^n}&\Omega_{u^n,v^n}\lambdauA\lambdavB \nonumber \\
& \leq 2^{2n(\delta_{\rho_A}+\delta_{\rho_B})}\Tr{\left (\CutOffBarA\tensor\CutOffBarB\right ) \rho^{\tensor n}_{AB}} \nonumber \\
& \leq 2^{2n(\delta_{\rho_A}+\delta_{\rho_B})}\Tr{ \rho^{\tensor n}_{{AB}}} = 2^{2n(\delta_{\rho_A}+\delta_{\rho_B})}. \nonumber
\end{align}
Substituting $ \delta_{\rho_{AB}} = 2(\delta_{\rho_A}+\delta_{\rho_B}) $ gives the result.

\end{proof}


\subsection{Proof of Proposition \ref{prop:Lemma for Tilde_S}}\label{appx:proof of Tilde_S} 
We bound $\widetilde{S}$ as $\widetilde{S}\leq \widetilde{S}_2 +\widetilde{S}_3 +\widetilde{S}_4$, where
\begin{align*}
\widetilde{S}_2 \deq &\bigg\|\frac{1}{N_1N_2}\sum_{\mu_1,\mu_2}\sum_{i>0} \sqrt{\rho_{AB}^{\tensor n}}\left (\Gamma^{A, ( \mu_1)}_i\tensor \Gamma^{B, (\mu_2)}_0\right )  \\ & \hspace{1.5in}\times\sqrt{\rho_{AB}^{\tensor n}} P^n_{Z|U+V}(z^n|w^{n}_{0})\bigg\|_1, \nonumber \\
\widetilde{S}_3 \deq  &\bigg\|\frac{1}{N_1N_2}\sum_{\mu_1,\mu_2}\sum_{j>0} \sqrt{\rho_{AB}^{\tensor n}}\left (\Gamma^{A, ( \mu_1)}_0\tensor \Gamma^{B, (\mu_2)}_j\right )\\ & \hspace{1.5in}\times\sqrt{\rho_{AB}^{\tensor n}} P^n_{Z|U+V}(z^n|w^{n}_{0})\bigg\|_1, \nonumber \\
\widetilde{S}_4 \deq  &\bigg\|\frac{1}{N_1N_2}\sum_{\mu_1,\mu_2} \sqrt{\rho_{AB}^{\tensor n}}\left (\Gamma^{A, ( \mu_1)}_0\tensor \Gamma^{B, (\mu_2)}_0\right )\\ & \hspace{1.5in}\times \sqrt{\rho_{AB}^{\tensor n}} P^n_{Z|U+V}(z^n|w^{n}_{0})\bigg\|_1. \nonumber  
\end{align*}
\noindent{\bf Analysis of $\widetilde{ S}_2 $:} We have 
\begin{widetext}
\begin{align}
\widetilde{S}_{2} & \leq  \frac{1}{N_1N_2}\sum_{\mu_1,\mu_2}\sum_{i>0}\sum_{z^n}P^n_{Z|U+V}(z^n|w^{n}_{0})\left\| \sqrt{\rho_{AB}^{\tensor n}}\left (\Gamma^{A, ( \mu_1)}_i\tensor \Gamma^{B, (\mu_2)}_0\right )\sqrt{\rho_{AB}^{\tensor n}} \right\|_1 \nonumber \\
& \leq  \frac{1}{N_1N_2}\sum_{\mu_1,\mu_2}\left\| \sqrt{\rho_{B}^{\tensor n}} \Gamma^{B, (\mu_2)}_0\sqrt{\rho_{B}^{\tensor n}} \right\|_1 \nonumber \\
& \leq \frac{1}{N_2}\sum_{\mu_2}\left\| \sum_{v^n}\lambdavB\rhohatvB - \sum_{v^n}\sqrt{\rho_{B}^{\tensor n}} \zetaCoeff \bar{B}^{(\mu_2)}_{v^n}\sqrt{\rho_{B}^{\tensor n}} \right\|_1  + \frac{1}{N_2}\sum_{\mu_2}\left\| \sum_{v^n}\sqrt{\rho_{B}^{\tensor n}} \zetaCoeff \left(\bar{B}^{(\mu_2)}_{v^n} - B^{(\mu_2)}_{v^n}\right)\sqrt{\rho_{B}^{\tensor n}} \right\|_1 \nonumber \\
& \leq \underbrace{\frac{1}{N_2}\sum_{\mu_2}\left\| \sum_{v^n}\lambdavB\rhohatvB - \cfrac{1}{(1+\eta)}\cfrac{p^n}{p^{k+l_2}}\sum_{v^n}\sum_{a_2,j}\lambdavB\rhohatvB\11_{\{\VCodeword=v^n\} } \right\|_1}_{\widetilde{S}_{21}} +\underbrace{ \frac{1}{N_2}\sum_{\mu_2}\sum_{v^n}\beta_{v^n}\zeta_{v^n}^{(\mu_2)}\left\| \rhohatvB -  \rhotildvB \right\|_1}_{\widetilde{S}_{22}} \nonumber \\
& \hspace{100pt} + \underbrace{\frac{1}{N_2}\sum_{\mu_2}\sum_{v^n}\left\| \sqrt{\rho_{B}^{\tensor n}} \zetaCoeff \left(\bar{B}^{(\mu_2)}_{v^n} - B^{(\mu_2)}_{v^n}\right)\sqrt{\rho_{B}^{\tensor n}} \right\|_1}_{\widetilde{S}_{23}}, \label{eq:S2simplification}
\end{align}
\end{widetext}
where the first inequality uses triangle inequality.
The next inequality follows by using Lemma \ref{lem:Separate} where we use the fact that  $\sum_{i>0}\Gamma^{A, ( \mu_1)}_i\leq I.$
Finally, the last two inequalities follows again from triangle inequality. 


\noindent Regarding the first term in \eqref{eq:S2simplification}, 
using Lemma \ref{lem:nf_SoftCovering} we claim that for all 
$\epsilon>0$, and $\eta,\delta \in (0,1)$ sufficiently small, and any $n$ sufficiently large, 
$\EE[\tilde{S}_{21}]<\epsilon$, if  $\frac{k+l_2}{n}\log{p} \geq I(V;RA)_{\sigma_{2}} - S(V)_{\sigma_{3}} + \log{p}$, where $ \sigma_{2},\sigma_3 $ are as defined in the statement of the theorem. 
As for the second term, we use the gentle measurement lemma (as in \eqref{eq:gentleMeasurementVn}) and bound its expected value as
\begin{align}
\EE[\tilde{S}_{22}] \!&= \EE\left[\frac{1}{N_2}\sum_{\mu_2}\sum_{v^n}\beta_{v^n}\zeta_{v^n}^{(\mu_2)}\left\| \rhohatvB -  \rhotildvB \right\|_1\right] \nonumber \\ 
& = \hspace{-16pt}\sum_{v^n \in \TDeltaN(V)}\hspace{-6pt}\frac{\lambdavB}{(1+\eta)}\!\!\left\| \rhohatvB -  \rhotildvB \right\|_1 \hspace{-3pt} +\hspace{-16pt}\sum_{v^n \notin \TDeltaN(V)}\hspace{-6pt}\frac{\lambdavB}{(1+\eta)}\!\!\left\| \rhohatvB \right\|_1 \nonumber \\
&\leq \epsilon_{\scriptscriptstyle \widetilde{S}_{21}},\nonumber
\end{align}
where the inequality is based on the repeated usage of the Average Gentle Measurement Lemma and $ \epsilon_{\scriptscriptstyle\widetilde{S}_{21}} \searrow 0$ as $\delta \searrow 0$
(see (35) in \cite{wilde_e} for more details).
Finally, consider the last term.
To simplify this term, we appeal to Lemma \ref{lem:LemAandABar} in Section \ref{sec:p2pProof}. This gives us
\begin{align}
    \tilde{S}_{23} \leq \frac{2\; {2^{3n\delta}}}{N_2}\hspace{-3pt} \sum_{\mu_2=1}^{N_2}\!\!
  \left(\!\!H^B_0 \!+\! \frac{\sqrt{(1-\varepsilon_B)}}{(1+\eta)}\sqrt{H_1^B+H_2^B+H_3^B}\right)   ,
 \end{align}
 where 
\begin{align}
	H_0^B &\deq \left|\Delta^{(\mu_2)}_B - \EE[\Delta^{(\mu_2)}_B]\right|, \nonumber \\
	H_1^B &\deq \Tr{(\PiB-\CutOffB )\sum_{v^n}\lambdavB \rhotildvB}, \nonumber \\
	 H_2^B &\deq \left\|\sum_{v^n}\lambdavB \rhotildvB - (1-\varepsilon_B)\sum_{v^n}\frac{\beta_{v^n}\zetaCoeff}{\EE[\Delta^{(\mu)}_B]} \rhotildvB\right\|_1, \nonumber \\
	H_3^B & \deq (1-\varepsilon_B)\left\|\sum_{v^n}\frac{\beta_{v^n}\zetaCoeff}{\Delta^{(\mu_2)}_B} \rhotildvB -  \sum_{v^n}\frac{\beta_{v^n}\zetaCoeff}{\EE[\Delta^{(\mu_2)}_B]} \rhotildvB\right\|_1,
	\end{align} 
	and $ \Delta^{(\mu)}_B \deq \sum_{v^n \in \TDeltan(V)}\beta_{v^n}\zetaCoeff $ and $\varepsilon_B = \sum_{v^n \notin \TDeltan(V)}\lambdavB$.

Further, using the simplification performed in \eqref{eq:Simplification_H1}, \eqref{eq:Simplification_H2}, and \eqref{eq:Simplification_H3}, and the concavity of the square-root function, we obtain,
\begin{widetext}
\begin{align}
    \EE[\tilde{S}_{23}] &\leq \frac{2}{N_2}2^{3n\delta_{\rho_B}}\sum_{\mu_2=1}^{N_2}\left(\EE[H_0^B] + {\frac{(1-\varepsilon_B)}{(1+\eta)}}\sqrt{\left(\frac{2^{2n\delta_{\rho_B}}}{\eta}+1\right)\EE[\widetilde{H}^B]} + \sqrt{\frac{(1-\varepsilon_B)}{(1+\eta)}}\sqrt{\EE[H_0^B]}\right),\nonumber\\
    \text{where }\; \widetilde{H}^B &\deq \left\|\frac{1}{(1-\varepsilon_B)}\sum_{v^n}\lambdavB \rhotildvB - \frac{p^n}{p^{k+l_2}} \sum_{v^n}\sum_{a_2,j>0} \frac{\lambdavB\rhotildvB}{(1-\varepsilon_B)} \11_{\{\VCodeword=v^n\}}\right\|_1.
\end{align}
\end{widetext}

Using Proposition \ref{prop:Lemma for p2p:S_13}, for any $\epsilon \in (0,1)$, any $\eta, \delta \in (0,1)$ sufficiently small, and any $n$ sufficiently large, we have
$\EE\left[\tilde{S}_{23}\right] \leq \epsilon$ if $\frac{k+l_2}{n}\log p > I(V;RA)_{\sigma_2} +\log{p} - S(V)_{\sigma_3}   $, where $\sigma_2$, $\sigma_3$ are the auxiliary state defined in the statement of the theorem.


\noindent{\bf Analysis of $\widetilde{S}_3 $:} Due to the symmetry in $\widetilde{S}_2 $ and $ \widetilde{S}_3 $, the analysis of $ \widetilde{S}_{3} $ follows very similar arguments as that of $\widetilde{S}_2 $ and hence we obtain the following, for any $\epsilon \in (0,1)$, any $\eta, \delta \in (0,1)$ sufficiently small, and any $n$ sufficiently large, we have $\EE\left[\tilde{S}_{3}\right] \leq \epsilon$ if $S_1 > I(U;RB)_{\sigma_1} +\log{p} - S(U)_{\sigma_3} $, where $\sigma_1$, $\sigma_3$ are the auxiliary state defined in the statement of the theorem.

\noindent{\bf Analysis of $ \widetilde{S}_4$:} We have 
\begin{align}
\widetilde{S}_{4}\! & \leq  \frac{1}{N_1N_2}\sum_{\mu_1,\mu_2}\sum_{z^n}P^n_{Z|U+V}(z^n|w^{n}_{0})\nonumber \\
& \hspace{1in}\left\| \sqrt{\rho_{AB}^{\tensor n}}\left (\Gamma^{A, ( \mu_1)}_0\tensor \Gamma^{B, (\mu_2)}_0\right )\sqrt{\rho_{AB}^{\tensor n}} \right\|_1  \nonumber \\
& \leq  \frac{1}{N_1N_2}\sum_{\mu_1,\mu_2}\left\| \sqrt{\rho_{AB}^{\tensor n}}\left (\Gamma^{A, ( \mu_1)}_0\tensor I\right )\sqrt{\rho_{AB}^{\tensor n}} \right\|_1 \nonumber \\
& \hspace{5pt} + \! \frac{1}{N_1N_2}\sum_{\mu_1,\mu_2}\sum_{v^n}\left\| \sqrt{\rho_{AB}^{\tensor n}}\left (\Gamma^{A, ( \mu_1)}_0\tensor B_{v^n}^{(\mu_{2})}\right )\sqrt{\rho_{AB}^{\tensor n}} \right\|_1\!\!, \label{eq:S4inequalities}
\end{align}
where the inequalities above are obtained by a straight forward substitution and use of triangle inequality.
Further, since $ 0 \leq \Gamma^{A, ( \mu_1)}_0 \leq I $  and  $ 0 \leq \Gamma^{B, (\mu_2)}_0 \leq I $, this simplifies the first term in \eqref{eq:S4inequalities} as 
\begin{align}
\frac{1}{N_1N_2}\sum_{\mu_1,\mu_2}&\left\| \sqrt{\rho_{AB}^{\tensor n}}\left (\Gamma^{A, ( \mu_1)}_0\tensor I\right )\sqrt{\rho_{AB}^{\tensor n}} \right\|_1 \nonumber \\
& = \frac{1}{N_1}\sum_{\mu_1}\left\| \sqrt{\rho_{A}^{\tensor n}}\left (\Gamma^{A, ( \mu_1)}_0\right )\sqrt{\rho_{A}^{\tensor n}} \right\|_1. \nonumber
\end{align}
Similarly, the second term in \eqref{eq:S4inequalities} simplifies using Lemma \ref{lem:Separate} as
\begin{align}
\frac{1}{N_1N_2}\sum_{\mu_1,\mu_2}\sum_{v^n}&\left\| \sqrt{\rho_{AB}^{\tensor n}}\left (\Gamma^{A, ( \mu_1)}_0\tensor B_{v^n}^{(\mu_{2})}\right )\sqrt{\rho_{AB}^{\tensor n}} \right\|_1 \nonumber \\
&\leq \frac{1}{N_1}\sum_{\mu_1}\left\| \sqrt{\rho_{A}^{\tensor n}}\left (\Gamma^{A, ( \mu_1)}_0\right )\sqrt{\rho_{A}^{\tensor n}} \right\|_1.\nonumber
\end{align}
Using these simplifications, we have
\begin{align}
\widetilde{S}_4 & \leq \frac{2}{N_1}\sum_{\mu_1}\left\| \sqrt{\rho_{A}^{\tensor n}}\left (\Gamma^{A, ( \mu_1)}_0\right )\sqrt{\rho_{A}^{\tensor n}} \right\|_1. \nonumber
\end{align}
The above expression is similar to the one obtained in the simplification of $\widetilde{S}_2 $ and hence we can bound $\widetilde{S}_4 $ using similar constraints as $ \widetilde{S}_2 $, for sufficiently large $ n $.


\subsection{Proof of Proposition \ref{prop:Lemma for p2p:J_1}}
\label{appx:proof of J_1}

We start by applying triangle inequality to obtain $J_1 \leq J_{11} + J_{12}$, where
\begin{widetext}
\begin{align}
J_{11} & \deq  \sum_{z^n,v^n}\left\|\sum_{u^n}\sqrt{\rho_{AB}^{\tensor n}}  \left (\bar{\Lambda}^A_{u^n}\tensor \bar{\Lambda}^B_{v^n} -	\frac{1}{N_1}\sum_{\mu_1 =1}^{N_1}\frac{\alpha_{u^n}\gammaCoeff}{\lambdauA}\bar{\Lambda}^A_{u^n} \tensor  \bar{\Lambda}^B_{v^n} \right )\sqrt{\rho_{AB}^{\tensor n}} P^n_{Z|W}(z^n|u^n+ v^n)\right\|_1, \nonumber \\
J_{12} &\deq \sum_{z^n,v^n}\left\|\frac{1}{N_1}\sum_{\mu_1=1}^{N_1}\sum_{u^n}\sqrt{\rho_{AB}^{\tensor n}}  \left ( \frac{\alpha_{u^n}\gammaCoeff}{\lambdauA}\bar{\Lambda}^A_{u^n}\tensor \bar{\Lambda}^B_{v^n} -	\gammaCoeff  \bar{A}_{u^n}^{(\mu_1)} \tensor  \bar{\Lambda}^B_{v^n}\right )\sqrt{\rho_{AB}^{\tensor n}} P^n_{Z|W}(z^n|u^n+ v^n)\right\|_1, \nonumber 
\end{align}
\end{widetext}
Now with the intention of employing Lemma \ref{lem:nf_SoftCovering}, we express $J_{11}$ as
\begin{align}
J_{11} & = \left\|\sum_{u^n,v^n,z^n}\lambdaAB\rhohatuAvB \tensor \phi_{u^n,v^n,z^n} \right . \nonumber \\ 
& \hspace{25pt}\left.- \frac{1}{(1+\eta)}\frac{p^n}{p^{k+l_1}N_1}\sum_{\mu_1}\sum_{u^n,v^n,z^n}\sum_{a_1,i> 0}\lambda_{u^n}^A\right. \nonumber \\ 
& \hspace{20pt}\times\left.\mathbbm{1}_{\{\UCodeword = u^n\}}\frac{\lambdaAB}{\lambdauA}\rhohatuAvB \tensor \phi_{u^n,v^n,z^n}\right\|_1\!\!\!, \nonumber 
\end{align}
where the equality above is obtained by defining $ \phi_{u^n,v^n,z^n} = P^n_{Z|W}(z^n|u^n+ v^n)\ketbra{v^n}\tensor\ketbra{z^n} $ and using the definitions of $ \alpha_{u^n}, \gamma_{u^n}^{(\mu_1)} $ and $ \rhohatuAvB $, followed by using the triangle inequality for the block diagonal operators. Note that the triangle inequality in this case becomes an equality.

\noindent Let us define $ \mathcal{T}_{u^n} $ as
\begin{align}
\mathcal{T}_{u^n} \deq \sum_{v^n,z^n}\frac{\lambdaAB}{\lambdauA}\rhohatuAvB \tensor \phi_{u^n,v^n,z^n}. \nonumber
\end{align} 
Note that in the above definition of $ \mathcal{T}_{u^n} $ we have $ \mathcal{T}_{u^n} \geq 0$ and $\Tr{\mathcal{T}_{u^n}} =1 $ for all $u^n \in \FF_p^n$. Further, it contains all the elements in product form, and thus can be written as $ \mathcal{T}_{u^n} = \bigtensor_{i=1}^{n}\mathcal{T}_{u_i}. $
This simplifies $ J_{11} $ as
\begin{align}
J_{11} &= \bigg\|\sum_{u^n}\lambdauA\mathcal{T}_{u^n} - \frac{1}{(1+\eta)}\frac{p^n}{p^{k+l_1}}\frac{1}{N_1}\sum_{\mu_1} \nonumber \\ &  \hspace{35pt}\sum_{u^n}\sum_{a_1,i>0} \lambda^A_{u^n} \mathcal{T}_{u^n}\11_{\{\UCodeword=u^n\}} \bigg\|_1. \nonumber
\end{align}
Using Lemma \ref{lem:nf_SoftCovering}, we claim the following: for any $\epsilon \in (0,1)$, any $\eta, \delta \in (0,1)$ sufficiently small, and any $n$ sufficiently large, we have 
$\EE[J_{11}] \leq \epsilon$, if 
$\frac{k+l_1}{n}\log{p} + \frac{1}{n}\log{N_1} > I(U;RZV)_{\sigma_{3}} - S(U)_{\sigma_{3}} + \log{p}$,  where $\sigma_3$ is the auxiliary state defined in the statement of the theorem.

Now we consider the term corresponding to $ J_{12} $ and prove that its expectation with respect to the Alice's codebook is small. Recalling $ J_{12} $, we get
\begin{widetext}
\begin{align}
J_{12}  & \leq \frac{1}{N_1}\sum_{\mu_1=1}^{N_1}\sum_{u^n,v^n}\sum_{z^n} P^n_{Z|W}(z^n|u^n+ v^n) \left\|\sqrt{\rho_{AB}^{\tensor n}}  \left ( \frac{\alpha_{u^n}\gamma_{u^n}^{(\mu_1)}}{\lambdauA}\bar{\Lambda}^A_{u^n}\tensor \bar{\Lambda}^B_{v^n} -	 \gammaCoeff \bar{A}_{u^n}^{(\mu_1)} \tensor  \bar{\Lambda}^B_{v^n}\right )\sqrt{\rho_{AB}^{\tensor n}} \right\|_1, \nonumber \\
& = \frac{1}{N_1}\sum_{\mu_1=1}^{N_1}\sum_{u^n,v^n}\alpha_{u^n}\gammaCoeff \left\|\sqrt{\rho_{AB}^{\tensor n}}  \left ( \left ( \frac{1}{\lambdauA}\bar{\Lambda}^A_{u^n}-\sqrt{\rho_{A}^{\tensor n}}^{-1}\rhotilduA\sqrt{\rho_{A}^{\tensor n}}^{-1}\right )\tensor \bar{\Lambda}^B_{v^n} \right )\sqrt{\rho_{AB}^{\tensor n}} \right\|_1, \nonumber
\end{align}
\end{widetext}
where the inequality is obtained by using triangle and the next equality follows from the fact that $ \sum_{z^n} P^n_{Z|W}(z^n|u^n+ v^n) =1  $ for all $ u^n \in \mathcal{U}^n $ and $ v^n \in \mathcal{V}^n $ and using the definition of $ A_{u^n}^{(\mu_1)} $.
By applying expectation of $ J_{12}  $ over  the Alice's codebook, we get
\begin{align}
 \EE{\left[J_{12} \right]} & \leq \frac{1}{(1+\eta)}\sum_{\substack{u^n }}\lambdauA\sum_{v^n} \left\|\sqrt{\rho_{AB}^{\tensor n}}  \left ( \left ( \frac{1}{\lambdauA}\bar{\Lambda}^A_{u^n}- \right.\right.\right. \nonumber \\ 
 & \hspace{20pt} \left.\left.\left.\sqrt{\rho_{A}^{\tensor n}}^{-1}\rhotilduA\sqrt{\rho_{A}^{\tensor n}}^{-1}\right )\tensor \bar{\Lambda}^B_{v^n} \right )\sqrt{\rho_{AB}^{\tensor n}} \right\|_1, \nonumber
\end{align}
where we have used the fact that $ \EE{[\alpha_{u^n} \gamma_{u^n}^{(\mu_1)}]} = \frac{\lambdauA}{(1+\eta)} $. 
To simplify the above equation, we employ Lemma \ref{lem:Separate} which completely discards the effect of Bob's measurement. 
 Since $ \sum_{v^n}\bar{\Lambda}^B_{v^n} = I$, from Lemma \ref{lem:Separate} we have for every $ u^n$,
\begin{align}
	& \sum_{v^n}\left\|\sqrt{\rho_{AB}^{\tensor n}}  \left ( \left ( \frac{1}{\lambdauA}\bar{\Lambda}^A_{u^n}- \right.\right.\right.\nonumber \\
	& \hspace{20pt} \left.\left.\left. \sqrt{\rho_{A}^{\tensor n}}^{-1}\rhotilduA\sqrt{\rho_{A}^{\tensor n}}^{-1}\right )\tensor \bar{\Lambda}^B_{v^n} \right )\sqrt{\rho_{AB}^{\tensor n}} \right\|_1 \nonumber \\ 
	& \hspace{4pt} = \left\|\sqrt{\rho_{A}^{\tensor n}}  \left ( \frac{1}{\lambdauA}\bar{\Lambda}^A_{u^n}-\sqrt{\rho_{A}^{\tensor n}}^{-1}\rhotilduA\sqrt{\rho_{A}^{\tensor n}}^{-1} \right )\sqrt{\rho_{A}^{\tensor n}} \right\|_1.\nonumber
\end{align}
This simplifies $ \EE{\left[J_{12} \right]} $ as 
\begin{align}
\EE{\left[J_{12} \right]}  & \leq  \frac{1}{(1+\eta)}\sum_{\substack{u^n  }}\lambdauA\left\|\sqrt{\rho_{A}^{\tensor n}}  \left ( \frac{1}{\lambdauA}\bar{\Lambda}^A_{u^n} - \right. \right. \nonumber\\
&\hspace{1in}\left.\left.\sqrt{\rho_{A}^{\tensor n}}^{-1}\rhotilduA\sqrt{\rho_{A}^{\tensor n}}^{-1} \right )\sqrt{\rho_{A}^{\tensor n}} \right\|_1 \nonumber \\
& \leq  \frac{1}{(1+\eta)}\!\!\!\sum_{\substack{u^n \notin \TDeltaN(U) }}\!\!\!\!\lambdauA\left\|\rhohatuA\right\|_1 + \nonumber \\
& \hspace{0.5in} \frac{1}{(1+\eta)}\!\sum_{\substack{u^n \in \TDeltaN(U) }}\!\!\lambdauA\left\| \left (\rhohatuA -\rhotilduA \right ) \right\|_1 \nonumber \\
& \leq \varepsilon_{A}+ \epsilon_{\scriptscriptstyle J_{12}}' 
\end{align}
where the last inequality is obtained by repeated usage of the Average Gentle Measurement Lemma 
and $ \epsilon_{ J_{12}}' \searrow 0$ as $\delta \searrow 0$ (see (35) in \cite{wilde_e} for details). This completes the proof.

\subsection{Proof of Proposition \ref{prop:Lemma for J_2}}\label{appx:proof of J_2}
Noting the similarity between $J_2$ and the term $\tilde{S}_2$ defined in the proof of Theorem \ref{thm:p2pTheorem} (see Section \ref{sec:p2pProof}), we begin by further simplifying $J_2$ using Lemma \ref{lem:LemAandABar}. This gives us
\begin{align}
    J_2 \leq \frac{2{2^{3n\delta_{\rho_A}}}}{N_1} \!\! \sum_{\mu_1=1}^{N_1}\!\!
 \!\left(\!H_0^A + \frac{\sqrt{(1-\varepsilon_A)}}{(1+\eta)}\sqrt{H_1^A+H_2^A+H_3^A}\right)  \!\!,
 \end{align}
 where 
\begin{align}
	H_0^A &\deq \left|\Delta^{(\mu_1)}_A - \EE[\Delta^{(\mu_1)}_A]\right|, \nonumber \\
	H_1^A &\deq \Tr{(\PiA-\CutOffA )\sum_{w^n}\lambdauA \rhotilduA}, \nonumber \\
	 H_2^A &\deq \|\sum_{u^n}\lambdauA \rhotilduA - (1-\varepsilon_A) \sum_{u^n}\frac{\alpha_{u^n}\gammaCoeff}{\EE[\Delta^{(\mu_1)}_A]} \rhotilduA\|_1,\nonumber \\
	H_3^A & \deq (1-\varepsilon_A)\|\sum_{u^n}\frac{\alpha_{u^n}\gammaCoeff}{\Delta^{(\mu_1)}_A} \rhotilduA -  \sum_{u^n}\frac{\alpha_{u^n}\gammaCoeff}{\EE[\Delta^{(\mu_1)}_A]} \rhotilduA\|_1, \nonumber
\end{align}
	and $ \Delta^{(\mu_1)}_A \deq \sum_{u^n\in\TDeltan(U)}\alpha_{u^n}\gammaCoeff, $ $\varepsilon_A \deq \sum_{u^n \notin \TDeltan(U)}\lambdauA$, and $\zeroWithDelta{\delta_{\rho_A}}$.
Further, using the simplification performed in \eqref{eq:Simplification_H1}, \eqref{eq:Simplification_H2}, and \eqref{eq:Simplification_H3}, and the concavity of the square-root function, we obtain,
\begin{align}
    \EE[J_2] &\leq \frac{2}{N_1}2^{3n\delta_{\rho_A}}\sum_{\mu_1=1}^{N_1}\Bigg(\EE[H_0^A] + \frac{(1-\varepsilon_A)}{(1+\eta)}\nonumber \\ &\times\sqrt{\left(\frac{2^{2n\delta_{\rho_A}}}{\eta}+1\right)\EE[\widetilde{H}^A]} + \sqrt{\frac{(1-\varepsilon_A)}{(1+\eta)}}\sqrt{\EE[H_0^A]}\Bigg),\nonumber
\end{align}
where
\begin{align}
    \widetilde{H}^A &\deq \bigg\|\sum_{u^n}\frac{\lambdauA}{(1-\varepsilon_A)} \rhotilduA - \nonumber \\& \hspace{10pt}\frac{p^n}{2^{nS_1}} \sum_{u^n}\sum_{a_1,i>0} \frac{\lambdauA}{(1-\varepsilon_A)}\rhotilduA \11_{\{\UCodeword=u^n\}}\bigg\|_1.\nonumber
\end{align}
The proof from here follows from Proposition \ref{prop:Lemma for p2p:S_13}.
\subsection{Proof of Proposition \ref{prop:Lemma for Q2}}\label{appx:proof of Q2} 
We start by adding and subtracting the following terms within $ Q_2 $ 
\begin{align}
(i) &\sum_{u^n,v^n}\sqrt{\rho_{AB}^{\tensor n}}  \left(\bar{\Lambda}^A_{u^n}\tensor\bar{\Lambda}^B_{v^n} \right )\sqrt{\rho_{AB}^{\tensor n}} P^n_{Z|W}(z^n|u^n+ v^n), \nonumber \\
(ii)&\sum_{u^n,v^n}\frac{1}{N_2}\sum_{\mu_2=1}^{N_2}\sqrt{\rho_{AB}^{\tensor n}}  \left( \bar{\Lambda}^A_{u^n} \tensor \cfrac{\beta_{v^n}\zeta^{(\mu_2)}_{v^n}}{\lambdavB}\bar{\Lambda}^B_{v^n} \right )\nonumber \\ 
& \hspace{1.35in}\times\sqrt{\rho_{AB}^{\tensor n}} P^n_{Z|W}(z^n|u^n+ v^n), \nonumber \\
(iii)&\sum_{u^n,v^n}  \frac{1}{N_1N_2} \sum_{\mu_1,\mu_2}\sqrt{\rho_{AB}^{\tensor n}}\left( \!\!\gammaCoeff A_{u^n}^{(\mu_1)} \tensor \cfrac{\beta_{v^n}\zeta^{(\mu_2)}_{v^n}}{\lambdavB}\bar{\Lambda}^B_{v^n}\!\! \right )\nonumber \\ 
& \hspace{1.35in}\times\sqrt{\rho_{AB}^{\tensor n}} P^n_{Z|W}(z^n|u^n+ v^n),  \nonumber \\
(iv)&\sum_{u^n,v^n}  \frac{1}{N_1N_2} \sum_{\mu_1,\mu_2}\sqrt{\rho_{AB}^{\tensor n}}\left(  \gammaCoeff A_{u^n}^{(\mu_1)} \tensor {\zeta^{(\mu_2)}_{v^n}} \bar{B}_{v^n}^{(\mu_2)} \right )\nonumber \\ 
& \hspace{1.35in}\times\sqrt{\rho_{AB}^{\tensor n}} P^n_{Z|W}(z^n|u^n+ v^n). \nonumber
\end{align}
This gives us $ Q_2 \leq Q_{21} + Q_{22} + Q_{23} + Q_{24} + Q_{25} $, where
\begin{widetext}
\begin{align}
Q_{21} & \deq \sum_{z^n} \left\|\sum_{u^n,v^n}\sqrt{\rho_{AB}^{\tensor n}}  \left ( \left (\frac{1}{N_1} \sum_{\mu_1=1}^{N_1}\gammaCoeff A_{u^n}^{(\mu_1)}\right ) \tensor \bar{\Lambda}^B_{v^n} -\bar{\Lambda}^A_{u^n}\tensor\bar{\Lambda}^B_{v^n} \right )\sqrt{\rho_{AB}^{\tensor n}} P^n_{Z|W}(z^n|u^n+ v^n)\right\|_1, \nonumber\\
Q_{22} & \deq \sum_{z^n} \left\| \sum_{u^n,v^n}\sqrt{\rho_{AB}^{\tensor n}}  \left (\bar{\Lambda}^A_{u^n}\tensor\bar{\Lambda}^B_{v^n}  - \bar{\Lambda}^A_{u^n} \tensor\left (\frac{1}{N_2}\sum_{\mu_2=1}^{N_2} \cfrac{\beta_{v^n}\zeta^{(\mu_2)}_{v^n}}{\lambdavB}\bar{\Lambda}^B_{v^n}\right )\right )\sqrt{\rho_{AB}^{\tensor n}} P^n_{Z|W}(z^n|u^n+ v^n)\right\|_1, \nonumber	\\
Q_{23} & \deq \sum_{z^n} \left\| \sum_{u^n,v^n}\!\sqrt{\rho_{AB}^{\tensor n}}  \left( \left(\bar{\Lambda}^A_{u^n}- \frac{1}{N_1} \sum_{\mu_1}\gammaCoeff A_{u^n}^{(\mu_1)} \right) \tensor\left (\frac{1}{N_2}\!\sum_{\mu_2} \cfrac{\beta_{v^n}\zeta^{(\mu_2)}_{v^n}}{\lambdavB}\bar{\Lambda}^B_{v^n}\right ) \right)\sqrt{\rho_{AB}^{\tensor n}} P^n_{Z|W}(z^n|u^n+ v^n)\right\|_1, \nonumber\\
Q_{24} & \deq \sum_{z^n} \left\| \sum_{u^n,v^n}\frac{1}{N_1N_2} \sum_{\mu_1,\mu_2}\sqrt{\rho_{AB}^{\tensor n}}  \left( \gammaCoeff A_{u^n}^{(\mu_1)} \tensor \left(\cfrac{\beta_{v^n}\zeta^{(\mu_2)}_{v^n}}{\lambdavB}\bar{\Lambda}^B_{v^n} - \zetaCoeff\bar{B}_{v^n}^{(\mu_2)}\right) \right)\sqrt{\rho_{AB}^{\tensor n}} P^n_{Z|W}(z^n|u^n+ v^n)\right\|_1, \nonumber \\ 
Q_{25} & \deq \sum_{z^n} \left\| \sum_{u^n,v^n}\frac{1}{N_1N_2} \sum_{\mu_1,\mu_2}\sqrt{\rho_{AB}^{\tensor n}}  \left( \gammaCoeff A_{u^n}^{(\mu_1)} \tensor  \left(\zetaCoeff\bar{B}_{v^n}^{(\mu_2)} -  \zetaCoeff B_{v^n}^{(\mu_2)}\right) \right)\sqrt{\rho_{AB}^{\tensor n}} P^n_{Z|W}(z^n|u^n+ v^n)\right\|_1. \nonumber
\end{align}
\end{widetext}
We start by analyzing $ Q_{21} $. Note that $ Q_{21} $ is exactly same as $ Q_{1} $ and hence using the same rate constraints as $ Q_{1} $, this term can be bounded. Next, consider $ Q_{22} $. Substitution of $ \zeta^{(\mu_2)}_{v^n} $ gives
\begin{align}
Q_{22} & = \bigg\| \sum_{u^n,v^n,z^n}\lambdaAB\rhohatuAvB \tensor \psi_{u^n,v^n,z^n}  \nonumber \\
& \hspace{20pt} - \frac{1}{N_2}\sum_{\mu_2}\! \sum_{u^n,v^n,z^n}\!\!\!\!\!\beta_{v^n}\!\!\!\sum_{a_2,j>0}\!\!\! \mathbbm{1}_{\{\VCodeword = v^n\}}\nonumber \\
& \hspace{1.15in}\times \frac{\lambdaAB}{\lambdavB} \rhohatuAvB \tensor\psi_{u^n,v^n,z^n}\bigg\|_1, \nonumber
\end{align}
where $ \psi_{u^n,v^n,z^n} $ is defined as $ \psi_{u^n,v^n,z^n} =  P^n_{Z|W}(z^n|u^n+ v^n)\ketbra{z^n}, $ and the equality uses the triangle inequality for block operators.
Now we use Lemma \ref{lem:nf_SoftCovering} to bound $Q_{22}$. Let
\begin{align}
 \mathcal{T}_{v^n} \deq \sum_{u^n,z^n}\cfrac{\lambdaAB}{\lambdavB} \rhohatuAvB \tensor \psi_{u^n,v^n,z^n}. \nonumber
\end{align}
Note that $ \mathcal{T}_{v^n} $ can be written in tensor product form as
$ \mathcal{T}_{v^n} = \bigtensor_{i=1}^{n}\mathcal{T}_{v_{i}} $. This simplifies $ Q_{22} $ as
\begin{align}
Q_{22} & =  \bigg\| \sum_{v^n}\lambdavB\mathcal{T}_{v^n}  - \cfrac{1}{(1+\eta)}\cfrac{p^n}{2^{nS_2}N_2}\sum_{\mu_2}\sum_{v^n} \nonumber \\
& \hspace{1in}\sum_{a_2,j>0}\lambda_{v^n}^B \mathcal{T}_{v^n}\11_{\{\VCodeword = v^n\}}\bigg\|_1.\nonumber
\end{align}
Application of Lemma \ref{lem:nf_SoftCovering} gives the following: for any $\epsilon \in (0,1)$, any $\eta, \delta \in (0,1)$ sufficiently small, and any $n$ sufficiently large, we have $\EE[Q_{22}] \leq \epsilon$ if
\begin{align}
\frac{k+l_2}{n}\log{p}+\frac{1}{n}\log{N_2} > I(V;RZ)_{\sigma_3} - S(V)_{\sigma_{3}} + \log{p}.\nonumber
\end{align}

Now, we move on to consider $ Q_{23} $. Taking expectation with respect $G, h_1^{(\mu_1)}, h_2^{(\mu_2)}$  gives
%
\begin{widetext}
\begin{align}
\EE\left[Q_{23}\right ] &\leq \EE\left[\sum_{z^n, v^n}\frac{1}{N_2}\sum_{\mu_2=1}^{N_2} \cfrac{\beta_{v^n}\zeta^{(\mu_2)}_{v^n}}{\lambdavB} \left\| \sum_{u^n}\sqrt{\rho_{AB}^{\tensor n}} \left (\bar{\Lambda}^A_{u^n} \tensor\bar{\Lambda}^B_{v^n}\right)\sqrt{\rho_{AB}^{\tensor n}}P^n_{Z|W}(z^n|u^n+ v^n)   \right .\right . \nonumber \\ 
& \hspace{130pt} \left.\left . - \sum_{u^n}\sqrt{\rho_{AB}^{\tensor n}} \left(\frac{1}{N_1} \sum_{\mu_1}\gammaCoeff A_{u^n}^{(\mu_1)} \tensor \bar{\Lambda}^B_{v^n} \right)\sqrt{\rho_{AB}^{\tensor n}} P^n_{Z|W}(z^n|u^n+ v^n)\right\|_1\right] \nonumber\\
&= \EE_{G,h_1}\left[\sum_{z^n, v^n}\frac{1}{N_2}\sum_{\mu_2=1}^{N_2} \cfrac{\EE_{h_2|G}\left[\beta_{v^n}\zeta^{(\mu_2)}_{v^n}|G\right ]}{\lambdavB} \left\| \sum_{u^n}\sqrt{\rho_{AB}^{\tensor n}} \left (\bar{\Lambda}^A_{u^n} \tensor\bar{\Lambda}^B_{v^n}\right)\sqrt{\rho_{AB}^{\tensor n}}P^n_{Z|W}(z^n|u^n+ v^n)   \right . \right .\nonumber \\ 
& \hspace{130pt} \left .\left. - \sum_{u^n}\sqrt{\rho_{AB}^{\tensor n}} \left(\frac{1}{N_1} \sum_{\mu_1}\gammaCoeff A_{u^n}^{(\mu_1)} \tensor \bar{\Lambda}^B_{v^n} \right)\sqrt{\rho_{AB}^{\tensor n}} P^n_{Z|W}(z^n|u^n+ v^n)\right\|_1\right] \nonumber\\
&= \EE_{G,h_1}\left[\sum_{z^n, v^n}\cfrac{1}{(1+\eta)} \left\| \sum_{u^n}\sqrt{\rho_{AB}^{\tensor n}} \left (\bar{\Lambda}^A_{u^n} \tensor\bar{\Lambda}^B_{v^n}\right)\sqrt{\rho_{AB}^{\tensor n}}P^n_{Z|W}(z^n|u^n+ v^n)   \right . \right .\nonumber \\ 
& \hspace{130pt} \left .\left. - \sum_{u^n}\sqrt{\rho_{AB}^{\tensor n}} \left(\frac{1}{N_1} \sum_{\mu_1}\gammaCoeff A_{u^n}^{(\mu_1)} \tensor \bar{\Lambda}^B_{v^n} \right)\sqrt{\rho_{AB}^{\tensor n}} P^n_{Z|W}(z^n|u^n+ v^n)\right\|_1\right] \nonumber\\
&= \EE\left[\cfrac{J}{(1+\eta)}\right], \nonumber
\end{align}
\end{widetext}
where the inequality above is obtained by using the triangle inequality, and the first equality follows from $h_1^{(\mu_1)}$ and $h_2^{(\mu_2)}$ being generated independently. 
The last equality follows from the definition of $ J $ as in \eqref{def:J}. Hence, we use the result obtained in bounding $ \EE[J]. $
Next, we consider $ Q_{24} $.
\begin{widetext}
\begin{align}
Q_{24} & \leq \sum_{u^n,v^n}\sum_{z^n} P^n_{Z|W}(z^n|u^n+ v^n)\left\| \frac{1}{N_1N_2} \sum_{\mu_1,\mu_2}\sqrt{\rho_{AB}^{\tensor n}}  \left(\gammaCoeff  A_{u^n}^{(\mu_1)} \tensor \cfrac{\beta_{v^n} \zeta^{(\mu_2)}_{v^n}}{\lambdavB}\bar{\Lambda}^B_{v^n}\right )\sqrt{\rho_{AB}^{\tensor n}} \right .\nonumber \\
& \hspace{70pt} - \left .\frac{1}{N_1N_2} \sum_{\mu_1,\mu_2}\sqrt{\rho_{AB}^{\tensor n}}\left( \gammaCoeff A_{u^n}^{(\mu_1)} \tensor \beta_{v^n} \zeta^{(\mu_2)}_{v^n} \left (\sqrt{\rho_{B}}^{-1}\rhotildvB\sqrt{\rho_{B}}^{-1}\right ) \right)\sqrt{\rho_{AB}^{\tensor n}} \right\|_1 \nonumber \\
& \leq \frac{1}{N_2} \sum_{\mu_2}\sum_{u^n,v^n}\beta_{v^n} \zeta^{(\mu_2)}_{v^n}\left\| \sqrt{\rho_{AB}^{\tensor n}}  \left( \frac{1}{N_1} \sum_{\mu_1}\gammaCoeff A_{u^n}^{(\mu_1)} \tensor \cfrac{1}{\lambdavB}\bar{\Lambda}^B_{v^n}\right )\sqrt{\rho_{AB}^{\tensor n}} \right .\nonumber \\
& \hspace{70pt} - \left .\sqrt{\rho_{AB}^{\tensor n}}\left( \frac{1}{N_1} \sum_{\mu_1}\gammaCoeff A_{u^n}^{(\mu_1)} \tensor  \left (\sqrt{\rho_{B}}^{-1}\rhotildvB\sqrt{\rho_{B}}^{-1}\right ) \right)\sqrt{\rho_{AB}^{\tensor n}} \right\|_1, \nonumber
\end{align}
\end{widetext}
where the inequalities follow from the definition of $ \bar{B}_{v^n}^{(\mu_2)} $  and using multiple triangle inequalities.
Taking expectation of $ Q_{24} $ with respect to $h_2^{(\mu_2)}$, we get
\begin{widetext}
\begin{align}
\EE\left[Q_{24}\right]& 
\leq \EE_{G,h_1}\left[\sum_{\substack{u^n,v^n}}\cfrac{\lambdavB}{(1+\eta)}\Bigg\| \sqrt{\rho_{AB}^{\tensor n}}  \left( \frac{1}{N_1} \sum_{\mu_1}\gammaCoeff A_{u^n}^{(\mu_1)} \tensor \left(\cfrac{1}{\lambdavB}\bar{\Lambda}^B_{v^n} - \sqrt{\rho_{B}}^{-1}\rhotildvB\sqrt{\rho_{B}}^{-1} \right)\right)\sqrt{\rho_{AB}^{\tensor n}}\right] \nonumber \\
& \leq  \EE_{G,h_1}\left[\sum_{v^n}\cfrac{\lambdavB}{(1+\eta)}  \left \| \sqrt{\rho_{B}^{\tensor n}}  \left (\cfrac{1}{\lambdavB}\bar{\Lambda}^B_{v^n} - \sqrt{\rho_{B}}^{-1}\rhotildvB\sqrt{\rho_{B}}^{-1} \right)\sqrt{\rho_{B}^{\tensor n}} \right \|_1\right] \nonumber \\
& = \sum_{v^n\notin \TDeltaN(V)}\cfrac{\lambdavB}{(1+\eta)}  \left \|\rhohatvB \right \|_1 + \sum_{v^n\in \TDeltaN(V)}\cfrac{\lambdavB}{(1+\eta)}  \left \|\rhohatvB - \rhotildvB \right \|_1\leq \varepsilon_B + \epsilon_{Q_{24}}', \label{eq:gentleMeasurementVn}
\end{align}
\end{widetext}
where the second inequality above follows by using Lemma \ref{lem:Separate} and the fact that $\frac{1}{N_1} \sum_{\mu_1}\sum_{ u^n}\gammaCoeff A_{u^n}^{(\mu_1)} \leq I,$ 
and the last inequality follows by applying the Average Gentle Measurement Lemma repeated
and $  \epsilon_{ Q_{24}}' \searrow 0 $ as $\delta \searrow 0$ 
(see (35) in \cite{wilde_e} for more details). This completes the proof for the term $ Q_{24} $.
Finally, we move onto considering $ Q_{25} $. Simplifying $ Q_{25} $ gives
\begin{align}
Q_{25} &\leq \frac{1}{N_1N_2} \!\!\sum_{\mu_1,\mu_2}\!\sum_{z^n}\!\sum_{u^n,v^n}\!\!\! P^n_{Z|W}(z^n|u^n+ v^n)   \bigg\| \sqrt{\rho_{AB}^{\tensor n}} \nonumber \\
& \hspace{15pt}  \left( \gammaCoeff A_{u^n}^{(\mu_1)} \tensor \left(\zetaCoeff\bar{B}_{v^n}^{(\mu_2)} -  \zetaCoeff B_{v^n}^{(\mu_2)}\right) \right)\sqrt{\rho_{AB}^{\tensor n}}\bigg\|_1 \nonumber \\
&\leq  \frac{1}{N_2}\sum_{\mu_2}  \sum_{v^n}\left\| \sqrt{\rho_{B}^{\tensor n}}   \!\left(\!\zetaCoeff\bar{B}_{v^n}^{(\mu_2)}\! -\!  \zetaCoeff B_{v^n}^{(\mu_2)}\!\right)\! \sqrt{\rho_{B}^{\tensor n}}\right\|_1 \nonumber \\ & = \tilde{S}_{23}, \nonumber 
\end{align}
where the first inequality uses traingle inequality and the second inequality uses Lemma \ref{lem:Separate} to remove the affect of approximating Alice's POVM on Bob's approximation, and $\tilde{S}_{23}$ is defined in \eqref{eq:S2simplification} in the proof of Proposition \ref{prop:Lemma for Tilde_S}. Therefore, we have the following: for any $\epsilon \in (0,1)$, any $\eta, \delta \in (0,1)$ sufficiently small, and any $n$ sufficiently large, we have $\EE[Q_{25}] \leq \epsilon,$ if $ S_2 \geq I(V;RA)_{\sigma_{2}} - S(V)_{\sigma_{3}} + \log{p}$.
This completes the proof for $ Q_{25} $ and hence for all the terms corresponding to $ Q_{2} $.

\bibliography{references}

\end{document}